\pgfplotsset{compat=1.18}
\numberwithin{equation}{section}
\theoremstyle{plain}
\newtheorem{thm}{Theorem}
\numberwithin{thm}{section}
\newtheorem{lem}{Lemma}
\numberwithin{lem}{section}
\newtheorem{prop}{Proposition}
\numberwithin{prop}{section}
\newtheorem{cor}{Corollary}
\numberwithin{cor}{section}
\newtheorem{claim}{Claim}
\numberwithin{claim}{section}
\theoremstyle{definition}
\newtheorem{ass}{Assumption}
\newtheorem{defn}{Definition}
\numberwithin{defn}{section}
\newtheorem{rem}{Remark}
\numberwithin{rem}{section}
\newtheorem{example}{Example}
\newenvironment{ass*}
 {\expandafter\def\expandafter\theass\expandafter{\theass*}\ass}
 {\endass}
\newenvironment{prop*}
 {\expandafter\def\expandafter\theprop\expandafter{\theprop*}\prop}
 {\endprop}
\Crefname{ass}{Assumption}{Assumptions}
\Crefname{prop}{Proposition}{Propositions}
\Crefname{section}{Section}{Sections}
\Crefname{appendix}{Appendix}{Appendices}
\Crefname{cor}{Corollary}{Corollaries}
\Crefname{lem}{Lemma}{Lemmas}
\Crefname{rem}{Remark}{Remarks}
\Crefname{example}{Example}{Examples}
\Crefname{thm}{Theorem}{Theorems}
\newcommand{\R}{\mathbb{R}}
\DeclareMathOperator*{\argmin}{arg min}
\DeclareMathOperator{\tr}{tr}
\DeclareMathOperator{\blkdiag}{blkdiag}
\DeclareMathOperator{\sign}{sign}
\DeclareMathOperator{\supp}{supp}
\DeclareMathOperator{\range}{range}
\DeclareMathOperator{\diag}{diag}
\newcommand{\rank}{\operatorname{rk}} 
\begin{document}

\begin{frontmatter}
\title{Joint Learning of Panel VAR models with Low Rank and Sparse Structure}
\runtitle{Low-Rank and Sparse Panel VAR Models}

\begin{aug}
\author[A]{\fnms{Yuchen}~\snm{Xu}\ead[label=e1]{yuchenxu95@g.ucla.edu}}
\and
\author[A]{\fnms{George}~\snm{Michailidis}\ead[label=e2]{gmichail@g.ucla.edu}}
\address[A]{University of California, Los Angeles\printead[presep={,\ }]{e1,e2}}
\end{aug}

\begin{abstract}
Panel vector auto-regressive (VAR) models are widely used to capture the dynamics of multivariate time series across different subpopulations, where each subpopulation shares a common set of variables. In this work, we propose a panel VAR model with a shared low-rank structure, modulated by subpopulation-specific weights, and complemented by idiosyncratic sparse components. To ensure parameter identifiability, we impose structural constraints that lead to a nonsmooth, nonconvex optimization problem. We develop a multi-block Alternating Direction Method of Multipliers (ADMM) algorithm for parameter estimation and establish its convergence under mild regularity conditions. Furthermore, we derive consistency guarantees for the proposed estimators under high-dimensional scaling. The effectiveness of the proposed modeling framework and estimators is demonstrated through experiments on both synthetic data and a real-world neuroscience data set.
\end{abstract}


\begin{keyword}
\kwd{ADMM}
\kwd{joint model}
\kwd{low rank matrix}
\kwd{panel time series}
\kwd{sparse matrix}
\kwd{vector auto-regression}
\end{keyword}

\end{frontmatter}
\tableofcontents


\section{Introduction}
\label{sec:intro}

Vector autoregressions represent a popular modeling framework for capturing dynamic relationships between multivariate time series and have been extensively used in macroeconomics \citep{kilian2017structural}, functional genomics \citep{michailidis2013autoregressive} and neuroscience/neuroimaging \citep{seth2015granger,aslan2024granger} applications. Their use in applications wherein the number of time series under consideration is large relative to the number of time points available, led to the introduction of regularized estimators for the VAR model parameters --e.g., assuming sparsity \citep{basu2015,melnyk2016estimating,kock2015oracle,kastner2020,medeiros2016ℓ1}, group sparsity \citep{billio2019bayesian,ghosh2019high}, or low rankness \citep{basu2019,wang2024}- and the investigation of their consistency properties under high dimensional scaling, including statistical inference \citep{krampe2023structural}. 

In many of these application domains, it has become increasingly common to consider multivariate time series data from a \textit{collection of entities}. For example, in macroeconomic applications, the entities may represent different countries, while in neuroimaging studies, they may correspond to individual subjects. The resulting data consist of $p$ time series (variables), observed over $T$ time periods, across $M$ entities. A key challenge is to develop modeling strategies that capture both \textit{common effects} shared across entities, as well as \textit{entity-specific} heterogeneities, while simultaneously controlling the total number of parameters to ensure computational and statistical efficiency. 

To effectively capture both shared structures and entity-specific variations in panel VAR models, several modeling strategies have been proposed in the literature. One approach involves imposing constraints on the entity-specific VAR parameters through structural assumptions \citep{canova2013}, penalization techniques \citep{skripnikov2019}, or prior distributions \citep{korobilis2016}; further details on this line of work are provided in \autoref{sec:dis} in the Supplement. An alternative strategy is to represent the data as a time series of matrix-variate variables \citep{chen2021} or as a three-dimensional tensor \citep{chen2022a}, enabling the use of factor models or tensor decomposition techniques. A brief overview of these approaches is also provided in \autoref{sec:dis} in the Supplement.

In this paper, we propose a \text{panel VAR}-based modeling strategy that captures both shared structures across entity-specific VAR models and idiosyncratic heterogeneities. Specifically, we consider multivariate time series data comprising $p$ variables observed over $T$ time periods for $M$ entities, denoted as $\{X_t^m \in\mathbb{R}^p; t=1,\cdots,T, \ m=1,\cdots,M\}$.  For ease of presentation, we assume a common number of time periods $T$ across entities, although the proposed model can be readily extended to unbalanced panels where each entity $m$ has its own number of observations $T_m$. We posit the following model—referred to hereafter as \textbf{LSPVAR} (Low-rank \& Sparse Panel VAR)—for a single lag for ease of exposition:
\begin{equation}\label{eq:pVAR-model}
    X_t^m = A_m X_{t-1}^m +\epsilon_t^m, ~ \epsilon_t^m \sim N(0, \Sigma_m); \qquad A_m=W_m\Phi+S_m,
\end{equation}
where $A_m\in\mathbb{R}^{p\times p}$ is the transition matrix of the VAR model for the $m$-th entity, and $\Sigma_m \in \mathbb{R}^{p \times p}$ is the corresponding covariance structure.
Models with multiple lags can be handled similarly by imposing a comparable structure on the coefficient matrices at higher lags; see the discussion in \autoref{sec:multi_lag} in the Supplement.
The transition matrices $A_m$ are assumed to exhibit a \textit{low rank plus sparse} structure. Specifically, we model each $A_m$ as a rescaled variant of a \textit{common low rank basis} $\Phi$, modulated by a diagonal matrix of entity-specific weights $W_m \in \mathbb{R}^{p \times p}$, along with an \textit{entity-specific, sparse} component $S_m$. The shared low-rank component $\Phi$ efficiently captures global dependencies across variables, while keeping model complexity low, facilitating scalable estimation  in high-dimensional settings. A visual depiction of the transition matrices $A_m$ under the posited model is provided in \autoref{fig:model_setup}.

\begin{figure}[htbp]
    \centering
    \includegraphics[width=\linewidth]{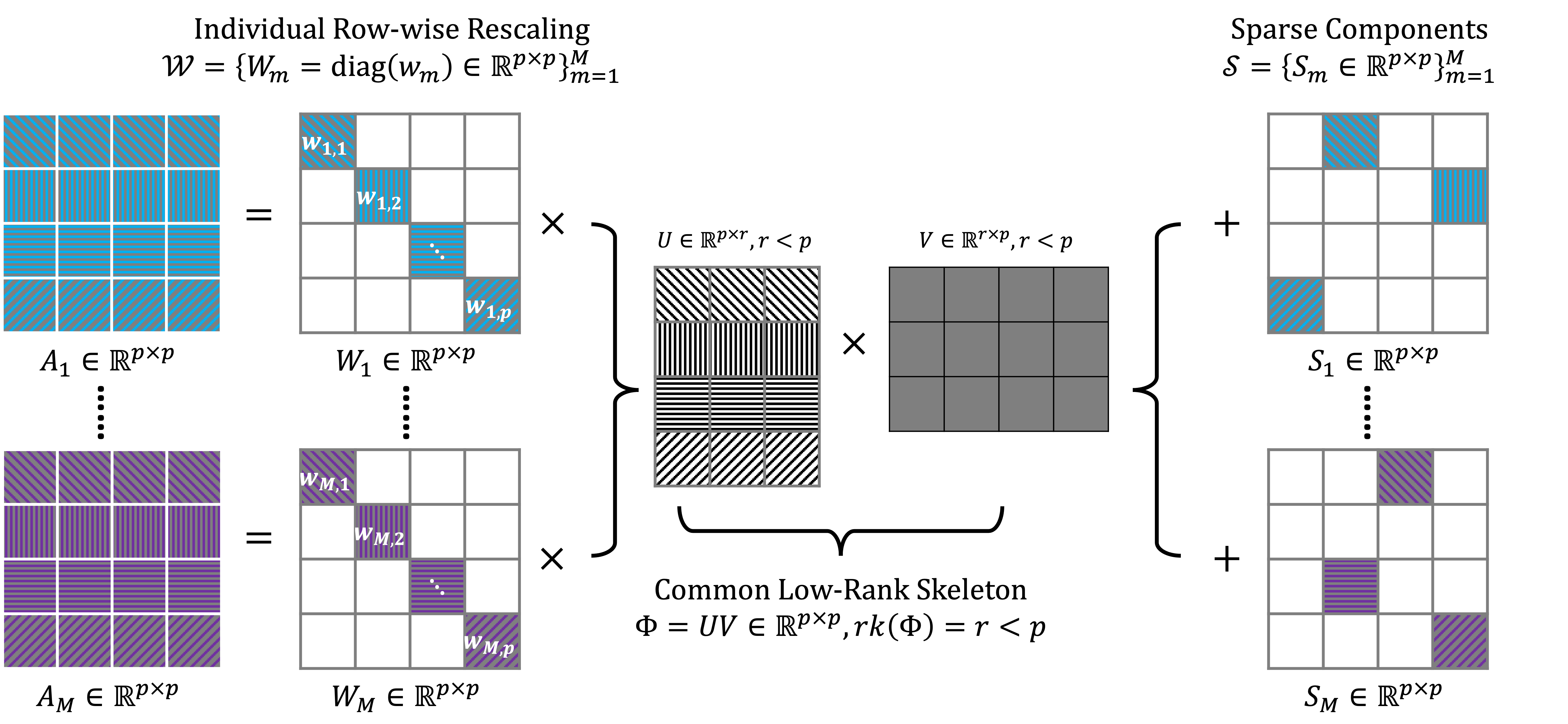}
    \caption{{\footnotesize Illustration of the model setup as specified in \eqref{eq:pVAR-model}. The  lead-lag relationships among variables are encoded in the transition matrices $A_m$. Shared global structures are captured by the common low-rank basis $\Phi$, while entity-specific variations are reflected in the sparse components $S_m$ and the rescaling weights $W_m$. The different fill patterns across the rows of the matrices visually indicate heterogeneity in the dependency structures across entities.}}
    \label{fig:model_setup}
\end{figure}

Next, we demonstrate the versatility of the proposed model through a synthetic data example. We consider time series data generated from a heterogeneous collection of VAR models as briefly described next.

\begin{example}\label{ex:mixture}
    Consider $M=20$ entities, $p=20$ variables and $T=1000$ time points. The entities are organized into distinct clusters as follows: (i) two clusters ($m=1,\cdots,5$ and $m=6,\cdots,10$, respectively) where the corresponding weight matrices $W_m$ share identical diagonal entries; (ii) one cluster ($m=11,\cdots,14$) with $W_m=0$, resulting in a sparse transition matrices $A_m$; (iii) one cluster ($m=15,\cdots,18$) where the weight matrices $W_m$ have identical entries and in addition $S_m=0$ (no entity specific component); and (iv) two additional VAR models ($m=19$ and 20) with transition matrices $A_m$ that differ from all previous groups. Additional details on the data generation process and model parameter specifications are provided in \autoref{subsec:cluster}.
\end{example}

The parameters of the panel VAR model in \autoref{ex:mixture} were estimated using the iterative algorithm described in \autoref{algo:DGP}. To summarize and visualize the results, we applied principal components analysis (PCA) to the collection of the diagonal entries form the estimated $W_m$ matrices -i.e., $\widehat{\mathbf{W}}=\left(\diag(\widehat{W}_1), \cdots, \diag(\widehat{W}_M)\right)$. The resulting PCA plot, shown in \autoref{fig:PC}, clearly illustrates that the proposed model effectively captures the underlying heterogeneity in the data, correctly reflecting both similarities within clusters and entity-specific differences. These results highlight the ability of the model in \eqref{eq:pVAR-model} to flexibly capture diverse patterns—such as latent cluster structures and purely sparse models—making it highly versatile compared to existing panel VAR approaches; see further discussion in \autoref{sec:dis} in the Supplement.
\begin{figure}[htbp]
    \centering
    \includegraphics{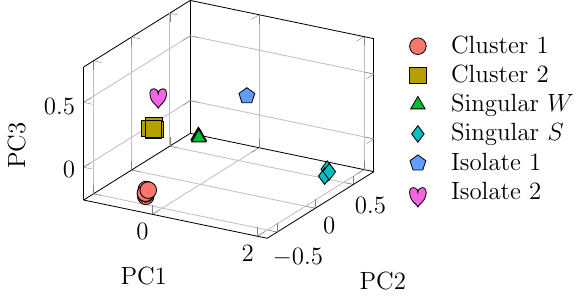}
    \caption{{\footnotesize The top three principal components of the estimated $\mathcal{W}$ of a model consisting of mixed sub-models. Here Singular $W$ (Singular $S$) represents the special case when the $m$-th subject has a purely sparse (low-rank) setting with $W_m = 0$ ($S_m = 0$).}}
    \label{fig:PC}
\end{figure}

However, the structure of the proposed model presents several technical challenges. The first stems from the presence of both a low-rank component ($W_m\Phi$) and a sparse component ($S_m$), and the need to reliably distinguish between them—a problem previously studied in multivariate regression \citep{neghaban2011estimation} and single VAR models \citep{basu2019}. The second challenge concerns the lack of identifiability in the low-rank component $W_m\Phi$ due to the presence of the weight matrices $W_m$. For instance, rescaling $W_m$ and $\Phi$ by a common scalar leaves the product $W_m\Phi$ unchanged, making the parameters non-identifiable without further constraints. For a single VAR model, the first challenge has been effectively addressed in the existing literature by using a nuclear norm and an $\ell_1$ regularizer to estimate the low-rank and sparse components, respectively, while imposing an incoherence condition on the entries of the low-rank component to distinguish it from the sparse component.  However, these strategies alone are insufficient for resolving the identifiability issues in our model. To overcome this, we impose additional normalization constraints on $\Phi$, which increase both the complexity of the estimation problem and the difficulty of designing a provably convergent optimization algorithm.  We resolve these issues satisfactorily, by designing a \textit{multi-block} Alternating Direction Method of Multipliers (ADMM) and establish its convergence properties under mild assumptions, despite the non-convex, non-smooth nature of the underlying objective function. Note that unlike the well-studied convergence analysis of the 2-block ADMM (see \citealp{boyd2011distributed} and references therein), the multi-block version is significantly more challenging, due to requiring careful handling of the interdependencies between the multiple block updates and sometimes additional regularization to guarantee reliable performance. Further, consistency results for the proposed estimators of the various model parameters are provided.

The remainder of the paper is organized as follows. \Cref{sec:est} addresses parameter identifiability and develops an ADMM algorithm for estimation, along with convergence guarantees. \autoref{sec:cons} establishes consistency of the parameter estimates obtained via this algorithm. \Cref{sec:simu} covers tuning parameter selection and evaluates the algorithm’s performance on synthetic data. \autoref{sec:app} demonstrates the model and algorithm on a neuroscience data set. \autoref{sec:conclusion} provides concluding remarks.
For appendices, \autoref{sec:VAR} provides background on VAR models, while \autoref{sec:dis} expands the literature review on panel VAR models and offers a detailed comparison of our approach with the MAR model. Estimation details are presented in \autoref{sec:solution}. Properties of the proposed fixed nuclear-norm constraint space are studied in \autoref{sec:constraint}, and \autoref{sec:inco} discusses the incoherence between the low-rank and sparse components. \autoref{sec:multi_lag} extends our panel VAR setup to the multi-lag case and examines the applicability of our method. The proofs for algorithm convergence and estimator consistency are given in \autoref{sec:proof_convergence} and \autoref{sec:proof_consistent}, respectively. Finally, \autoref{sec:simu_sup} provides additional simulation details, including data generation, initialization, and supplementary results.

\textbf{Notation.} 
Denote by $\mathbf{1}_n$ an $n$-dimensional vector comprising of ones, and by $e_i$ the $i$-th standard basis vector with an appropriate dimension, $I_n$ as an identity matrix of size $n \times n$. The operator $\diag(x)$ returns either a diagonal matrix with the diagonal elements from $x$ if $x$ is a vector, or a vector consisting of the diagonal elements of $x$ if $x$ is a square matrix.
For matrix $X$, $X'$ denotes the transpose of $X$,
$X[i,j]$ (or $X_{i,j}$ if unambiguous within context) denotes the $(i,j)$-th element of the matrix $X$, $\rank(X)$ returns the rank of $X$, $\supp(X)$ is the support of matrix $X$ with nonzero elements, $\range(X)$ spans the matrix space with column-space contained in that of $X$, $\det(X)$ and $\tr(X)$ are the determinant and trace function of $X$, and the sign function $\sign(X)$ maps positive (zero, or negative) matrix entries to 1 (0, or -1) as the signs element-wise. $O(\cdot)$ and $o(\cdot)$ are the usual big and small O notations respectively, and we use $x \succsim y$ ($x \precsim y$) if there exists an absolute positive constant $c$ such that $x \ge c y$ ($x \le c y$).
For norms, $\|X\| = \|X\|_2$ by default represents the spectral norm, $\|X\|_1 = \sum_{i,j} |X_{i,j}|$ is the $\ell_1$ norm that sums up all the absolute values of matrix entries, and $\|X\|_\infty = \max_{i,j} |X_{i,j}|$ is the $\infty$-norm that returns the maximum absolute value. The norm $\|X\|_{q \to r}$ is defined analogous to matrices' operator norm such as $\|X\|_{q \to r} = \max\{\|Xv\|_r: \|v\|_q \le 1\}$. The nuclear norm of matrix $X$ is denoted as $\|X\|_*$. The $\ell_0$-norm $\|X\|_0$ calculates the support size of matrix (or vector) $X$, i.e., the number of non-zero elements in $X$. 
For a square matrix $X$ of dimension $n$, we write $\lambda_i(X)$ as the $i$-th eigenvalue of $X$ sorted non-increasingly by magnitudes.
Define the indicator function $\mathbbm{1}\{A\}$ that outputs 1 if the event $A$ is satisfied, and 0 otherwise.

\section{Estimation Procedure for the LSPVAR Model Parameters}\label{sec:est}

Recall that the LSPVAR model under consideration is defined as
$$
Y_m = A_m X_m + \epsilon_m; \qquad A_m = W_m \Phi + S_m, \quad m=1,\cdots,M
$$
where $Y_m = (X^m_1, \cdots, X^m_T), X_m = (X^m_0, \cdots, X^m_{T-1}) \in \mathbb{R}^{p \times T}$ and $\epsilon_m = (\epsilon^m_1, \cdots, \epsilon^m_T)\in \mathbb{R}^{p \times T}$. The next assumption ensures that the posited model is stable.

\begin{ass} LSPVAR model stability:\label{ass:stat}
    The coefficient matrices $A_m, m=1,\cdots,M$ satisfy the stability condition $|\lambda_1(A_m)| < 1$; i.e.,  the eigenvalue of largest magnitude is less than one —an analogous condition to that used in a single VAR model \citep{basu2015}.
\end{ass}

The following assumption ensures that the decomposition of $A_m$ in \eqref{eq:pVAR-model} is nonexplosive and the low-rank component $W_m \Phi$ is non-degenerate.
\begin{ass}\label{ass:exp_deg}
    The two summands of $A_m$, i.e., both $W_m \Phi$ and $S_m$, are upper bounded in terms of their spectral norms. Further, there exists $w > 0$,
    \begin{equation}\label{eq:wphi_nonsing}
        \min_j \frac{1}{M} \sum_{m=1}^M \| e_j' W_m \Phi \|^2 \ge w^2.
    \end{equation}
\end{ass}
Specifically, \eqref{eq:wphi_nonsing} implies that for each variable $j=1,\cdots,p$, the set of entities exhibiting singular dynamic patterns (i.e., $W_m=0$) in the low-rank component must constitute a minority within the panel. Note that the setup in \autoref{ex:mixture}, where certain $W_m=0$, does not violate \autoref{ass:exp_deg}, given the assumed structure of the other clusters.

The model parameters are estimated by minimizing the least squares objective function:
\begin{equation}\label{eq:obj_fun}
    f(\mathcal{W}, \mathcal{S}, \Phi; \mathcal{X}) = \sum_{m=1}^M \frac{1}{2T} \| Y_m - (W_m \Phi + S_m ) X_m \|_F^2,
\end{equation}
where $\mathcal{W} = \{W_1, \dots, W_M\}$, $\mathcal{S} = \{S_1, \dots, S_M\}$, and by $\mathcal{X} = \{ (X_1, Y_1), \dots, (X_M, Y_M) \}$.
In the unbalanced panel setting, where the number of observations $T_m$ varies across entities, the normalization factor $\frac{1}{2T}$ in \eqref{eq:obj_fun} is replaced by $\frac{1}{2 T_m}$ for each entity $m$. The subsequent analysis follows analogously by replacing $T$ by $T_m$ for entity-specific quantities, while setting $T = \min_m T_m$ when considering the entire panel.

Since the idiosyncratic components $S_m$ are assumed to be sparse, we employ the popular $\ell_1$-regularizer, given by 
\begin{equation}\label{eq:pen_S}
    \mathcal{P}_S(S_m; \eta) = \eta \|S_m\|_1, \ \ m=1,\cdots,M.
\end{equation}
To address the identifiability issue regarding the low rank component $W_m\Phi$-- i.e., distinguishing the weight  matrices $W_m$ from the common low-rank basis $\Phi$-- we impose the following constraints: (i) all rows of $\Phi$ have the same $\ell_2$ norm and (ii) its nuclear norm and an upper bound on its rank are fixed; namely,
\begin{equation}\label{eq:con_phi}
\begin{cases}
    & \Phi \in \mathbb{B}_p \coloneqq \{\Phi \in \mathbb{R}^{p\times p}: \|e_1' \Phi\| = \dots = \|e_p' \Phi\| \ne 0\},\\
    & \Phi \in \mathbb{L}_p(\hat r, \ell) \coloneqq \{\Phi \in \mathbb{R}^{p \times p}: \|\Phi\|_* = \ell, ~ \rank(\Phi) \le \hat r\},
\end{cases}
\end{equation}
where $\ell > 0$ is a constant and $\hat r$ a positive integer satisfying $r \le \hat r \le p$. 
\begin{rem} 
Note that each row of the low rank constraint has $p+1$ unknown parameters and hence a constraint of the type in \eqref{eq:con_phi} is required to address the identifiability issue. The space $\mathbb{B}_p$ ensures that the rows of $\Phi$ are strictly bounded away from singular vectors, thereby preventing any explosive estimation of $W_m$. Further, the space $\mathbb{L}_p(\hat{r}, \ell)$ corresponds to a simplex 
for the singular values of $\Phi$. As a result, the minimizer over $\mathbb{L}_p(\hat{r}, \ell)$ usually leads to a solution at the boundary (including vertices) of the space, which naturally results in the low-rankness of $\Phi$. In addition, $\mathbb{L}_p(\hat{r}, \ell)$ imposes an upper bound on the Frobenius norms of the matrices in the space. The intersection of the two constrained spaces has fixed $p$ ``degrees of freedom", thus resolving the identifiability issue between $\mathcal{W}$ and $\Phi$. Finally, note that simply imposing a nuclear norm constraint on the product $W_m\Phi$ for each $m$, does not resolve the identifiability issue, which is critical for model interpretability purposes.
\end{rem}

The optimization problem is then formulated as 
\begin{equation}\label{eq:full_obj}
    \min_{\mathcal{W}, \mathcal{S}, \Phi} F(\mathcal{W}, \mathcal{S}, \Phi; \mathcal{X}, \eta) = f(\mathcal{W}, \mathcal{S}, \Phi; \mathcal{X}) + \sum_{m=1}^M \mathcal{P}_S(S_m; \eta)\ \ \text{s.t.} \ \Phi \in \mathbb{B}_p \cap \mathbb{L}_p(\hat r, \ell).\end{equation}

We develop a multi-block ADMM algorithm\footnote{The structure of the constraint in \eqref{eq:full_obj} makes a block coordinate descent algorithm ineffective in updating the value of $\Phi$.} based on an auxiliary variable $\Phi_c$ that satisfies constraint \eqref{eq:con_phi}. The resulting augmented Lagrangian function is given by 
\begin{equation}\label{eq:admm}
    G(\mathcal{W}, \mathcal{S}, \Phi_c, \Phi, \Gamma; \mathcal{X}, \eta, \rho) = F(\mathcal{W}, \mathcal{S}, \Phi; \mathcal{X}, \eta) + \frac{\rho}{2} \|\Phi - \Phi_c\|_F^2 + \rho \langle \Gamma, \Phi - \Phi_c \rangle,
\end{equation}
where $\Phi \in \mathbb{R}^{p \times p}$, $\Phi_c \in \mathbb{B}_p \cap \mathbb{L}_p(\hat r, \ell)$, $\Gamma \in \mathbb{R}^{p \times p}$ is the dual variable matrix, and $\rho$ is a scalar coefficient that determines the step size of the parameter updates. Note that the domains $\mathbb{B}_p$ and $\mathbb{L}_p(\hat r, \ell)$ of $\Phi_c$, are both nonconvex, but connected compact semi-algebraic sets (as established in the proof of \autoref{thm:converge}).

The ADMM algorithm has three main primal descent blocks and one dual ascent block, summarized in \autoref{algo:iter}.
The solutions to all primal subproblems (\autoref{step:primal}) and the dual update (\autoref{step:dual}) are provided in \autoref{sec:solution}.

\begin{algorithm}[htbp]
    \caption{Estimation of $\mathcal{W}, \mathcal{S}, \Phi_c, \Phi$ and $\Gamma$}\label{algo:iter}
    \KwIn{Time series data $\{X_m, Y_m\}_{m=1}^M$, maximum rank $\hat r$, fixed nuclear norm $\ell$, Lasso parameter $\eta$, step size $\rho$, preset maximum iteration numbers $N$, convergence tolerance $\epsilon$.}
    \KwOut{Estimators $\mathcal{W}^{(n)}, \mathcal{S}^{(n)}, \Phi_c^{(n)}, \Phi^{(n)}, \Gamma^{(n)}$.}
    Initialize $\hat w_m^{(0)} = \mathbf{1}_p$, $\hat S_m^{(0)} = 0$ for $m = 1, \dots, M$, sample $\Phi^{(0)} = \Phi_c^{(0)}$ from $\mathbb{B}_p \cap \mathbb{L}_{p}(\hat r, \ell)$, and set $\Gamma^{(0)} = 0 \in \mathbb{R}^{p \times p}$.\\
    Evaluate the objective function as $G^{(0)}$ at $(\mathcal{W}^{(0)}, \mathcal{S}^{(0)}, \Phi_c^{(0)}, \Phi^{(0)}, \Gamma^{(0)})$.\\
    \For{$i = 1:N$}{
        Update $(\mathcal{W}^{(i)}, \mathcal{S}^{(i)})$, $\Phi_c^{(i)}$, and $\Phi^{(i)}$ sequentially with the optimal solutions of their corresponding subproblems.\label{step:primal}\\
        Update the dual variables $\Gamma^{(i)}$.\label{step:dual}\\
        Evaluate the objective function as $G^{(i)}$ at $(\mathcal{W}^{(i)}, \mathcal{S}^{(i)}, \Phi_c^{(i)}, \Phi^{(i)}, \Gamma^{(i)})$.\\
        Terminate and set $n = i$ if some convergence criteria are met, i.e., $|G^{(i-1)} - G^{(i)}| < \epsilon$ or $\frac{\|\Phi^{(i)} - \Phi^{(i-1)}\|_F}{\ell} < \epsilon$.\label{step:term}
    }
    Use ordinary least squares to refine and update the estimators $(\mathcal{W}^{(n)}, \mathcal{S}^{(n)})$ with fixed $\Phi^{(n)}$ and support of $\mathcal{S}^{(n)}$.\\
    Output the final estimators $\mathcal{W}^{(n)}, \mathcal{S}^{(n)}, \Phi_c^{(n)
}, \Phi^{(n)}, \Gamma^{(n)}$.\label{step:GN}
\end{algorithm}

\begin{rem}
    Note that the estimate of $\Phi_c$ is based on Dykstra's algorithm \citep{boyle1986}. The convergence of the corresponding subproblem's iterates can be verified using a similar proof strategy as outlined in \autoref{sec:optm_conv}; for more details, see \autoref{subsec:phic} in the Supplement.
\end{rem}

\begin{rem}\label{rem:bic}
    The selection of the step size $\rho$ and the nuclear norm paramter $\ell$ is discussed in \Cref{thm:converge,,thm:consist}, while the choice of the input rank $\hat{r}$ is addressed in \autoref{sec:cons}. These selections demonstrate strong empirical performance based on the synthetic data experiments presented in \autoref{subsec:rank}. 
    Finally, the sparsity tuning parameter $\eta$ is selected based on BIC \citep{wang2007,zou2007}, with additional implementation details provided in  \autoref{sec:simu}.
\end{rem}

\subsection{Convergence Guarantees for \autoref{algo:iter}}\label{sec:optm_conv}

As previously noted, establishing convergence guarantees for multi-block ADMM algorithms in the context of non-smooth and non-convex problems remains a challenging task.
We start by introducing assumptions and some additional notation.

\begin{ass}[Restricted Strong Convexity (RSC)]\label{ass:rsc}
    For an arbitrary matrix $\Delta \in \mathbb{R}^{p \times p}$, the following relationships holds
    $$
    \sum_{m=1}^M \frac{1}{2T} \| \Delta X_m \|_F^2  \ge \frac{\beta}{2} \sum_{m=1}^M \| \Delta \|_F^2,
    $$
    with $\beta = \min_m \{\beta_m$\}, and $\beta_m := \lambda_p(\frac{X_m X_m'}{T})$,
\end{ass}
\begin{rem}
    This is a commonly used assumption in the high-dimensional statistics literature \citep{wainwright2019high} and also in VAR models \citep{basu2015}. It is leveraged both in the convergence analysis of \autoref{algo:iter} and the consistency of the model parameters in \autoref{thm:consist}.
\end{rem}

The following assumptions are imposed on the objective function $F$.

\begin{ass}\label{ass:lip_hes}
Given a data realization $\mathcal{X}$, the function $F$:
\begin{enumerate}
    \item is $\beta_W$-convex with respect to $W_m$ for every $m$, and $\beta_\Phi$-convex with respect to $\Phi$.
    
    \item Its gradient is $\alpha_W$-Lipschitz continuous with respect to $W_m$ (for every $m$), $\alpha_S$-Lipschitz continuous with respect to $S_m$ (for every $m$), and $\alpha_\Phi$-Lipschitz continuous with respect to $\Phi$.
\end{enumerate}
\end{ass}

\begin{rem}
    Recall that the function $F$ in \eqref{eq:full_obj} consists of a sum of least squares terms and $\ell_1$ penalties. Therefore, the above convexity and continuity assumptions are satisfied provided the maximum and minimum eigenvalues of the corresponding data Gram matrices $\frac{X_m X_m'}{T}$ satisfy $0<\lambda_p(\frac{X_m X'_m}{T})\leq \lambda_1 (\frac{X_mX'_m}{T})<\infty$. Indeed, this condition is partially subsumed in \autoref{ass:rsc} regarding the lower bound of $\beta_m = \lambda_p(\frac{X_m X_m'}{T})$.
\end{rem}

The next result characterizes the behavior of the iterates of the parameter estimates generated by \autoref{algo:iter}.

\begin{thm}\label{thm:converge}
    Suppose that \Cref{ass:stat,ass:rsc,ass:exp_deg,ass:lip_hes} hold, and based on a data realization $\mathcal{X}$ from model \eqref{eq:pVAR-model} and for step size satisfying
    \begin{equation}\label{eq:rho_phi_lower}
    \rho \succsim \max \left\{ \frac{M \alpha_W^2}{\beta_W}, \frac{M \alpha_S^2}{\beta}, \alpha_\Phi \right\},
    \end{equation}
    the sequence of iterates of the model parameters 
    $(\mathcal{W}^{(i+1)}, \mathcal{S}^{(i+1)}, \Phi_c^{(i+1)}, \Phi^{(i+1)}, \Gamma^{(i+1)})$ from \autoref{algo:iter} converges globally to a stationary point $(\widehat{\mathcal{W}}, \widehat{\mathcal{S}}, \widehat{\Phi}_c, \widehat{\Phi}, \widehat{\Gamma})$ of the augmented Lagrangian function $G$.
\end{thm}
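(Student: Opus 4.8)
The plan is to follow the Lyapunov-function / Kurdyka--\L{}ojasiewicz (KL) route for nonconvex, nonsmooth multi-block ADMM, using the augmented Lagrangian $G$ itself as the descent potential; write $z^{(i)}=(\mathcal{W}^{(i)},\mathcal{S}^{(i)},\Phi_c^{(i)},\Phi^{(i)},\Gamma^{(i)})$ for the $i$-th iterate and $\Delta X^{(i)}:=X^{(i)}-X^{(i-1)}$ generically. First I would record the structural facts the argument rests on: $F\ge 0$ (a sum of squares plus $\ell_1$ terms); the feasible set $\mathbb{B}_p\cap\mathbb{L}_p(\hat r,\ell)$ of $\Phi_c$ is \emph{compact}, because $\|\Phi_c\|_F\le\|\Phi_c\|_*=\ell$ and, since on $\mathbb{L}_p$ the equal-row-norm condition already forces every row to be nonzero, the defining conditions are closed; and it is \emph{semi-algebraic}, the rank bound, the nuclear-norm level set, and the row-norm equalities all being described by polynomial (in)equalities. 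Hence $G$ --- a polynomial in the blocks, plus $\ell_1$ terms, plus the indicator of a semi-algebraic set --- is a semi-algebraic, and therefore KL, function on its domain. (This is also where one verifies connectedness of the feasible set, as asserted after \eqref{eq:admm}.)

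The core is a sufficient-decrease estimate. Taking the four updates in order: the $W_m$- and $S_m$-subproblems decouple across $m$ and are $\beta_W$- and $\beta$-strongly convex respectively (\autoref{ass:lip_hes} and \autoref{ass:rsc}), so exact sequential minimization decreases $G$ by at least $\tfrac{\beta_W}{2}\|\Delta\mathcal{W}^{(i)}\|_F^2+\tfrac{\beta}{2}\|\Delta\mathcal{S}^{(i)}\|_F^2$; the $\Phi_c$-update merely decreases $G$ (the new $\Phi_c$ minimizes over the feasible set, the old one is feasible) --- no strong-convexity bound is available here, the set being nonconvex; the $\Phi$-subproblem is $(\beta_\Phi+\rho)$-strongly convex and decreases $G$ by at least $\tfrac{\rho}{2}\|\Delta\Phi^{(i)}\|_F^2$; and the dual step $\Gamma^{(i)}=\Gamma^{(i-1)}+(\Phi^{(i)}-\Phi_c^{(i)})$ \emph{increases} $G$ by exactly $\rho\|\Delta\Gamma^{(i)}\|_F^2$. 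The key structural identity, obtained by combining the first-order condition of the $\Phi$-update with the dual update, is $\rho\Gamma^{(i)}=-\nabla_\Phi F(\mathcal{W}^{(i)},\mathcal{S}^{(i)},\Phi^{(i)})$; together with the Lipschitz bounds of \autoref{ass:lip_hes} it forces $\rho\|\Delta\Gamma^{(i)}\|_F^2\precsim\rho^{-1}\big(M\alpha_W^2\|\Delta\mathcal{W}^{(i)}\|_F^2+M\alpha_S^2\|\Delta\mathcal{S}^{(i)}\|_F^2+\alpha_\Phi^2\|\Delta\Phi^{(i)}\|_F^2\big)$, the factors $M$ entering via the triangle inequality across the $M$ entities. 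The dual ascent is thus dominated by the primal descent precisely under the step-size condition \eqref{eq:rho_phi_lower}, yielding $G^{(i-1)}-G^{(i)}\succsim\|\Delta\mathcal{W}^{(i)}\|_F^2+\|\Delta\mathcal{S}^{(i)}\|_F^2+\|\Delta\Phi^{(i)}\|_F^2$.

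Next come lower-boundedness, boundedness of the iterates, and a relative-error bound. Substituting $\rho\Gamma^{(i)}=-\nabla_\Phi F^{(i)}$ into $G^{(i)}$ and invoking the descent lemma in $\Phi$ gives $G^{(i)}\ge F(\mathcal{W}^{(i)},\mathcal{S}^{(i)},\Phi_c^{(i)})+\tfrac{\rho-\alpha_\Phi}{2}\|\Phi^{(i)}-\Phi_c^{(i)}\|_F^2\ge 0$, so $\{G^{(i)}\}$ is nonincreasing and bounded below, hence convergent; telescoping the decrease yields $\sum_i(\|\Delta\mathcal{W}^{(i)}\|_F^2+\|\Delta\mathcal{S}^{(i)}\|_F^2+\|\Delta\Phi^{(i)}\|_F^2)<\infty$, whence also $\|\Delta\Gamma^{(i)}\|\to 0$ and $\|\Delta\Phi_c^{(i)}\|\to 0$ via the dual identity and $\Phi_c^{(i)}=\Phi^{(i)}-(\Gamma^{(i)}-\Gamma^{(i-1)})$. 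For boundedness: $\Phi_c^{(i)}$ lies in the compact feasible set; $\|\Phi^{(i)}-\Phi_c^{(i)}\|$ is bounded by the display above and $G^{(i)}\le G^{(0)}$; $\mathcal{W}^{(i)},\mathcal{S}^{(i)}$ are bounded by coercivity of $F$ (the RSC lower bound plus the $\ell_1$ terms, with \autoref{ass:exp_deg} preventing the $W_m$ from exploding); and $\Gamma^{(i)}$ is bounded via the dual identity --- these a priori bounds also justify using uniform Lipschitz constants for $\nabla F$ along the trajectory. A relative-error bound $\mathrm{dist}(0,\partial G(z^{(i)}))\precsim\|z^{(i)}-z^{(i-1)}\|+\|z^{(i-1)}-z^{(i-2)}\|$ is then obtained by exhibiting an explicit subgradient blockwise: the only nonsmooth parts of $G$ are the separable $\ell_1$ penalties and the indicator $\iota_{\mathbb{B}_p\cap\mathbb{L}_p(\hat r,\ell)}(\Phi_c)$, so one corrects the stale-$\Phi$ optimality conditions of the $\mathcal{W}$- and $\mathcal{S}$-updates, uses $\nabla_\Phi G(z^{(i)})=\rho(\Gamma^{(i)}-\Gamma^{(i-1)})$ for the $\Phi$-block, uses Fermat's rule with the limiting normal cone of the (closed) feasible set for the $\Phi_c$-block (its subproblem being a projection onto that set), and $\nabla_\Gamma G(z^{(i)})=\rho(\Phi^{(i)}-\Phi_c^{(i)})$. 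With sufficient decrease, this relative-error bound, convergence of $\{G^{(i)}\}$, boundedness of $\{z^{(i)}\}$, and continuity of $G$ on its domain, every limit point $\widehat z=(\widehat{\mathcal{W}},\widehat{\mathcal{S}},\widehat{\Phi}_c,\widehat{\Phi},\widehat{\Gamma})$ satisfies $0\in\partial G(\widehat z)$ --- in particular $\widehat{\Phi}=\widehat{\Phi}_c\in\mathbb{B}_p\cap\mathbb{L}_p(\hat r,\ell)$ and the KKT conditions of \eqref{eq:full_obj} hold --- and the KL inequality for $G$ at $\widehat z$ then gives $\sum_i\|z^{(i)}-z^{(i-1)}\|<\infty$, upgrading subsequential to convergence of the whole sequence to $\widehat z$.

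The principal obstacle is the nonconvexity of the $\Phi_c$-feasible set $\mathbb{B}_p\cap\mathbb{L}_p(\hat r,\ell)$. It mandates working throughout with limiting subdifferentials and normal cones in place of their convex analogues; it makes the $\Phi_c$-subproblem a projection onto a nonconvex set, so no strong-convexity decrease is available for that block and the control of $\|\Delta\Phi_c^{(i)}\|$ must be routed indirectly through $\Phi_c^{(i)}=\Phi^{(i)}-(\Gamma^{(i)}-\Gamma^{(i-1)})$; and it forces a careful verification that $G$ is a genuine KL function --- i.e., that the rank bound and nuclear-norm level set are semi-algebraic --- and that the feasible set is compact and connected. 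A secondary, more routine difficulty is pinning down the uniform Lipschitz constant for $\nabla F$ along the iterates, which looks circular (it needs the iterates bounded) but is resolved by the a priori bounds derived above from the lower bound on $G$ together with \autoref{ass:exp_deg}.
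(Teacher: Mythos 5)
Your proposal follows essentially the same route as the paper's proof: a sufficient-descent inequality for the augmented Lagrangian in which the dual ascent is absorbed via the identity $\rho\Gamma^{(i+1)}=-\nabla_\Phi F^{(i+1)}$ and the Lipschitz constants under \eqref{eq:rho_phi_lower} (\autoref{prop:suff_desc}), the same lower bound $G^{(i)}\ge F(\mathcal{W}^{(i)},\mathcal{S}^{(i)},\Phi_c^{(i)})+\tfrac{\rho-\alpha_\Phi}{2}\|\Phi^{(i)}-\Phi_c^{(i)}\|_F^2\ge 0$ (\autoref{prop:conv_func}), vanishing subgradients along the iterates (\autoref{lem:subg_bound}), and semi-algebraicity of $\mathbb{B}_p\cap\mathbb{L}_p(\hat r,\ell)$ to invoke the KL property for global convergence. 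The only notable divergence is the $\Phi_c$ block, where the paper inserts a proximal term $\tfrac{\kappa\rho}{2}\|\Phi_c-\Phi_c^{(i)}\|_F^2$ (solved by the Dykstra-type inner ADMM of \autoref{subsec:phic}, using its three-point property and initialization at $\Phi_c^{(i)}$) so that $\|\Phi_c^{(i)}-\Phi_c^{(i+1)}\|_F^2$ appears explicitly in the descent inequality, whereas you assume exact minimization over the nonconvex set and recover control of $\Delta\Phi_c$ indirectly through $\Phi_c^{(i+1)}=\Phi^{(i+1)}-(\Gamma^{(i+1)}-\Gamma^{(i)})$ — a workable variant, though the paper's proximal regularization is what makes the descent claim match what \autoref{algo:iter} actually computes for this nonconvex projection subproblem.
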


 The first step is to establish a “sufficient descent property,” provided in \Cref{prop:suff_desc}. The second step is to show that the augmented Lagrangian function is lower bounded along the sequence of iterates, as shown in \Cref{prop:conv_func}. When these two conditions hold, the set of accumulation points of \autoref{algo:iter} is guaranteed to be non-empty, compact, and connected \cite[see Remark 5 in][]{bolte2014}. The final requirement for proving global convergence to a critical point of $G$ is to verify that $G$ satisfies the Kurdyka-\L{}ojasiewicz (KL) property \citep{attouch2013}, which ensures that the sequence of iterates generated by \autoref{algo:iter} forms a Cauchy sequence. Together, these three components establish the claims in \autoref{thm:converge}. All technical proofs are deferred to \autoref{sec:proof_convergence}.

\begin{prop}[Sufficient Descent]\label{prop:suff_desc}
    Under the Assumptions of \autoref{thm:converge}, the objective function evaluated at the sequence $\{(\mathcal{W}^{(i)}, \mathcal{S}^{(i)}, \Phi_c^{(i)}, \Phi^{(i)}, \Gamma^{(i)})\}$ is monotonically decreasing. Moreoever, there exists a constant $\mathcal{C} > 0$, such that
    \begin{multline*}
    	G(\mathcal{W}^{(i)}, \mathcal{S}^{(i)}, \Phi_c^{(i)}, \Phi^{(i)}, \Gamma^{(i)}) - G(\mathcal{W}^{(i+1)}, \mathcal{S}^{(i+1)}, \Phi_c^{(i+1)}, \Phi^{(i+1)}, \Gamma^{(i+1)})\\
    	\ge \mathcal{C} (\sum_{m=1}^M \|W_m^{(i)} - W_m^{(i+1)}\|_F^2 + \sum_{m=1}^M \|S_m^{(i)} - S_m^{(i+1)}\|_F^2 + \|\Phi^{(i)} - \Phi^{(i+1)}\|_F^2 + \|\Phi_c^{(i)} - \Phi_c^{(i+1)}\|_F^2).
    \end{multline*}
\end{prop}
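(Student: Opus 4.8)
\textbf{Proof plan for \autoref{prop:suff_desc}.}

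The plan is to exploit the block structure of \autoref{algo:iter} and prove a per-block descent inequality, then sum them. For each primal block I would use the fact (from \autoref{ass:lip_hes}) that $F$ is strongly convex in $W_m$ (resp. $\Phi$) and that the augmented Lagrangian adds the convex quadratic $\frac{\rho}{2}\|\Phi-\Phi_c\|_F^2$. Concretely: (i) For the $(\mathcal{W},\mathcal{S})$ block, since $W_m^{(i+1)}$ minimizes a $\beta_W$-strongly convex function (the $S_m$ part is only convex, not strongly convex, so for $S_m$ I would instead use the optimality condition together with the $\alpha_S$-Lipschitz gradient to extract a descent of the form $\frac{\beta}{2}\|S_m^{(i)}-S_m^{(i+1)}\|_F^2$ via the RSC constant $\beta$ from \autoref{ass:rsc}, which lower-bounds the curvature of the least-squares term in $S_m$), the decrease in $G$ from this update is at least $\frac{\beta_W}{2}\sum_m\|W_m^{(i)}-W_m^{(i+1)}\|_F^2 + \frac{\beta}{2}\sum_m\|S_m^{(i)}-S_m^{(i+1)}\|_F^2$. (ii) For the $\Phi_c$ block, $G$ is, as a function of $\Phi_c$, the quadratic $\frac{\rho}{2}\|\Phi-\Phi_c\|_F^2 + \rho\langle\Gamma,\Phi-\Phi_c\rangle$, which is $\rho$-strongly convex in $\Phi_c$; even though the feasible set $\mathbb{B}_p\cap\mathbb{L}_p(\hat r,\ell)$ is nonconvex, the exact minimizer over it still satisfies the descent bound $\frac{\rho}{2}\|\Phi_c^{(i)}-\Phi_c^{(i+1)}\|_F^2$ because for a strongly convex quadratic $q$ and \emph{any} set $C$, if $x^{+}\in\argmin_C q$ then $q(x)-q(x^{+})\ge \frac{\mu}{2}\|x-x^{+}\|^2$ for all $x\in C$ — this uses only strong convexity of $q$, not convexity of $C$. (iii) For the $\Phi$ block, $G$ is $(\beta_\Phi+\rho)$-strongly convex in $\Phi$ (least-squares plus the added quadratic), giving a decrease of at least $\frac{\beta_\Phi+\rho}{2}\|\Phi^{(i)}-\Phi^{(i+1)}\|_F^2$.

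The delicate point is the dual update. After the dual ascent step $\Gamma^{(i+1)}=\Gamma^{(i)}+(\Phi^{(i+1)}-\Phi_c^{(i+1)})$ (or whatever scaling \autoref{sec:solution} uses), $G$ \emph{increases} by $\rho\langle\Gamma^{(i+1)}-\Gamma^{(i)},\Phi^{(i+1)}-\Phi_c^{(i+1)}\rangle=\rho\|\Phi^{(i+1)}-\Phi_c^{(i+1)}\|_F^2$. This must be dominated by the descent accumulated from the primal blocks, and the standard device is to control the dual increment by the primal increments: using the optimality condition of the $\Phi$-subproblem, $\rho(\Gamma^{(i+1)})$ equals (minus) the gradient of $F$ in $\Phi$ evaluated at $\Phi^{(i+1)}$ (plus $\beta_\Phi$-type terms), so $\|\Gamma^{(i+1)}-\Gamma^{(i)}\|_F$ is bounded by $\frac{\alpha_\Phi}{\rho}\|\Phi^{(i)}-\Phi^{(i+1)}\|_F$ via the $\alpha_\Phi$-Lipschitz gradient. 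Hence the dual ascent costs at most $O(\alpha_\Phi^2/\rho)\|\Phi^{(i)}-\Phi^{(i+1)}\|_F^2$, which is absorbed into the $\Phi$-block descent once $\rho$ is large enough; the same mechanism for the $W_m$ and $S_m$ couplings is exactly why \eqref{eq:rho_phi_lower} requires $\rho\succsim\max\{M\alpha_W^2/\beta_W,\,M\alpha_S^2/\beta,\,\alpha_\Phi\}$. Summing the three primal descents minus the dual ascent and collecting constants yields $\mathcal{C}>0$ with the claimed inequality, and monotone decrease is the special case $\mathcal{C}\ge 0$.

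The main obstacle I anticipate is item (ii)/(iii) interacting with nonconvexity of the constraint set and the fact that $\Phi_c$ is updated via an inner Dykstra iteration rather than in closed form: I would need the inner solver to return an exact (or sufficiently accurate) projection so that the "minimizer over a nonconvex set still gives quadratic descent" lemma applies; the remark following \autoref{algo:iter} about Dykstra's convergence is what licenses this. A secondary subtlety is that $F$ is only convex (not strongly convex) in $S_m$, so the strong-convexity argument must be routed through the RSC constant $\beta$ of \autoref{ass:rsc}, which is why $\beta$ (rather than a "$\beta_S$") appears in both \eqref{eq:rho_phi_lower} and the statement. Everything else — writing out the optimality conditions, applying Lipschitz bounds, and choosing $\rho$ — is routine bookkeeping once these two points are handled.
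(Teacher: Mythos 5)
Your blocks (i) and (iii) and your treatment of the dual step essentially reproduce the paper's argument (strong convexity in $W_m$ with constant $\beta_W$, curvature $\beta$ from \autoref{ass:rsc} for $S_m$, $(\beta_\Phi+\rho)/2$ descent for $\Phi$, and a dual increase equal to $\frac{1}{\rho}\|\nabla_\Phi F^{(i+1)}-\nabla_\Phi F^{(i)}\|_F^2$, bounded through the Lipschitz constants of all three blocks, which is where the factors of $M$ in \eqref{eq:rho_phi_lower} originate). The genuine gap is in step (ii). The lemma you invoke --- that if $x^{+}\in\argmin_{C} q$ for a $\mu$-strongly convex quadratic $q$ and an \emph{arbitrary} set $C$, then $q(x)-q(x^{+})\ge \frac{\mu}{2}\|x-x^{+}\|^2$ for all $x\in C$ --- is false for nonconvex $C$. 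Indeed $q(x)=q(x^{+})+\langle \nabla q(x^{+}),x-x^{+}\rangle+\frac{\mu}{2}\|x-x^{+}\|^2$, and the first-order term is guaranteed nonnegative only when $C$ is convex; over a nonconvex set it can be negative and cancel the quadratic term. A one-line counterexample: $C=\{-1,1\}\subset\mathbb{R}$, $q(y)=\frac{1}{2}(y-0.1)^2$, so $x^{+}=1$, yet $q(-1)-q(1)=0.2<2=\frac{1}{2}|-1-1|^2$. Consequently an exact minimizer of the plain quadratic $\frac{\rho}{2}\|\Phi^{(i)}+\Gamma^{(i)}-\Phi_c\|_F^2$ over $\mathbb{B}_p\cap\mathbb{L}_p(\hat r,\ell)$ does \emph{not} deliver your claimed $\frac{\rho}{2}\|\Phi_c^{(i)}-\Phi_c^{(i+1)}\|_F^2$ decrease, and your anticipated fix (requiring the inner Dykstra routine to return an exact projection) does not repair this.

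The paper avoids the issue by changing the subproblem, not the solver: the $\Phi_c$-update in \autoref{subsec:phic} minimizes $\frac{\rho}{2}\|\Phi^{(i)}+\Gamma^{(i)}-\Phi_c\|_F^2+\frac{\kappa\rho}{2}\|\Phi_c-\Phi_c^{(i)}\|_F^2$, i.e.\ it carries an explicit proximal term anchored at the previous iterate. The block descent $\frac{\kappa\rho}{2}\|\Phi_c^{(i)}-\Phi_c^{(i+1)}\|_F^2$ (the paper's analogue of your (ii)) is then obtained from initializing the inner ADMM/Dykstra iteration at $\Phi_c^{(i)}$ and from its monotone decrease, which in turn is justified not by convexity but by the three-point property of $\mathbb{B}_p$ (\autoref{prop:3pp}, used in \autoref{prop:suff_desc_dykstra}) under a lower bound on the rescaling weights $a_i$ of the projection. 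This is why the constant $\kappa$ (chosen of order $M/\rho$ in the paper) appears in the final descent constant, and it is the ingredient your proposal is missing; without the proximal term and the three-point property, the $\Phi_c$ contribution to the sufficient-descent inequality cannot be established.
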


\begin{prop}[Lower bound of the augmented Lagrangian function] \label{prop:conv_func}
    Under the Assumptions of \autoref{thm:converge}, the evaluation of function $G$ at the estimated sequence from \autoref{algo:iter}, $G(\mathcal{W}^{(i+1)}, \mathcal{S}^{(i+1)}, \Phi_c^{(i+1)}, \Phi^{(i+1)}, \Gamma^{(i+1)})$, is lower bounded for all $i$ and converges as $i \to \infty$.
\end{prop}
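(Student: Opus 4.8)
The plan is to combine the monotone descent already established in \autoref{prop:suff_desc} with a uniform lower bound on $G$ along the iterates: a non-increasing real sequence that is bounded below converges, which is exactly the assertion. Thus the only real work is to show $G(\mathcal{W}^{(i+1)}, \mathcal{S}^{(i+1)}, \Phi_c^{(i+1)}, \Phi^{(i+1)}, \Gamma^{(i+1)}) \ge c$ for some constant $c$ independent of $i$, and the key tool for this is the first-order optimality condition of the $\Phi$-subproblem together with the dual update rule.

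First I would record the stationarity condition for the $\Phi$-update. Since in \autoref{algo:iter} the block $\Phi$ is updated over all of $\R^{p\times p}$ (the constraint $\Phi_c \in \mathbb{B}_p \cap \mathbb{L}_p(\hat r, \ell)$ is carried entirely by the auxiliary block $\Phi_c$), the minimizer $\Phi^{(i+1)}$ of $\Phi \mapsto G(\mathcal{W}^{(i+1)}, \mathcal{S}^{(i+1)}, \Phi_c^{(i+1)}, \Phi, \Gamma^{(i)})$ satisfies
\[
\nabla_\Phi F(\mathcal{W}^{(i+1)}, \mathcal{S}^{(i+1)}, \Phi^{(i+1)}; \mathcal{X}, \eta) + \rho\,(\Phi^{(i+1)} - \Phi_c^{(i+1)}) + \rho\,\Gamma^{(i)} = 0 ,
\]
where $\nabla_\Phi F = \nabla_\Phi f$ because the $\ell_1$ penalty does not involve $\Phi$. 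Combining this with the dual update $\Gamma^{(i+1)} = \Gamma^{(i)} + (\Phi^{(i+1)} - \Phi_c^{(i+1)})$ from \autoref{step:dual} collapses the two $\rho$-terms and yields the identity $\rho\,\Gamma^{(i+1)} = -\nabla_\Phi F(\mathcal{W}^{(i+1)}, \mathcal{S}^{(i+1)}, \Phi^{(i+1)}; \mathcal{X}, \eta)$. I would then substitute this identity into the definition \eqref{eq:admm} of $G$ and apply the descent lemma to the $\alpha_\Phi$-smooth map $\Phi \mapsto F(\mathcal{W}^{(i+1)}, \mathcal{S}^{(i+1)}, \Phi)$ (\autoref{ass:lip_hes}) between the points $\Phi^{(i+1)}$ and $\Phi_c^{(i+1)}$, namely
\[
F(\mathcal{W}^{(i+1)}, \mathcal{S}^{(i+1)}, \Phi_c^{(i+1)}) \le F(\mathcal{W}^{(i+1)}, \mathcal{S}^{(i+1)}, \Phi^{(i+1)}) + \langle \nabla_\Phi F(\mathcal{W}^{(i+1)}, \mathcal{S}^{(i+1)}, \Phi^{(i+1)}),\, \Phi_c^{(i+1)} - \Phi^{(i+1)} \rangle + \tfrac{\alpha_\Phi}{2}\|\Phi^{(i+1)} - \Phi_c^{(i+1)}\|_F^2 .
\]
Since $\rho\langle \Gamma^{(i+1)}, \Phi^{(i+1)} - \Phi_c^{(i+1)}\rangle = \langle \nabla_\Phi F(\mathcal{W}^{(i+1)}, \mathcal{S}^{(i+1)}, \Phi^{(i+1)}), \Phi_c^{(i+1)} - \Phi^{(i+1)}\rangle$, rearranging gives
\[
G(\mathcal{W}^{(i+1)}, \mathcal{S}^{(i+1)}, \Phi_c^{(i+1)}, \Phi^{(i+1)}, \Gamma^{(i+1)}) \ge F(\mathcal{W}^{(i+1)}, \mathcal{S}^{(i+1)}, \Phi_c^{(i+1)}; \mathcal{X}, \eta) + \tfrac{\rho - \alpha_\Phi}{2}\,\|\Phi^{(i+1)} - \Phi_c^{(i+1)}\|_F^2 .
\]
By the step-size condition \eqref{eq:rho_phi_lower} we have $\rho \ge \alpha_\Phi$, so the quadratic term is nonnegative; and $F$ is a sum of least-squares terms and nonnegative $\ell_1$ penalties, hence $F \ge 0$. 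Therefore $G^{(i+1)} \ge 0$ for all $i$, which together with the monotone decrease from \autoref{prop:suff_desc} yields convergence of the sequence $\{G^{(i)}\}$.

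The main point requiring care is the bookkeeping of the update order: one must use the correct iterates in the $\Phi$-optimality condition (the stale multiplier $\Gamma^{(i)}$ but the already-refreshed $\Phi_c^{(i+1)}$), so that the dual update collapses the two $\rho$-terms \emph{exactly} into $\rho\,\Gamma^{(i+1)} = -\nabla_\Phi F$; a sign or index slip here breaks the telescoping. One also needs the absolute constant hidden in $\rho \succsim \alpha_\Phi$ to be at least $1$ so that $\rho - \alpha_\Phi \ge 0$, which is why \eqref{eq:rho_phi_lower} is stated under the $\succsim$ convention. Everything else (the descent lemma and $F \ge 0$) is routine.
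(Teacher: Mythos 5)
Your proposal is correct and follows essentially the same route as the paper's own proof: the identity $\rho\,\Gamma^{(i+1)} = -\nabla_\Phi F(\mathcal{W}^{(i+1)}, \mathcal{S}^{(i+1)}, \Phi^{(i+1)})$ obtained from the $\Phi$-subproblem optimality plus the dual update, then the descent lemma under $\alpha_\Phi$-smoothness to get $G^{(i+1)} \ge F(\mathcal{W}^{(i+1)}, \mathcal{S}^{(i+1)}, \Phi_c^{(i+1)}) + \tfrac{\rho-\alpha_\Phi}{2}\|\Phi^{(i+1)}-\Phi_c^{(i+1)}\|_F^2 \ge 0$, and finally monotone descent from \autoref{prop:suff_desc} to conclude convergence. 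Your remarks about the update-order bookkeeping and the implicit constant in $\rho \succsim \alpha_\Phi$ match the paper's (implicit) requirements, so no gap remains.
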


The next result is based on \autoref{prop:conv_func}, and the fact that the sequence $G(\mathcal{W}^{(i+1)}, \mathcal{S}^{(i+1)}, \Phi_c^{(i+1)}, \Phi^{(i+1)}, \Gamma^{(i+1)})$ forms a Cauchy sequence.

\begin{prop}\label{prop:conv}
    Under the Assumptions of \autoref{thm:converge}, $(\mathcal{W}^{(i+1)}, \mathcal{S}^{(i+1)}, \Phi_c^{(i+1)}, \Phi^{(i+1)}, \Gamma^{(i+1)})$ is convergent to some limiting point $(\widehat{\mathcal{W}}, \widehat{\mathcal{S}}, \widehat{\Phi}_c, \widehat{\Phi}, \widehat{\Gamma})$ with $i \to \infty$. 
    
    Moreover, the limiting point $(\widehat{\mathcal{W}}, \widehat{\mathcal{S}}, \widehat{\Phi}_c, \widehat{\Phi}, \widehat{\Gamma})$ satisfies the local first-order optimality of \eqref{eq:admm} within the corresponding support domain.
\end{prop}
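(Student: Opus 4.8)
\emph{Proof idea.} The plan is to deduce this from the abstract convergence theory for descent schemes satisfying the Kurdyka--\L{}ojasiewicz (KL) inequality \citep{attouch2013,bolte2014}, which requires three ingredients: a sufficient-decrease property, a relative-error (subgradient) bound, and the KL property of the potential function $G$, now viewed with the indicator $\iota_{\mathbb{B}_p\cap\mathbb{L}_p(\hat r,\ell)}$ of the feasible set of $\Phi_c$ folded in. The first ingredient is exactly \autoref{prop:suff_desc}. Writing $z^{(i)}=(\mathcal{W}^{(i)},\mathcal{S}^{(i)},\Phi_c^{(i)},\Phi^{(i)})$ for the primal block, I would sum the inequality of \autoref{prop:suff_desc} over $i$ and combine it with the lower bound of \autoref{prop:conv_func} to get $\sum_i\|z^{(i)}-z^{(i+1)}\|_F^2<\infty$, so that all successive primal differences vanish; together with the dual recursion $\Gamma^{(i+1)}=\Gamma^{(i)}+(\Phi^{(i+1)}-\Phi_c^{(i+1)})$ and the closed-form $\Phi$-update identity $\Gamma^{(i+1)}=-\rho^{-1}\nabla_\Phi F(\mathcal{W}^{(i+1)},\mathcal{S}^{(i+1)},\Phi^{(i+1)})$, this yields $\Phi^{(i+1)}-\Phi_c^{(i+1)}=\Gamma^{(i+1)}-\Gamma^{(i)}=O(\|z^{(i)}-z^{(i+1)}\|_F)\to0$ (asymptotic feasibility) and boundedness of $\Gamma^{(i)}$. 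Boundedness of the entire sequence then follows because $\Phi_c^{(i)}\in\mathbb{B}_p\cap\mathbb{L}_p(\hat r,\ell)$ lies in a compact set, $\Phi^{(i)}$ stays within $O(\|z^{(i)}-z^{(i+1)}\|_F)$ of it, $\Gamma^{(i)}$ is bounded, and $(\mathcal{W}^{(i)},\mathcal{S}^{(i)})$ remain in a sublevel set of $F$ that is bounded by the restricted strong convexity of \autoref{ass:rsc}; hence the accumulation set is nonempty, compact and connected.

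Next I would establish the relative-error bound, namely the existence of $\mathcal{C}'>0$ with $\mathrm{dist}\bigl(0,\partial G(z^{(i+1)},\Gamma^{(i+1)})\bigr)\le\mathcal{C}'\|z^{(i)}-z^{(i+1)}\|_F$. This is carried out block by block: the first-order optimality conditions of the $(\mathcal{W},\mathcal{S})$-, $\Phi_c$- and $\Phi$-subproblems, each written with the stale copies of the not-yet-updated blocks as dictated by the sequential ADMM sweep, produce an explicit element of the corresponding partial subdifferential of $G$ at the new iterate that differs from $0$ only through gradient increments of $F$ between consecutive iterates and through the already-controlled quantities $\Phi^{(i+1)}-\Phi^{(i)}$ and $\Phi^{(i+1)}-\Phi_c^{(i+1)}$; \autoref{ass:lip_hes} (Lipschitz gradients) converts these into the stated bound, while the step-size condition \eqref{eq:rho_phi_lower} keeps the resulting constant finite. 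The delicate block is $\Phi_c$: since $\mathbb{B}_p\cap\mathbb{L}_p(\hat r,\ell)$ is nonconvex, its optimality must be expressed via the limiting (Mordukhovich) normal cone, and one has to check that the update returned by the Dykstra-type inner solver still furnishes a subgradient in $\partial_{\Phi_c}G$ of order $\|z^{(i)}-z^{(i+1)}\|_F$.

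With sufficient decrease and the relative-error bound available, I would verify that $G$ is a KL function. It is the sum of the polynomial least-squares loss $f$, the polyhedral (hence semi-algebraic) penalty $\eta\sum_m\|S_m\|_1$, the quadratic-plus-bilinear coupling $\frac{\rho}{2}\|\Phi-\Phi_c\|_F^2+\rho\langle\Gamma,\Phi-\Phi_c\rangle$, and the indicator of $\mathbb{B}_p\cap\mathbb{L}_p(\hat r,\ell)$, which is semi-algebraic because this set is a connected compact semi-algebraic set, as established in the proof of \autoref{thm:converge}. Finite sums of semi-algebraic functions are semi-algebraic and therefore satisfy the KL property \citep{bolte2014}. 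Applying the abstract theorem of \citet{attouch2013} to the bounded sequence $\{(z^{(i)},\Gamma^{(i)})\}$ then shows it has finite length, is therefore a Cauchy sequence, and converges to a single limit $(\widehat{\mathcal{W}},\widehat{\mathcal{S}},\widehat{\Phi}_c,\widehat{\Phi},\widehat{\Gamma})$ which is a critical point of $G$.

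Finally, the local first-order optimality claim follows by passing to the limit in the block optimality conditions used above: continuity of $\nabla F$ together with outer semicontinuity of the limiting subdifferential and normal-cone maps gives $\nabla_{W_m}G=0$ and $\nabla_\Phi G=0$ at the limit, $0\in\partial_{S_m}G$ restricted to the fixed support $\supp(\widehat{S}_m)$, and $0\in\partial_{\Phi_c}G$ with the normal-cone term of $\mathbb{B}_p\cap\mathbb{L}_p(\hat r,\ell)$, while asymptotic feasibility yields $\widehat\Phi=\widehat\Phi_c$; together these are exactly the first-order stationarity conditions for $G$ in \eqref{eq:admm} --- equivalently, given $\widehat\Phi=\widehat\Phi_c$, for the constrained problem \eqref{eq:full_obj} --- restricted to the support of $\widehat{\mathcal{S}}$. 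The main obstacles I anticipate are the two already flagged: controlling the limiting normal cone of the nonconvex set $\mathbb{B}_p\cap\mathbb{L}_p(\hat r,\ell)$ inside the relative-error bound, and incorporating the dual variable $\Gamma$, which is an ascent rather than a descent coordinate, into the descent/KL machinery --- and the lower bound on $\rho$ in \eqref{eq:rho_phi_lower} is precisely the condition that makes the decrease of $G$ in $(\mathcal{W},\mathcal{S},\Phi)$ dominate its increase in $\Gamma$.
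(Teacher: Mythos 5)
Your proposal is correct and follows essentially the same route as the paper: the block-by-block relative-error bound you describe is exactly the paper's \autoref{lem:subg_bound} (explicit subgradients at the new iterates, obtained from the stale-block optimality conditions and controlled by successive-iterate differences via \autoref{ass:lip_hes}), and the semi-algebraic/KL verification you fold in is the one the paper carries out in the proof of \autoref{thm:converge}. If anything, your write-up is slightly more careful about logical order, since the paper's stated proof of \autoref{prop:conv} asserts the Cauchy property directly from \Cref{prop:suff_desc,prop:conv_func}, whereas you correctly note that sufficient descent plus lower boundedness only gives square-summability of the differences, and route the finite-length/Cauchy conclusion through the relative-error bound and the KL argument of \citet{attouch2013}, which is what the paper's overall roadmap intends.
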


\begin{rem}
    The arguments used to prove \autoref{prop:conv} also imply that the sequence of ADMM iterates achieves  an $o(\frac{1}{i})$ convergence rate \cite[Lemma 1.1]{deng2017}. Consequently, for a given error tolerance $\epsilon > 0$ (measuring the difference between consecutive iterates), the number of iterations required for convergence is of order $O(1/\epsilon)$.
\end{rem}

\section{Consistency of the Model Parameter Estimates}\label{sec:cons}

In this section, we discuss the consistency results for the estimators obtained from \autoref{algo:iter}, focusing on the case where the model structure in \eqref{eq:pVAR-model} is correctly specified. Notably, the derivation shows that the effect of a potentially overspecified $\hat{r}$ can be well controlled under mild conditions.

Based on the convergence guarantees in \autoref{thm:converge}, the final output of \autoref{algo:iter} corresponds to a limiting stationary point $(\widehat{\mathcal{W}}, \widehat{\mathcal{S}}, \widehat{\Phi}_c, \widehat{\Phi}, \widehat{\Gamma})$. Given the nonconvex nature of the optimization problem, \autoref{thm:converge} does not ensure convergence to a global minimum of the objective function \eqref{eq:full_obj}. Moreover, not all stationary points of \eqref{eq:full_obj} possess equally desirable statistical properties. Therefore, in the following analysis, we focus on stationary points obtained by initializing \autoref{algo:iter} within a suitable neighborhood of the true model parameters $(\mathcal{W}^*, \mathcal{S}^*, \Phi^*)$ with $\Phi^* \in \mathbb{B}_p \cap \mathbb{L}_p(r, \ell)$.
In practice, however, our simulation studies in \autoref{sec:simu} indicate that \autoref{algo:iter} consistently achieves strong performance, regardless of the initialization. These empirical results suggest that, even under broader initialization settings, the favorable statistical properties we establish below may be implicitly supported by the structure of the optimization problem.

Before stating the theoretical results, we introduce some additional notation. Define
$W = \max_m \|W_m^*\|$, $\widehat{W} = \max (\max_m \| \widehat{W}_m \|, W)$, and $W_m^\dag = \frac{W_m^*}{W}$ for every $m$. The proposition below describes the estimation power of \autoref{algo:iter} with $(\mathcal{W}, \mathcal{S}, \Phi, \Phi_c)$ initialized in a neighborhood of $(\mathcal{W}^*, \mathcal{S}^*, \Phi^*, \Phi^*)$, given the data realization $\mathcal{X}$. Its proof is presented in \autoref{sec:proof_consistent}.

\begin{prop}\label{prop:upper}
    Suppose \Cref{ass:stat,ass:exp_deg,ass:rsc,ass:lip_hes} hold, and that \autoref{algo:iter} is initialized in a neighborhood of the true parameter $(\mathcal{W}^*, \mathcal{S}^*, \Phi^*)$. Further, assume the following:
\begin{enumerate}
        \item[(B1)] there exists a constant $\phi > 0$ such that
        \begin{itemize}
        	\item the true parameter $\Phi^* \in \mathbb{B}_p \cap \mathbb{L}_p(r, \ell)$ satisfies $\|\Phi^*\|_\infty \le \frac{\phi \ell}{\sqrt{r} p}$;
        	\item the estimator $\widehat{\Phi} \in \mathbb{B}_p \cap \mathbb{L}_p(\hat r, \ell)$ satisfies $\|\widehat{\Phi}\|_\infty \le \frac{\phi \ell}{\sqrt{r} p}$;
        \end{itemize}
        
        \item[(B2)] for every $m$, the sparse matrix $S_m^*$ has at most $s$ non-zero entries.
    \end{enumerate}
    Then, for tuning parameter $\eta \ge \frac{4 \beta \phi \widehat{W} \ell}{\sqrt{r} p} + \max_m \left| \frac{\varepsilon_m X_m'}{T} \right|_\infty$, and selecting $\zeta = \min\{ \frac{\beta_\Phi}{M \beta}, \frac{\beta_W}{\beta}, 1 \} \succsim \min \left\{ \frac{r p}{\ell^2}, \frac{\ell^2}{\hat{r} p}, 1 \right\}$, the solution $(\widehat{\mathcal{W}}, \widehat{\mathcal{S}}, \widehat{\Phi})$ satisfies
    \begin{multline}\label{eq:error}
        \|\widehat{\Phi} - \Phi^*\|_F^2 + \frac{1}{M} \sum_{m=1}^M (\|\widehat{S}_m - S_m^*\|_F^2 + \|\widehat{W}_m - W_m^*\|_F^2)\\
        \precsim \frac{1}{\zeta^2} \left( \frac{\hat{r}^2}{\ell^2} + \frac{\hat{r} s}{r p} + \frac{s}{\beta^2} \max_m \|\frac{\varepsilon_m X_m'}{T}\|_\infty^2 + \frac{\ell^2}{\beta^2 M r p} \sum_{m=1}^M \| \frac{\varepsilon_m X_m'}{T}\|_2^2 + \frac{r \hat{r} p}{\beta^2 \ell^2} \| \sum_{m=1}^M \frac{W_m^\dag \varepsilon_m X_m'}{MT}\|_2^2 \right).
    \end{multline}
\end{prop}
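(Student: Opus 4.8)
The plan is to exploit the local first-order optimality of the limiting stationary point $(\widehat{\mathcal{W}}, \widehat{\mathcal{S}}, \widehat{\Phi})$ established in \autoref{prop:conv}, together with the RSC condition of \autoref{ass:rsc}, to convert the optimality inequality into a bound on the estimation error. Write $\Delta_W^m = \widehat{W}_m - W_m^*$, $\Delta_S^m = \widehat{S}_m - S_m^*$, and $\Delta_\Phi = \widehat{\Phi} - \Phi^*$, and decompose the error in the transition matrices as $\widehat{A}_m - A_m^* = (\widehat{W}_m \widehat{\Phi} - W_m^* \Phi^*) + \Delta_S^m$. The cross term $\widehat{W}_m\widehat{\Phi} - W_m^*\Phi^*$ is bilinear, so I would split it as $\widehat{W}_m \Delta_\Phi + \Delta_W^m \Phi^*$ (or the symmetric variant) and control each piece; the $\|\widehat{W}_m\|$ factor is handled by the normalization $\widehat{W}$, and the $\|\Phi^*\|$-type factors are controlled through the constraint $\Phi^* \in \mathbb{L}_p(r,\ell)$, which bounds the Frobenius norm by $\ell$ and, via (B1), the entrywise norm by $\phi\ell/(\sqrt r p)$.

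The core step is the basic-inequality argument. Since the algorithm minimizes (locally) $F(\mathcal{W},\mathcal{S},\Phi;\mathcal{X},\eta) = f + \eta\sum_m\|S_m\|_1$ over the constraint set, and $(\mathcal{W}^*,\mathcal{S}^*,\Phi^*)$ is feasible (by $\Phi^*\in\mathbb{B}_p\cap\mathbb{L}_p(r,\ell)$ and $r\le\hat r$), I would compare the objective at the estimator against the truth. Expanding the least-squares term gives a quadratic-plus-linear expression: the quadratic part is $\sum_m \frac{1}{2T}\|(\widehat{A}_m - A_m^*)X_m\|_F^2$, which by RSC is lower bounded by $\frac{\beta}{2}\sum_m\|\widehat{A}_m - A_m^*\|_F^2$, and the linear part is the ``noise'' term $\sum_m \frac{1}{T}\langle \varepsilon_m X_m', \widehat{A}_m - A_m^*\rangle$. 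Using $\Phi^*$-optimality along the $\Phi$ and $W$ directions (via the $\beta_\Phi$- and $\beta_W$-convexity from \autoref{ass:lip_hes}, which give additional strong-convexity lower bounds on $\|\Delta_\Phi\|_F^2$ and $\|\Delta_W^m\|_F^2$), I would obtain an inequality of the form
\[
\zeta\Big(\|\Delta_\Phi\|_F^2 + \tfrac1M\sum_m(\|\Delta_S^m\|_F^2+\|\Delta_W^m\|_F^2)\Big) \precsim \text{(noise terms)} + \text{(bias terms from }\eta\text{ and }\|\Phi^*\|_\infty),
\]
where $\zeta$ is exactly the stated minimum of convexity-ratio constants. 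The noise terms are handled by splitting $\langle\varepsilon_m X_m', \widehat{W}_m\Delta_\Phi + \Delta_W^m\Phi^* + \Delta_S^m\rangle$ into three inner products: the $\Delta_S^m$ piece is controlled by Hölder against $\|\varepsilon_m X_m'/T\|_\infty$ on the support of size $\le s$ and the $\ell_1$ penalty; the $\Delta_\Phi$ piece, being rank $\le r+\hat r$, is controlled by $\|\cdot\|_2$-$\|\cdot\|_*$ duality, giving the $\sum_m W_m^\dag\varepsilon_m X_m'/(MT)$ aggregated term with its $r\hat r p/\ell^2$ prefactor from rank and scaling bookkeeping; and the $\Delta_W^m \Phi^*$ piece mixes these, producing the $\frac{\ell^2}{\beta^2 M rp}\sum_m\|\varepsilon_m X_m'/T\|_2^2$ term. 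Finally, an AM–GM / Young's inequality absorbs any $\|\Delta\|_F^2$ factors appearing on the right-hand side back into the left, and converting $\|\widehat{A}_m-A_m^*\|_F^2$ back into separate $\|\Delta_W^m\|_F^2$, $\|\Delta_\Phi\|_F^2$, $\|\Delta_S^m\|_F^2$ contributions (the direction in which this is delicate) yields the displayed bound after dividing by $\zeta$ — actually $\zeta^2$, since one factor of $\zeta$ comes from the strong-convexity lower bound and a second appears when the $\|\widehat W\|^2$ and rank factors are re-expressed in terms of $\ell,r,\hat r,p$ through the $\succsim$ lower bound on $\zeta$.

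The main obstacle I anticipate is the bilinear coupling between $\mathcal{W}$ and $\Phi$: unlike a standard convex M-estimation problem, the error $\widehat W_m\widehat\Phi - W_m^*\Phi^*$ is not linear in the perturbations, so the ``basic inequality'' is not immediately a quadratic form in $(\Delta_W,\Delta_\Phi)$. Getting the right cross-term bookkeeping — in particular keeping track of where the factors $\widehat W$, $\ell/(\sqrt r p)$, and the rank bounds $r$ and $\hat r$ enter, so that the final prefactors match $\hat r^2/\ell^2$, $\hat r s/(rp)$, and $r\hat r p/(\beta^2\ell^2)$ — is the technically demanding part and is exactly why the constraint spaces $\mathbb{B}_p$ (preventing $\|\widehat W_m\|$ from exploding, through \autoref{ass:exp_deg}) and $\mathbb{L}_p(\hat r,\ell)$ are needed. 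The requirement that the initialization lie in a neighborhood of the truth is used to ensure $\widehat\Phi$ stays in the same ``branch'' of the nonconvex constraint set so that the first-order conditions from \autoref{prop:conv} can be linearized around $\Phi^*$ without curvature blow-up.
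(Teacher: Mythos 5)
Your proposal follows essentially the same route as the paper's proof: a basic inequality $F(\widehat{\mathcal{W}},\widehat{\mathcal{S}},\widehat{\Phi})\le F(\mathcal{W}^*,\mathcal{S}^*,\Phi^*)$ justified by the near-truth initialization and sufficient descent, the bilinear error split (the paper takes $\Delta_{W_m}\widehat{\Phi}+W_m^*\Delta_\Phi$, which is the variant you should use so that the aggregated deviation term carries the true $W_m^\dag$), nuclear/spectral and $\ell_1/\ell_\infty$ duality with the support subspaces $\mathcal{Q}_m$, the RSC and blockwise convexity constants yielding $\zeta$, the incoherence bounds from (B1) absorbed by the choice of $\eta$, and a final Cauchy--Schwarz/squaring step (which is where the second factor of $\zeta$ actually arises, rather than from re-expressing $\widehat W$ and the rank factors). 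This matches the paper's argument in all essential steps.
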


\begin{rem}\label{rem:bound}
    The assumptions on the $\infty$-norm bounds of $\Phi^*$ and $\widehat{\Phi}$ are mild, especially given that $p$ is assumed to be large and growing with the sample size $T$; see \autoref{sec:proof_consistent} in the Supplement for further explanation and discussion.
    Moreover, since the constants in \autoref{ass:lip_hes} have become tighter at the estimate $(\widehat{\mathcal{W}}, \widehat{\mathcal{S}}, \widehat{\Phi})$, the feasible choice for the step size coefficient $\rho$ can be refined from \eqref{eq:rho_phi_lower} to $\rho \succsim \frac{M \hat{r} p}{\beta \ell^2} \max_m (\|\frac{X_m X_m'}{T}\|^2 + \| \frac{\epsilon_m X_m'}{T} \|^2)$.
\end{rem}

Then, for the deviation bounds appearing in \eqref{eq:error}, we adapt analogous results from \cite{basu2015,basu2019} to the LSPVAR model under consideration, as formalized in \autoref{prop:deviation}.

\begin{prop}\label{prop:deviation}
    Given the model setup in \eqref{eq:pVAR-model}, suppose that \Cref{ass:stat,ass:lip_hes} are satisfied. Consider a realization of the data $\{X_m\}_{m=1}^M$ and $\{\varepsilon_m\}_{m=1}^M$, and define\footnote{The quantities $\tau_{\max}$, $\tau_{\min}$, $\Psi(h_{X_m})$, $\psi(h_{X_m})$ and $\mathcal{A}_m$ are related to the characteristic polynomials and spectral densities of the individual VAR models. Their rigorous definitions are given in \autoref{sec:VAR}.}
    $$
    \begin{aligned}
        \xi & = \max_m \lambda_1(\Sigma_m) \left[ 1 + \frac{1 + \tau_{\max}(\mathcal{A}_m)}{\tau_{\min}(\mathcal{A}_m)} \right],\\
        \xi_\dag & = \max_m \lambda_1(\Sigma_m) + \max_m \frac{\lambda_1(\Sigma_m)}{\tau_{\min}(\mathcal{A}_m)} + \max_m \frac{\lambda_1(\Sigma_m) \tau_{\max}(\mathcal{A}_m)}{\tau_{\min}(\mathcal{A}_m)}.
    \end{aligned}
    $$
    Assume $\log(p) \succsim \log(M)$. Then, there exist constants $c_i > 0$ ($i=1,2,3$) such that:
    \begin{enumerate}
        \item for $T \succsim p$ and $T \succsim \log(p)$, with probability at least $1 - c_2 \exp(-c_3 \log(p))$,
        $$
        \max_m \|\frac{\varepsilon_m X_m'}{T}\|_2 \le c_1 \xi \sqrt{\frac{p}{T}}, \qquad \max_m \|\frac{\varepsilon_m X_m'}{T}\|_\infty \le c_1 \xi \sqrt{\frac{\log(p)}{T}};
        $$

        \item for $T \succsim p$, with probability at least $1 - c_2 \exp(-c_3 \log(p))$,
        $$\|\sum_{m=1}^M \frac{W_m^\dag \varepsilon_m X_m'}{MT}\|_2 \le c_1 \xi_\dag \sqrt{\frac{p}{MT}};$$

        \item for $T \succsim p \Psi^2(h_{X_m}) / \psi^2(h_{X_m})$, with probability at least $1 - c_2 \exp(-c_3 \log(p))$,
        $$
        \min_m \lambda_p(\frac{X_m X_m'}{T}) \ge \min_m \frac{1}{4 \pi} \cdot \frac{\lambda_p(\Sigma_m)}{\tau_{\max}(\mathcal{A}_m)}.
        $$
    \end{enumerate}
\end{prop}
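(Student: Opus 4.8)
The plan is to reduce all three statements to known concentration inequalities for a \emph{single} stable Gaussian VAR process---namely the spectral-density--based deviation bounds of \cite{basu2015,basu2019}, restated in the notation of \autoref{sec:VAR}---and then to handle the panel structure either by a union bound over the $M$ entities (parts 1 and 3) or by exploiting independence across entities (part 2). By \autoref{ass:stat} each series $\{X_t^m\}_t$ is stable, hence weakly stationary with spectral density whose eigenvalues are controlled by $\tau_{\min}(\mathcal{A}_m)$ and $\tau_{\max}(\mathcal{A}_m)$, and model \eqref{eq:pVAR-model} supplies Gaussian innovations $\varepsilon_t^m\sim N(0,\Sigma_m)$.

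\textbf{Parts (1) and (3).} For a fixed $m$, the bound $\|\varepsilon_m X_m'/T\|_2\precsim \xi_m\sqrt{p/T}$ for $T\succsim p$, with $\xi_m:=\lambda_1(\Sigma_m)\big[1+(1+\tau_{\max}(\mathcal{A}_m))/\tau_{\min}(\mathcal{A}_m)\big]\le\xi$, is precisely the single-VAR cross-covariance estimate of \cite{basu2015}; combined with the corresponding entrywise Bernstein-type bound and a union bound over the $p^2$ entries it also gives $\|\varepsilon_m X_m'/T\|_\infty\precsim \xi_m\sqrt{\log p/T}$ for $T\succsim\log p$. The restricted-eigenvalue estimate of \cite{basu2015} likewise yields $\lambda_p(X_m X_m'/T)\ge\frac{1}{4\pi}\lambda_p(\Sigma_m)/\tau_{\max}(\mathcal{A}_m)$ once $T\succsim p\,\Psi^2(h_{X_m})/\psi^2(h_{X_m})$. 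Each holds on an event of probability at least $1-c\exp(-c'\log p)$; a union bound over $m=1,\dots,M$ multiplies the failure probability by $M$, and since $\log M\precsim\log p$ this factor is absorbed back into the exponent. Taking the worst case over $m$ and replacing $\xi_m$ and the $\tau$-quantities by their panel extremes gives the stated uniform bounds.

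\textbf{Part (2).} Write $Z:=\frac{1}{MT}\sum_{m=1}^M W_m^\dag\varepsilon_m X_m'$, a sum of \emph{mutually independent} matrices (one per entity). Bound $\|Z\|_2=\sup_{u,v\in S^{p-1}}u'Zv$ by a covering argument: if $\mathcal{N}$ is a $1/4$-net of $S^{p-1}$ with $|\mathcal{N}|\le 9^p$, then $\|Z\|_2\le 2\max_{u,v\in\mathcal{N}}u'Zv$. For fixed $u,v$, $u'Zv=\frac{1}{MT}\sum_{m,t}(u'W_m^\dag\varepsilon_t^m)\big((X_{t-1}^m)'v\big)$; within each entity $\varepsilon_t^m$ is independent of $\{X_s^m:s\le t-1\}$ and distinct entities are independent, so ordering the $(m,t)$ pairs turns this into a sum of a martingale-difference array of products of conditionally Gaussian scalars (equivalently, a bilinear form in the stacked Gaussian vector to which a Hanson--Wright inequality applies). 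Via polarization and the spectral-density bounds its variance proxy works out to be $\precsim\xi_\dag^2/(MT)$---the single-entity rate $\xi^2/T$ improved by the factor $1/M$ from averaging independent entities, with $\xi_\dag$ of ``sum-of-maxes'' type absorbing the entity-specific $\Sigma_m,\mathcal{A}_m$, and $\|W_m^\dag\|\le 1$ (since $W=\max_m\|W_m^*\|$) keeping the weighting harmless. A sub-exponential tail then gives $\mathbb{P}\big(|u'Zv|>c\,\xi_\dag\sqrt{x/(MT)}\big)\le 2\exp(-c'x)$ for $x\precsim MT$; taking $x\asymp p$ and union-bounding over $\mathcal{N}\times\mathcal{N}$ (size $\le 81^p$) needs only $T\succsim p$ and produces failure probability $\le c_2\exp(-c_3 p)\le c_2\exp(-c_3\log p)$, hence $\|Z\|_2\precsim\xi_\dag\sqrt{p/(MT)}$.

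The main obstacle is part (2): one has to adapt the single-series martingale/Gaussian-chaos machinery of \cite{basu2015} to a double sum that superposes within-entity temporal dependence (handled, as in the original argument, by conditioning on the past) with cross-entity aggregation weighted by the $W_m^\dag$. The two delicate points are (i) verifying that the weighting by $W_m^\dag$, of spectral norm at most one after normalizing by $W$, does not inflate the variance beyond the $\xi_\dag^2$ scale, so the genuine $1/(MT)$ (rather than merely $1/T$) rate is attained; and (ii) controlling the discretization error of the net so that the modest requirement $T\succsim p$ suffices. Parts (1) and (3) are then essentially bookkeeping on top of the cited single-VAR results, the only subtlety being the harmless inflation of the failure probability by $M$, which is absorbed because $\log M\precsim\log p$.
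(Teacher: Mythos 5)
Your proposal is correct and follows essentially the same route as the paper: parts (1) and (3) are handled by invoking the single-VAR deviation bounds of \cite{basu2015,basu2019} and absorbing the union bound over $m$ via $\log M \precsim \log p$, and part (2) is proved by reducing the spectral norm to scalar bilinear forms over a net, polarizing $u'W_m^\dag\varepsilon_m X_m'v$ into quadratic forms of the processes $X_m$, $W_m^\dag\varepsilon_m$, and their sum, and applying a Hanson--Wright-type concentration for the stacked (block-diagonal, cross-entity independent) Gaussian vector, which delivers the $MT$-scale deviation with $\xi_\dag$ arising exactly from the three spectral-density bounds you list. The paper packages this scalar concentration as its Lemma on sums of independent stationary Gaussian quadratic forms, but that is the same mechanism you describe.
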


Combining \Cref{prop:upper} and \Cref{prop:deviation}, and noting that $\xi \le \xi_\dag$, we obtain the following theorem, which establishes the high-probability consistency of the proposed estimator.

\begin{thm}\label{thm:consist}
    Under the setting of \Cref{prop:upper,prop:deviation}, assume there exists a constant $\iota$ such that $\hat{r} \le \iota r$, and set$\ell = \sqrt{\hat{r} p}$. Then, there exist universal positive constants $C_i$ for $i = 1, 2, 3, 4$, such that the solution obtained from \autoref{algo:iter} satisfies, with probability at least $1 - C_3 \exp( - C_4 \log(p))$,
    \begin{multline}\label{eq:error_plug}
        \|\widehat{\Phi} - \Phi^*\|_F^2 + \frac{1}{M} \sum_{m=1}^M (\|\widehat{S}_m - S_m^*\|_F^2 + \|\widehat{W}_m - W_m^*\|_F^2)\\
        \le C_1 \iota \cdot \frac{r + s}{p} + C_2 \xi_\dag^2 \max_m \frac{\tau_{\max}^2(\mathcal{A}_m)}{\lambda_p^2(\Sigma_m)} \big( \frac{s\log(p) + \iota p}{T} + \frac{r p}{M T} \big).
    \end{multline}
\end{thm}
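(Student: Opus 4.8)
The plan is to treat \Cref{thm:consist} as a corollary of \Cref{prop:upper,prop:deviation}: I substitute the high-probability deviation bounds of \Cref{prop:deviation} into the deterministic inequality \eqref{eq:error}, and then specialize to the choices $\ell = \sqrt{\hat r p}$ and $\hat r \le \iota r$. The first preparatory step is to record what these choices do to the auxiliary quantities appearing in \Cref{prop:upper}. Since $\ell^2 = \hat r p$, the lower bound for $\zeta$ reads $\zeta \succsim \min\{ rp/\ell^2,\, \ell^2/(\hat r p),\, 1\} = \min\{ r/\hat r, 1\} = r/\hat r \ge 1/\iota$, so $1/\zeta^2$ contributes at most a fixed power of $\iota$; the prescribed range $\eta \ge 4\beta\phi\widehat W\ell/(\sqrt r p) + \max_m\|\varepsilon_m X_m'/T\|_\infty$ is nonempty for an $\eta$ of order $\sqrt{1/p} + \sqrt{\log p / T}$, because $\ell/(\sqrt r p) = \sqrt{\hat r/(rp)} \le \sqrt{\iota/p}$; and hypotheses (B1)--(B2) reduce to the mild requirements $\|\Phi^*\|_\infty,\, \|\widehat\Phi\|_\infty \precsim \sqrt{1/p}$ flagged in \Cref{rem:bound}.

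Next I would process the five summands on the right of \eqref{eq:error} in turn, using $\ell^2 = \hat r p$. The two deterministic terms become $\hat r^2/\ell^2 = \hat r/p \le \iota r/p$ and $\hat r s/(rp) \le \iota s/p$, which combine into the bias term $\iota(r+s)/p$. For the stochastic terms I invoke \Cref{prop:deviation}: part (1) bounds $\max_m\|\varepsilon_m X_m'/T\|_\infty \precsim \xi\sqrt{\log p/T}$ and $\sum_m\|\varepsilon_m X_m'/T\|_2^2 \precsim M\xi^2 p/T$; part (2) bounds $\|\sum_m W_m^\dag\varepsilon_m X_m'/(MT)\|_2 \precsim \xi_\dag\sqrt{p/(MT)}$; and part (3) gives $\beta \succsim \min_m \lambda_p(\Sigma_m)/\tau_{\max}(\mathcal{A}_m)$, hence $1/\beta^2 \precsim \max_m \tau_{\max}^2(\mathcal{A}_m)/\lambda_p^2(\Sigma_m)$. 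Inserting these, the term $\tfrac{s}{\beta^2}\max_m\|\cdot\|_\infty^2$ turns into $\xi^2\max_m\tfrac{\tau_{\max}^2}{\lambda_p^2}\cdot\tfrac{s\log p}{T}$; the term $\tfrac{\ell^2}{\beta^2 Mrp}\sum_m\|\cdot\|_2^2$ turns into $\tfrac{\hat r}{r}\xi^2\max_m\tfrac{\tau_{\max}^2}{\lambda_p^2}\cdot\tfrac{p}{T} \le \iota\,\xi^2\max_m\tfrac{\tau_{\max}^2}{\lambda_p^2}\cdot\tfrac{p}{T}$; and the term $\tfrac{r\hat r p}{\beta^2\ell^2}\|\cdot\|_2^2$ turns into $\xi_\dag^2\max_m\tfrac{\tau_{\max}^2}{\lambda_p^2}\cdot\tfrac{rp}{MT}$. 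Using $\xi \le \xi_\dag$ everywhere, carrying the $1/\zeta^2$ prefactor, and collecting numerical constants into $C_1,\dots,C_4$ while tracking the $\iota$-dependence explicitly reproduces the form \eqref{eq:error_plug}.

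Finally, for the probability statement I intersect the three events of \Cref{prop:deviation}, each of probability at least $1 - c_2\exp(-c_3\log p)$ under the standing sample-size conditions $T\succsim p$, $T\succsim\log p$, $T\succsim p\,\Psi^2(h_{X_m})/\psi^2(h_{X_m})$ and $\log p\succsim\log M$ inherited from that proposition; a union bound gives the claimed $1 - C_3\exp(-C_4\log p)$, and on this event all deviation inequalities used above hold simultaneously, including the lower bound on $\beta$ that validates \Cref{ass:rsc}. The step that needs genuine care rather than routine substitution is the bookkeeping: one must confirm that $\zeta$ (equivalently the curvature ratios $\beta_\Phi/(M\beta)$ and $\beta_W/\beta$) does not degrade faster than a fixed power of $\iota$ under $\ell=\sqrt{\hat r p}$, that the quantity $\widehat W$ entering the admissible range for $\eta$ (and the refined step-size bound of \Cref{rem:bound}) stays bounded on the good event — which follows since $\widehat W = \max(\max_m\|\widehat W_m\|, W)$ and the bound \eqref{eq:error} itself controls $\max_m\|\widehat W_m - W_m^*\|$ — and that the leading $\iota$-powers reduce to those displayed in $C_1\iota$ and in the $\iota p/T$ term.
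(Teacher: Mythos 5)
Your proposal is correct and follows essentially the same route as the paper: the theorem is obtained by plugging the high-probability deviation bounds of \autoref{prop:deviation} (together with the lower bound on $\beta$ from its third part and the fact $\xi \le \xi_\dag$) into the deterministic inequality \eqref{eq:error} of \autoref{prop:upper}, then specializing to $\ell = \sqrt{\hat r p}$ and $\hat r \le \iota r$ and taking a union bound over the three deviation events. Your term-by-term bookkeeping, the estimate $\zeta \succsim r/\hat r$, and the control of $\widehat W$ match the paper's intended argument (cf.\ \autoref{claim:rates}), with any residual powers of the constant $\iota$ absorbed into the constants as the paper does.
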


The roadmap to establish \autoref{thm:consist} is as follows. First, \autoref{prop:upper} establishes an upper bound on the estimation errors, which is dependent on the deviation bounds of the time series data. Next, the deviation bounds are controlled with high-probability based on \autoref{prop:deviation}. Then, the consistency rate is obtained in a straightforward manner, by plugging the obtained bounds back in \eqref{eq:error} and selecting appropriate values for $\ell$ and $\hat{r}$.

\begin{rem}\label{rem:error}
    Note that $\iota$ is treated more like a pseudo-constant and serves as a guiding reference for selecting $\hat{r}$. The derived error bound consists of two distinct parts. The first part arises from the inherent non-identifiability of the model in separating the low-rank and sparse components. It depends only on the rank $r$, the sparsity $s$ and the overspecification $\iota$. It will not vanish even with increasing sample size, but remains small under a reasonably chosen $\iota$ in correctly specified settings, where $r \ll p$ and $s \ll p$. The second term reflects  the randomness of the data and vanishes as the sample size (time series length $T$) increases. In particular, the term $\frac{s\log(p) + \iota p}{T}$ can be interpreted as the proxy parametric convergence rate of estimating $(W_m, S_m)$, whose effective degrees of freedom are approximately $DOF\approx s \log(p) + p$, based on time series of length $T$ for each entity. Similarly, the term $\frac{r p}{MT}$ corresponds to the rate of estimating the shared low-rank component $\Phi$, with $\mathnormal{DOF} \approx r p$, using the entire panel $\{X_m\}_{m=1}^M$ of size $M T$. The signal-to-noise ratio of the model further influences the second term through factors such as $\max_m \frac{\tau_{\max}^2(\mathcal{A}_m)}{\lambda_p^2(\Sigma_m)}$ and $\xi_\dag$.
\end{rem}

As previously noted,  there exists a source of indeterminacy between the diagonal matrices $\mathcal{W}$ and the low-rank structure $\Phi$, as scaling both factors in the product $W_m \Phi$ by the same constant leaves the transition matrices $A_m$ unchanged. While the constraint set $\mathbb{B}_p \cap \mathbb{L}_p(\hat{r}, \ell)$ serves to fix the overall magnitude of $\widehat{\Phi}$, there remains a possibility that part of $\widehat{\Phi}$'s nuclear norm may ``leak" outside the primary rank-$r$ subspace of the true low-rank component $\Phi^*$. To account for this, we present the following corollary, which provides classical consistency guarantees for the combination of $\widehat{\mathcal{W}}$ and $\widehat{\Phi}$. A proof sketch is included in \autoref{sec:proof_consistent}.

\begin{cor}\label{cor:error}
    Under the Assumptions of \autoref{thm:consist} and in addition assuming  $\eta \ge \frac{2 \beta \phi \widehat{W} \ell}{\sqrt{r} p} + \max_m \|\frac{\varepsilon_m X_m'}{T}\|_\infty$, there exist universal positive constants $C_i$ for $i = 1, 2, 3, 4$, such that the solution obtained from \autoref{algo:iter} satisfies, with probability at least $1 - C_3 \exp( - C_4 \log(p))$,
    \begin{equation}\label{eq:error_cor}
        \frac{1}{M} \sum_{m=1}^M (\|\widehat{S}_m - S_m^*\|_F^2 + \|\widehat{W}_m \widehat{\Phi} - W_m^* \Phi^*\|_F^2) \le C_1 \cdot \frac{\iota s}{p} + C_2 \xi^2 \cdot \max_m \frac{\tau_{\max}^2(\mathcal{A}_m)}{\lambda_p^2(\Sigma_m)} \cdot \frac{s\log(p) + \iota r p}{T}.
    \end{equation}
\end{cor}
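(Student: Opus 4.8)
The plan is to re-run the analysis of \autoref{prop:upper}, but tracking the combined quantity $\widehat{W}_m\widehat{\Phi} - W_m^*\Phi^*$ directly instead of the individual deviations $\widehat{\Phi}-\Phi^*$ and $\widehat{W}_m - W_m^*$. The key observation is that the two error terms in \eqref{eq:error_plug} that do \emph{not} vanish with sample size, namely $C_1 \iota (r+s)/p$ and the $rp/(MT)$ term, both arise from controlling the ``leakage'' of $\widehat{\Phi}$'s nuclear norm outside the rank-$r$ principal subspace of $\Phi^*$ together with the bias introduced by the $\infty$-norm constraint (B1). When one measures error in the product $W_m\Phi$ rather than in $\Phi$ alone, part of this leakage is absorbed: the quantity $W_m^*\Phi^*$ is itself low rank with rank at most $r$, and the estimator $\widehat{W}_m\widehat{\Phi}$ need only be close to it in Frobenius norm, so one can apply the standard ``low-rank plus sparse'' decomposition bound (as in \cite{basu2019,neghaban2011estimation}) to the pair $(\widehat{W}_m\widehat{\Phi}, \widehat{S}_m)$ against $(W_m^*\Phi^*, S_m^*)$ directly. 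First I would write the first-order stationarity condition from \autoref{prop:conv} for the $(\mathcal{W},\mathcal{S})$-block at the limiting point, pairing $\widehat{W}_m\widehat{\Phi}$ with the nuclear-norm geometry and $\widehat{S}_m$ with the $\ell_1$ geometry. Then, using the RSC condition (\autoref{ass:rsc}) on the residual $\Delta_m := (\widehat{W}_m\widehat{\Phi} + \widehat{S}_m) - (W_m^*\Phi^* + S_m^*)$, I would obtain a basic inequality of the form $\frac{\beta}{2}\sum_m \|\Delta_m\|_F^2 \lesssim \sum_m \langle \tfrac{\varepsilon_m X_m'}{T}, \Delta_m\rangle + (\text{regularizer cross terms})$.

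The second step is to split $\Delta_m$ into its low-rank part $L_m := \widehat{W}_m\widehat{\Phi} - W_m^*\Phi^*$ and sparse part $E_m := \widehat{S}_m - S_m^*$, and decompose each along the relevant model subspaces — $L_m$ along the column/row spaces of $W_m^*\Phi^*$ (which has rank $\le r$, so the ``on-subspace'' part has rank $\le 2r$) and $E_m$ along the support of $S_m^*$ (of size $\le s$). The choice $\eta \ge \frac{2\beta\phi\widehat{W}\ell}{\sqrt r p} + \max_m \|\tfrac{\varepsilon_m X_m'}{T}\|_\infty$ (note the factor $2$ rather than $4$, reflecting that here we no longer pay the extra bias of comparing $\widehat\Phi$ to $\Phi^*$ on a mismatched scale) dominates the $\infty$-norm deviation plus the $\infty$-norm bound on the low-rank components from (B1), which is the standard condition needed to push the off-support/off-subspace error into a cone. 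This yields the cone condition $\|\Delta_m\|_* \lesssim \sqrt{r}\,\|L_m\|_F + \|E_m\|_1 \lesssim \sqrt{r}\,\|L_m\|_F + \sqrt{s}\,\|E_m\|_F$ on the relevant part, after which a standard peeling/Cauchy–Schwarz argument on the noise term $\sum_m\langle \tfrac{\varepsilon_m X_m'}{T},\Delta_m\rangle$ — bounding it by $\max_m\|\tfrac{\varepsilon_m X_m'}{T}\|_2 \cdot (\sqrt r + \ldots)\|\Delta_m\|_F$ — closes the inequality and gives $\frac{1}{M}\sum_m\|\Delta_m\|_F^2 \lesssim \frac{s}{\beta^2}\max_m\|\tfrac{\varepsilon_m X_m'}{T}\|_\infty^2 + \frac{rp}{\beta^2}(\ldots) + (\text{bias from (B1)})$, where the bias term is now $\lesssim \frac{\iota s}{p}$-type rather than $\frac{\iota(r+s)}{p}$-type because only the sparse-component incoherence bias survives. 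Finally I would invoke \autoref{prop:deviation} to replace $\max_m\|\tfrac{\varepsilon_m X_m'}{T}\|_\infty^2$ by $c_1^2\xi^2 \log(p)/T$ and $\max_m\|\tfrac{\varepsilon_m X_m'}{T}\|_2^2$ by $c_1^2\xi^2 p/T$ (using $\xi \le \xi_\dag$ so the statement can be phrased with $\xi$), and substitute $\ell = \sqrt{\hat r p}$ with $\hat r \le \iota r$, which produces exactly the right-hand side of \eqref{eq:error_cor}.

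The main obstacle I anticipate is the bookkeeping around the $\widehat{W}_m\widehat{\Phi}$ term in the stationarity/basic inequality: because $\widehat{W}_m$ and $\widehat{\Phi}$ are updated in separate ADMM blocks, the first-order condition from \autoref{prop:conv} is naturally stated in terms of subgradients with respect to $W_m$ and $\Phi$ separately, not with respect to the product, so one must carefully combine these — using that $\widehat{\Phi}$ lies in $\mathbb{B}_p\cap\mathbb{L}_p(\hat r,\ell)$ and is bounded below in the sense of \eqref{eq:wphi_nonsing} — to produce a usable inequality for $L_m = \widehat{W}_m\widehat{\Phi}-W_m^*\Phi^*$. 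A related subtlety is ensuring the cone/incoherence argument still applies when the ``low-rank'' object $\widehat W_m\widehat\Phi$ has rank up to $\hat r \le \iota r$ rather than $r$: this is where the assumption $\hat r \le \iota r$ is essential, since it keeps the effective rank within a constant factor of $r$ and hence the $\sqrt{r}$ factors in the cone bound only inflate by $\sqrt{\iota}$. Everything else — the RSC step, the peeling on the noise term, and the final substitutions — is routine given the machinery already developed for \autoref{prop:upper} and \autoref{thm:consist}, so I would present the corollary's proof as a relatively short adaptation, which is consistent with the paper promising only a ``proof sketch'' in \autoref{sec:proof_consistent}.
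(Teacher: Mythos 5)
Your plan matches the paper's proof essentially step for step: the paper also works with $\Delta_m = \widehat{W}_m\widehat{\Phi} - W_m^*\Phi^*$ directly, derives the same basic inequality (from the objective-value comparison $F(\widehat{\mathcal{W}},\widehat{\mathcal{S}},\widehat{\Phi}) \le F(\mathcal{W}^*,\mathcal{S}^*,\Phi^*)$ as in \autoref{prop:upper}), uses RSC together with the incoherence bound $\|\Delta_m\|_\infty \le \frac{2\phi\widehat{W}\ell}{\sqrt{r}p}$ and the factor-$2$ choice of $\eta$ to kill the off-support term, converts norms via $\|\Delta_m\|_* \le \sqrt{2\hat{r}}\,\|\Delta_m\|_F$ and $\|\Delta_{S_m}^{\mathcal{Q}_m}\|_1 \le \sqrt{s}\,\|\Delta_{S_m}\|_F$, and then plugs in \autoref{prop:deviation} with $\ell=\sqrt{\hat{r}p}$ and $\hat{r}\le\iota r$. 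Your anticipated obstacle about recombining the separate $W_m$- and $\Phi$-block subgradients does not actually arise, since the paper (like you ultimately do) starts from the function-value/basic-inequality route rather than the blockwise stationarity conditions, and it bounds $\|\Delta_m\|_*$ by the rank of $\Delta_m$ itself rather than via a cone decomposition.
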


\begin{rem}
    The error bound in \eqref{eq:error_cor} aligns with analogous results in the literature for a single VAR model; see, e.g., \cite{basu2019}. The main difference between the right-hand side bound in \autoref{cor:error} and that in \autoref{thm:consist} lies in the presence of the factor $\frac{1}{M}$ in the rates $\frac{\hat{r} p}{T}$ versus $\frac{\hat{r} p}{MT}$. This is due to the $\Phi$-relevant error bounds being influenced by idiosyncratic rescaling factors $\mathcal{W}$. Hence, the bound \eqref{eq:error_cor} is less sharp, requiring $T \gg r p$, compared to $T \gg \frac{r p}{M}$ in \eqref{eq:error_plug}.
\end{rem}

\section{Performance Evaluation}\label{sec:simu}

The proposed LSPVAR parameter estimation strategy based on \autoref{algo:iter} is assessed through numerical experiments with synthetic data. he simulation studies in \autoref{subsec:rank} primarily focus on practical strategies for selecting the input rank $\hat r$. Additionally, \autoref{subsec:cluster} explores variants of the scenario described in \autoref{ex:mixture}, highlighting the model’s capability to uncover heterogeneous patterns across a panel of VAR models.

To select the optimal penalty coefficient $\eta$, we perform a grid search over a geometric sequence covering an appropriate range, aiming to minimize the Bayesian Information Criterion (BIC). For each estimate $(\widehat{\mathcal{W}}, \widehat{\mathcal{S}}, \widehat{\Phi})$ corresponding to a given $\eta$, we calculate  $\mathnormal{RSS}_\eta = \sum_{m=1}^M \|Y_m - (\widehat{W}_m \widehat{\Phi} + \widehat{S}_m) X_m\|_F^2$, and define the model degrees of freedom $\hat d$ as
$$
\hat d = (2 p - \rank(\widehat{\Phi})) \cdot \rank(\widehat{\Phi}) + p (M - 1) + \sum_{m=1}^M \|\widehat{S}_m\|_0.
$$
The BIC is then computed as 
$$
\mathnormal{BIC}_\eta = M T p \cdot \log(\frac{\mathnormal{RSS}_\eta}{M T p}) + \hat d \cdot \log(M T).
$$
Besides information criteria, we also consider several classical metrics to evaluate the performance of our approach. Specifically, we assess sensitivity, specificity, and overall accuracy to study the sparse recovery of $\mathcal{S}$.
Additionally, we compute the relative errors of the coefficient matrices and the Frobenius norms of the normalized components to evaluate the overall quality of the algorithm’s estimates. These metrics are provided in \autoref{tab:rank} in \autoref{subsec:fig_tab}.

\subsection[Choice of Input Rank]{Choice of Input Rank $\hat r$}\label{subsec:rank}

Next, the performance of \autoref{algo:iter} is evaluated based on different choices for $\hat{r}$,  while also detailing tuning parameter selection guidelines for $\eta$ and $\rho$. The setting considered is $(M, p, r, s) = (20, 40, 5, 30)$ with $T = 2 r p = 400$. Candidate ranks are $
\hat r \in \{3, 5, 10, 15, 20, 25, 30, 35, 40\}$, and $\ell = \sqrt{\hat{r} p}$ based on \autoref{thm:consist}. The step size is selected as $\rho \in \{\frac{M}{20}, \frac{M}{5}, M\}$ based on \eqref{eq:rho_phi_lower} and \autoref{rem:bound}. The grid search interval for $\eta$ is set as $[4 \times 10^{-2}, 2.5 \times 10^{-1}]$, selected from a coarser pilot run.

\autoref{fig:BIC} illustrates how key performance metrics vary with the tuning parameter $\eta$, input rank $\hat{r}$, and step $\rho$. Focusing on $\eta$, with  other parameters fixed, we observe that the minimal BIC is consistently attained around $\eta \approx 10^{-2}$, which coincides with the minimum relative Frobenius error of the coefficient matrices. Additionally, \autoref{fig:BIC} demonstrates that the sparse component's recovery accuracy is also maximized near this $\eta$ value. These results validate the grid search procedure as an effective strategy for tuning $\eta$ in practice, even without prior knowledge of the sparsity level. The other two parameters $\hat{r}$ and $\rho$, based on \autoref{fig:BIC} exhibit minimal influence on performance, provided $\hat{r} \ge r$ and $\rho = O(M)$. 
This observation is consistent with our theoretical discussion of $\hat{r}$ and $\rho$ in \autoref{thm:consist} and \Cref{rem:bound,rem:error}.

It is worth emphasizing that the setting with $T = 2 r p$ under consideration, is challenging for either a single low-rank plus sparse model, or a general VAR model, due to lack of adequate sample size. Nevertheless, our empirical results demonstrate that \autoref{algo:iter} accurately recovers the LSPVAR model parameters.

\begin{figure}[htbp]
    \centering
    \includegraphics[width=\textwidth]{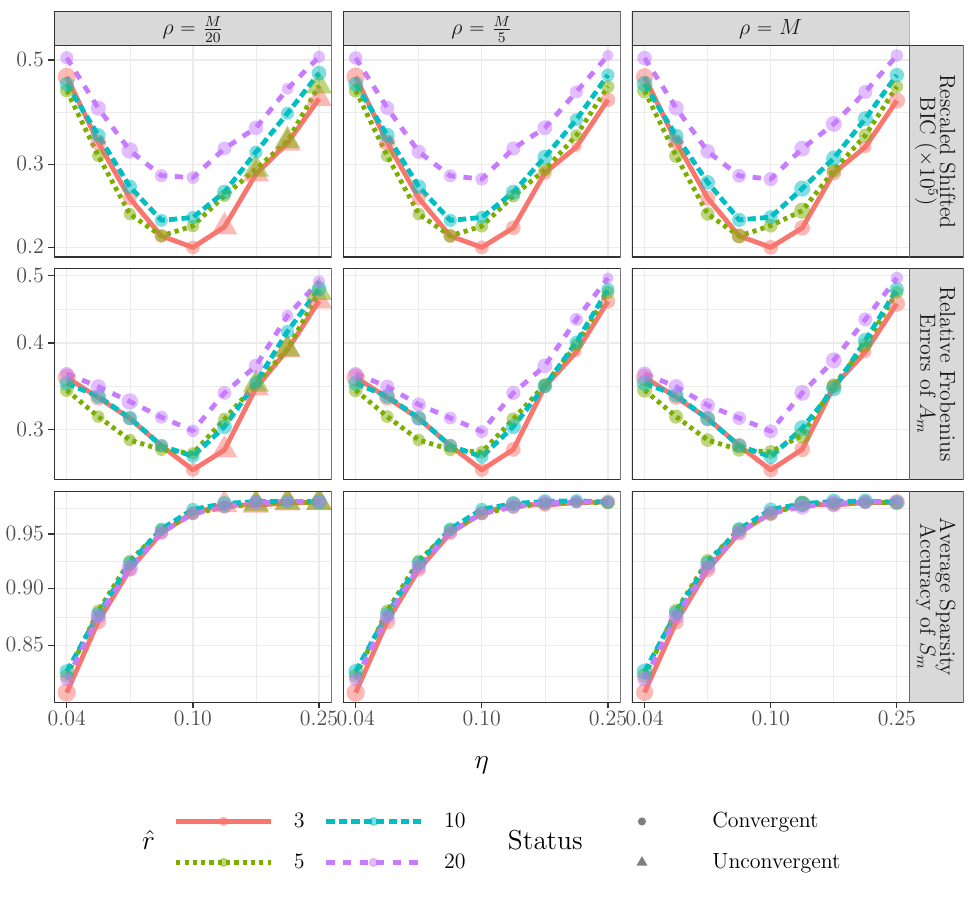}
    \caption{{\small The trajectories of BIC, relative errors of $A_m$ and sparse recovery accuracy of $S_m$ as functions of $\eta$ are depicted. For better visualization, the BIC values are shifted and rescaled. The size of each scatter point reflects the number of iterations required for convergence, while triangular shaped points indicate cases where \autoref{algo:iter} fails to meet the convergence tolerance $\epsilon = 5 \times 10^{-6}$ within the maximum allowed $N = 4\times10^5$ iterations.}}
    \label{fig:BIC}
\end{figure}

Based on the above findings and discussion, a choice of $\hat{r} \ge r$ and $\rho = O(M)$, is recommended in practice.

\subsection{Assessing Panel Heterogeneity}\label{subsec:cluster}

As stated in \autoref{sec:intro}, LSPVAR can effectively infer latent heterogeneity in the panel based on the obtained estimates from \autoref{algo:iter}, even in the presence of ``degenerate configurations", where some of the constituent models in the panel exhibit pure sparse or pure low-rank structure.

\noindent\textbf{Revisiting the model in \autoref{ex:mixture}}: recall the setup with $(M, p, r, s, T) = (20, 40, 5, 30, 400)$. The structure of the VAR models within the panel is as follows:
\begin{enumerate}
    \item Two clusters with nonsingular diagonal matrices $W_{1} = \dots = W_{5} \ne 0$, $W_{6} = \dots = W_{10} \ne 0$, and non-restrictive sparse matrices $\{S_m\}_{m=1}^{10}$. We refer to these as \emph{Cluster 1} and \emph{Cluster 2}, respectively.

    \item One cluster of purely sparse structure, i.e., the diagonal matrices $W_{11} = \dots = W_{14} = 0$. This is labeled as \emph{Singular W}.
    
    \item One cluster with identical diagonal matrices $W_{15} = \dots = W_{18} \ne 0$ and singular sparse matrices $S_{15} = \dots = S_{18} = 0$. This is labeled as \emph{Singular S}.

    \item Two isolated entities, \emph{Isolate 1} and \emph{Isolate 2},  whose diagonal matrices and sparse matrices are different than all the previous ones. 
\end{enumerate}

The estimators are obtained by minimizing the BIC through grid search over $\eta$, with $(\hat{r}, \rho) = (\frac{p}{2}, \frac{M}{20})$. For visualization, \autoref{fig:PC} presents a 3-dimensional scatter plot of the leading principal components of $\widehat{\mathbf{W}} = \left( \diag(\widehat{W}_1), \dots, \diag(\widehat{W}_M) \right)$. It is noticeable from \autoref{fig:PC} that the cluster patterns and/or isolated outliers are well separated and captured by the LSPVAR estimates.

Additionally, \autoref{tab:cluster} reports evaluation metrics based on twenty five replicates of the same underlying model. Notably, Notably, the estimation errors for $W_m$ in the \textit{Singular $W$} group and for $S_m$ in the \textit{Singular $S$} group are significantly smaller, demonstrating the model's ability to uncover latent singular structures in the panel. Overall, both the error metrics and sparsity recovery results confirm the effectiveness of the proposed model and algorithm in handling heterogeneous and idiosyncratic settings.

\begin{table}[htbp]
    \centering
    {\scriptsize
\begin{tabular}{|l|r|r|r|r|r|r|}

\hline
Cluster & \begin{tabular}[c]{@{}c@{}}Average\\Relative Frobenius\\Error of~$A_m$\end{tabular} & \begin{tabular}[c]{@{}c@{}}Average\\Absolute Frobenius\\Error of~$W_m$\end{tabular} & \begin{tabular}[c]{@{}c@{}}Average\\Absolute Frobenius\\Error of~$S_m$\end{tabular} & \begin{tabular}[c]{@{}c@{}}Sparsity\\Recovery\\Accuracy\end{tabular} & \begin{tabular}[c]{@{}c@{}}Sparsity\\Recovery\\Sensitivity\end{tabular} & \begin{tabular}[c]{@{}c@{}}Sparsity\\Recovery\\Specificity\end{tabular} \\ 
\hline
Cluster 1 & 0.099 & 0.303 & 0.274 & 0.992 & 0.877 & 0.994 \\ 
\hline
Cluster 2 & 0.099 & 0.409 & 0.390 & 0.990 & 0.822 & 0.995 \\ 
\hline
Singular~$W$ & 0.081 & \textbf{0.107} & 0.264 & 0.991 & 0.875 & 0.995 \\ 
\hline
Singular~$S$ & 0.134 & 0.953 & \textbf{0.070} & 0.999 & NA & 0.999 \\ 
\hline
Isolate 1 & 0.119 & 0.271 & 0.340 & 0.988 & 0.822 & 0.993 \\ 
\hline
Isolate 2 & 0.090 & 0.774 & 1.300 & 0.986 & 0.792 & 0.993 \\
\hline
\end{tabular}}
    \caption{{\small Summary statistics by cluster based on 25 simulation replicates. The absolute errors of the rescaling effects $W_m$'s are computed under the normalization $\|\Phi\|_* = \sqrt{\hat{r} p}$, consistent with the setup in \autoref{thm:consist}.}}
    \label{tab:cluster}
\end{table}

\noindent
\textbf{A larger size heterogeneous panel:} we consider a setting with
$(M, p, r, s, T) = (50, 80, 5, 100, 2000)$, wherein the structure of the VAR models in the panel is as follows:
   two 19-entity clusters, \emph{Cluster 1} and \emph{Cluster 2}, with identical weight matrices $W_m$  within each cluster; one 5-entity cluster with \emph{Singular W} ($W_m = 0$); one 5-entity cluster with \emph{Singular S} ($S_m = 0$), and two isolated entities, \emph{Isolate 1} and \emph{Isolate 2}.

The summary of the estimated weight matrices $W_m$ based on PCA is depicted in \autoref{fig:PC_highD}. Analogously to the smaller heterogeneous panel analyzed above, the estimated weight matrices accurately capture the latent structure of the panel. Additional performance metrics are reported in \autoref{tab:cluster_highD}, demonstrating that \autoref{algo:iter} yields highly accurate estimates of the LSPVAR model paramters.

\begin{figure}[htbp]
    \centering
    \includegraphics{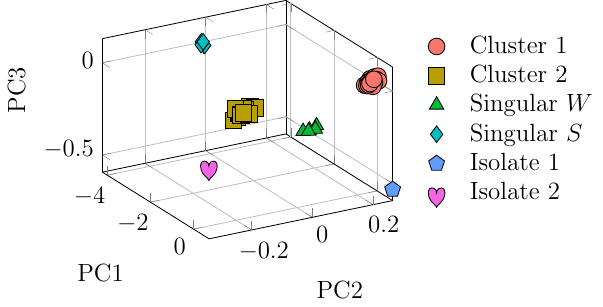}
    \caption{{\small Top three principal components of the estimated $\mathcal{W}$ for the larger heterogeneous panel comprising mixed sub-models, with parameters $(M, p, r, s, T) = (50, 80, 5, 100, 2000)$.}}
    \label{fig:PC_highD}
\end{figure}

\begin{table}[htbp]
    \centering
    {\scriptsize
\begin{tabular}{|l|r|r|r|r|r|r|}

\hline
Cluster & \begin{tabular}[c]{@{}c@{}}Average\\Relative Frobenius\\Error of~$A_m$\end{tabular} & \begin{tabular}[c]{@{}c@{}}Average\\Absolute Frobenius\\Error of~$W_m$\end{tabular} & \begin{tabular}[c]{@{}c@{}}Average\\Absolute Frobenius\\Error of~$S_m$\end{tabular} & \begin{tabular}[c]{@{}c@{}}Sparsity\\Recovery\\Accuracy\end{tabular} & \begin{tabular}[c]{@{}c@{}}Sparsity\\Recovery\\Sensitivity\end{tabular} & \begin{tabular}[c]{@{}c@{}}Sparsity\\Recovery\\Specificity\end{tabular} \\ 
\hline
Cluster 1 & 0.104 & 0.160 & 0.326 & 0.996 & 0.809 & 0.999 \\ 
\hline
Cluster 2 & 0.103 & 0.159 & 0.313 & 0.997 & 0.796 & 1.000 \\ 
\hline
Singular~$W$ & 0.187 & \textbf{0.130} & 0.372 & 0.994 & 0.694 & 1.000 \\ 
\hline
Singular~$S$ & 0.102 & 1.670 & \textbf{0.018} & 1.000 & NA & 1.000 \\ 
\hline
Isolate 1 & 0.097 & 0.357 & 0.982 & 0.995 & 0.754 & 0.998 \\ 
\hline
Isolate 2 & 0.053 & 0.201 & 0.317 & 0.996 & 0.852 & 0.999 \\
\hline
\end{tabular}}
    \caption{{\small The cluster-based summary statistics from 25 simulation replicates of the large size heterogeneous panel, with parameters $(M, p, r, s, T) = (50, 80, 5, 100, 2000)$.}}
    \label{tab:cluster_highD}
\end{table}

\section{Application to a Neuroscience Data Set}\label{sec:app}

To illustrate the usefulness in real life applications of LSPVAR, we apply it to a data set comprising EEG signals \citep{chen2008} obtained from 22 subjects. Specifically, data on 11 female and 11 male undergraduate students at Texas State University were collected (age range 18-26 years, mean age=21.1 years), while they performed the following two consecutive tasks: alternating one-minute intervals of eyes open (EO) and eyes closed (EC). The EEG signals, sampled at 256 Hz, were recorded from 71 scalp channels as illustrated in \autoref{fig:scalp}, with channel locations summarized in \autoref{tab:channels}.

For each subject, we separately extracted the EO and EC segments and filtered the alpha band signals (8–13 Hz), which are known to be relevant for visual processing tasks. We followed the preprocessing pipeline described in \citet{bai2022} and related references. The resulting dataset comprises $M = 44 = 22 \times 2$ panels, with dimension $p = 71$ and time length $T = 2000$.

The key objective is to identify potential clustering patterns corresponding to the EO and EC conditions, as well as to capture additional subject-level heterogeneity. Guided by the parameter selection discussion in \autoref{subsec:rank}, we set $(\hat{r}, \rho) = (18, 4.4)$, approximately $(\frac{p}{4}, \frac{M}{10})$, and estimate the parameters $W_m$, $\Phi$, and $S_m$ accordingly. Similar to the setup in \autoref{ex:mixture}, we summarize the results using PCA of the estimated $W_m$ matrices in \autoref{fig:EEG_W} and \autoref{fig:W_box}. The findings reveal considerable heterogeneity across subjects, particularly reflected in the $W_m$ scaling components. Notably, male subjects exhibit a somewhat greater degree of heterogeneity than females, an effect that is even more pronounced in the alpha band data.

\begin{figure}[htbp]
    \centering
    \includegraphics[width=0.4\linewidth]{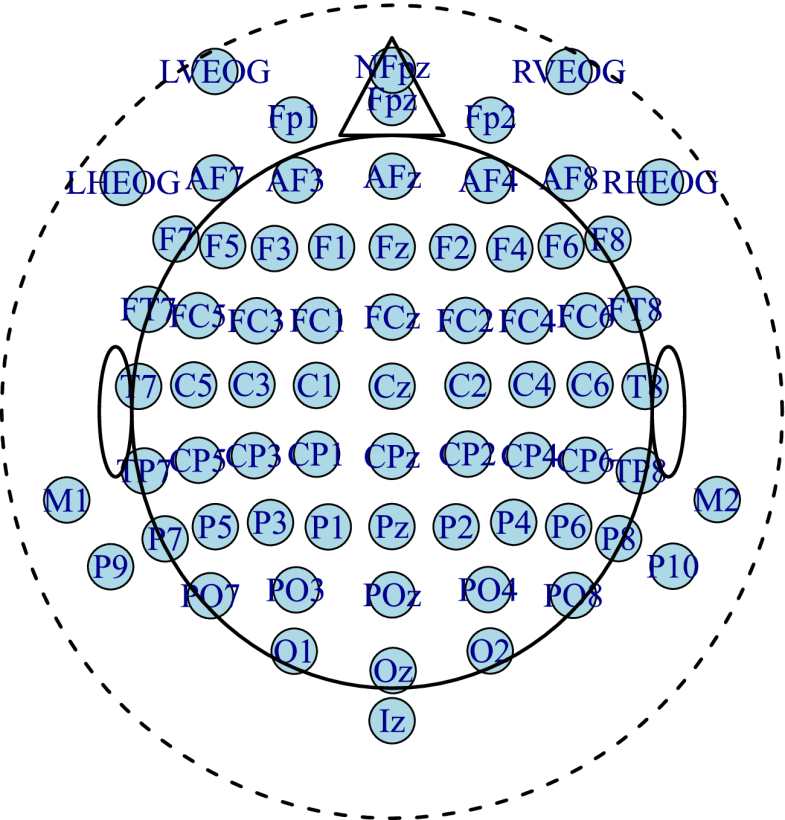}
    \caption{The abbreviations and scalp locations of the 71 EEG channels in the data set are summarized below. The illustration figure is reproduced from \cite{bai2022}.}
    \label{fig:scalp}
\end{figure}

\begin{table}[htbp]
    \centering
    \begin{tabular}{|m{.25\linewidth}|m{.7\linewidth}|}
    \hline
    Eye-Movement: & LHEOG, LVEOG, RVEOG, RHEOG;\\
    \hline
    Front: & NFpz, Fp1 - Fp2, AF7 - AF8, F7 - F8;\\
    \hline
    Central: & FC1, FCz, FC2, C1, Cz, C2, CP1, CPz, CP2;\\
    \hline
    Central-Left: & FT7, FC5, FC3, T7, C5, C3, M1, TP7, CP5, CP3;\\
    \hline
    Central-Right: & FT8, FC6, FC4, T8, C6, C4, M2, TP8, CP6, CP4;\\
    \hline
    Posterior: & P9 - P10, PO7 - PO8, O1 - O2, Iz.\\
    \hline
    \end{tabular}
    \caption{The 71 EEG channels are classified into the 5 classes for collective analysis and inferences.}
    \label{tab:channels}
\end{table}

To delineate further differences, selected channels from \autoref{tab:channels} are examined in greater detail. Specifically, boxplots of the rescaling factors $\{\widehat{W}_m[i]: m = 1, \dots, M\}$ for these channels are depicted in \autoref{fig:W_box}. It can be seen that the signs of the estimated rescaling factors are overall stable amongst subjects. Further, in the alpha band, greater variability is exhibited in the EC condition compared to the EO, but also smaller magnitudes. These observations align with findings reported in the literature \citep{chen2008}.

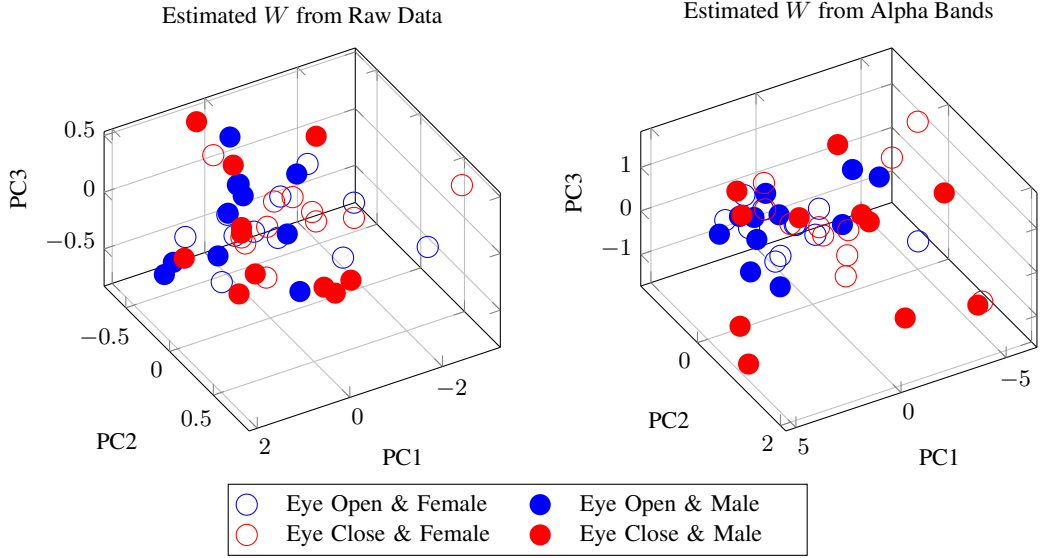
\begin{figure}[htbp]
	\centering
	\begin{tikzpicture}

\pgfplotsset{every axis legend/.append style={
    cells={anchor=west, text width=9em, align=left}, 
    at={(2,0.5)},
    anchor=north,
    legend columns=2,
    legend cell align=left, 
    column sep=1em, 
}}

\pgfplotsset{every axis/.append style={
    legend entries={Eye Open \& Female, Eye Open \& Male, Eye Close \& Female, Eye Close \& Male},
    legend to name=sharedlegend 
}}

\begin{axis}[
    name=plot1, 
    at={(0,0)}, 
    anchor=south west, 
    view={150}{60}, 
    xlabel = PC1, ylabel = PC2, zlabel = PC3,
    z post scale=1.6,
    grid=major,
    title = {Estimated $W$ from Raw Data},
    scatter/classes={%
      EOf={mark=o,blue,mark size=4pt},%
      EOm={mark=*,blue,mark size=4pt, draw opacity=0},%
      ECf={mark=o,red,mark size=4pt},%
      ECm={mark=*,red,mark size=4pt, draw opacity=0}%
    },
    width = .48\textwidth,
]
\addplot3[scatter, only marks, point meta=explicit symbolic]
table[
    x=PC1,
    y=PC2,
    z=PC3,
    meta=meta,
    col sep=comma
]{data/W_RD_PC.csv};
\end{axis}

\begin{axis}[
    at={(.5\textwidth,0)}, 
    anchor=south west,
    view={150}{60},
    xlabel = PC1, ylabel = PC2, zlabel = PC3,
    z post scale=1.6,
    grid=major,
    title = {Estimated $W$ from Alpha Bands},
    scatter/classes={%
      EOf={mark=o,blue,mark size=4pt},%
      EOm={mark=*,blue,mark size=4pt, draw opacity=0},%
      ECf={mark=o,red,mark size=4pt},%
      ECm={mark=*,red,mark size=4pt, draw opacity=0}%
    },
    width = .48\textwidth,
]
\addplot3[scatter, only marks, point meta=explicit symbolic]
table[
    x=PC1,
    y=PC2,
    z=PC3,
    meta=meta,
    col sep=comma
]{data/W_AB_PC.csv};
\end{axis}
\end{tikzpicture}

\begin{tikzpicture}[remember picture]
    \node[anchor=north] at (0, 0.5) {\pgfplotslegendfromname{sharedlegend}};
\end{tikzpicture}
	\caption{{\small The 3-dimensional scatter plot for the principal components of the rescaling effects estimated from the raw data and alpha bands, respectively.}}
	\label{fig:EEG_W}
\end{figure}

\begin{figure}[htbp]
    \centering
    \includegraphics[width=\textwidth]{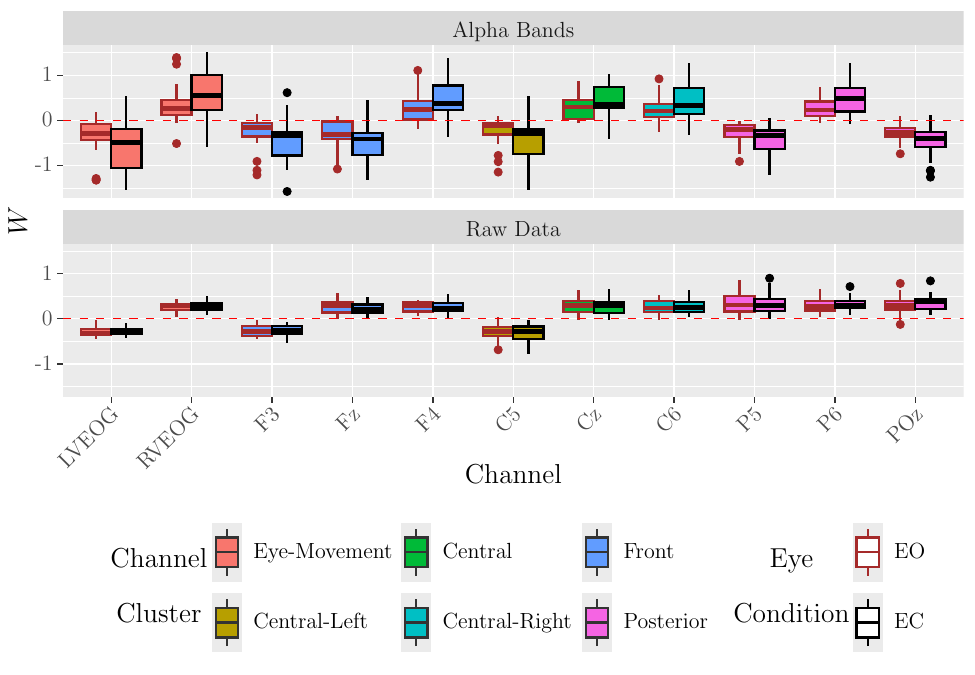}
    \caption{{\small The boxplot of the estimated rescaling effects for selective channels.}}
    \label{fig:W_box}
\end{figure}

Next, we examine the sparse components $\widehat{\mathcal{S}}_m$. For every $(i,j)$-th entry of the $p \times p$ sparse matrices, we calculate the frequency of non-zero values separately for the EO and EC conditions,  for different groups of channels based on their scalp locations. A heatmap illustrating these frequencies is shown in \autoref{tab:channels}. The  results reveal which groups of channels exhibit additional activity, as captured by Granger causal effects from the sparse components $\hat{\mathcal{S}}_m$ on top of the low-rank basis $\hat{\Phi}$. It can be seen from \autoref{fig:sparse_freq} that channels associated with eye movement-related alpha bands help filter out noise, especially for dynamics originating from the frontal channels under the EC condition. The channels in the Eye-Movement cluster, which are related to physical eyeball movements \citep{issa2019a}, are more active during the EO condition, especially in the alpha band data. channels involved in visual processing—primarily located in the posterior scalp region—show more frequent outward connections. These findings are consistent with existing literature \cite[see][for instance]{bai2022}. 

\begin{figure}[htbp]
    \centering
    \includegraphics[width=\textwidth]{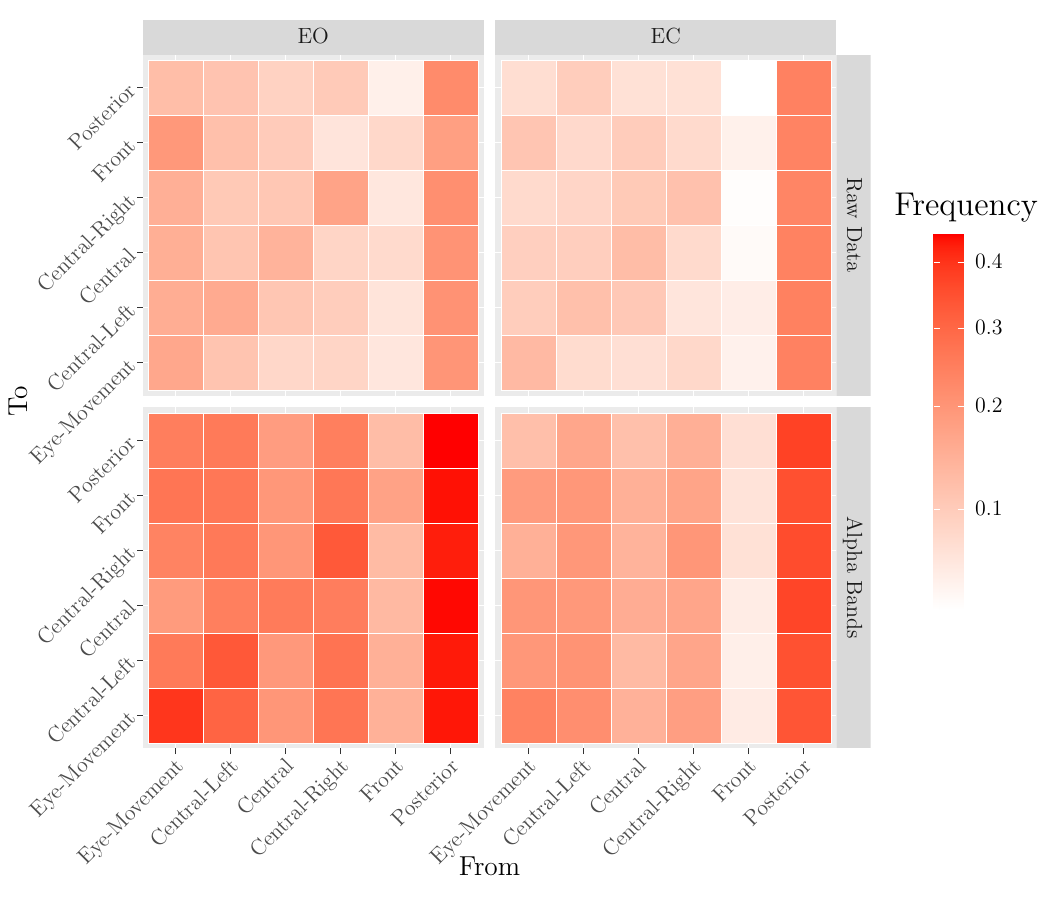}
    \caption{{\small The frequency of nonzero entries in the estimated sparse components $\widehat{\mathcal{S}}$ is summarized according to the channel clusters classified in \autoref{tab:channels}.}}
    \label{fig:sparse_freq}
\end{figure}

\section{Conclusion}\label{sec:conclusion}

There has been growing interest in modeling time series data across multiple entities. One approach organizes the data into a three-dimensional array and applies matrix-variate or tensor-based models, which naturally scale to large numbers of entities ($M$), but often sacrifice interpretability of the estimated parameters. In contrast, our proposed panel VAR framework preserves the classical VAR structure for each entity, capturing inter-entity relationships through structured constraints while allowing substantial heterogeneity across entities. Such flexibility is difficult to achieve with matrix variate or tensor models.

Building on this motivation, this work develop a panel VAR model--LSPVAR--designed to effectively capture both similarities and differences in Granger causal relationships of different entities in a meaningful and easily interpretable manner. The identification conditions for the model parameters lead to a nonsmooth, nonconvex optimization problem, which we address by developing a multi-block ADMM algorithm with established convergence guarantees. Additionally, we prove consistency of the estimators under mild assumptions commonly adopted in single high-dimensional VAR models. Simulation studies with synthetic data and an application to EEG signals demonstrate the effectiveness of the proposed approach.

\begin{appendix}

\section{Vector Auto-Regression (VAR) model}\label{sec:VAR}
Vector Auto-Regression (VAR) is a classical and widely used model for modeling multivariate time series data. Consider the time series observations ${X_t^m : t = 0, \dots, T}$ for the $m$-th subject. The VAR model is expressed as in \eqref{eq:pVAR-model}:
\begin{equation}
    X^m_t = A_m X^m_{t-1} + \epsilon^m_t, ~ \epsilon^m_t \sim N(0, \Sigma_m), \qquad A_m = W_m \Phi + S_m \tag{\ref{eq:pVAR-model}}
\end{equation}
where $A_m \in \mathbb{R}^
{p \times p}$ is the coefficient matrix, and $\Sigma_m \in \mathbb{R}^{p \times p}$ is the covariance matrix for the model innovations.

\begin{rem}
   For analytical simplicity, one may assume the noise covariance matrix is isotropic:
    $$
    \Sigma_m = \sigma_m^2 I_p,
    $$
   which aligns the log-likelihood function closely with the least squares objective used in estimation. Nonetheless, $\Sigma_m$ can take more general forms with complex structure to model dependencies in high-dimensional settings. Our least squares estimation approach remains robust and applicable even under such general covariance structures.
\end{rem}

We next introduce some standard definitions and key properties of VAR models, which will be useful in our subsequent analysis.

Recall that \autoref{ass:stat},
\begin{equation}\label{eq:spectral_bound}
	|\lambda_1(A_m)| < 1,
\end{equation}
ensures a VAR model to be stationary \citep{lutkepohl2005}. The inequality \eqref{eq:spectral_bound} can also be expressed with regard to the VAR model's characteristic polynomial $\mathcal{A}_m(z) = I_p - A_m z$ that $\det(\mathcal{A}_m(z)) \ne 0$ for $|z| \le 1$. As a consequence, the spectral density of the VAR model \eqref{eq:pVAR-model}, defined as
\begin{equation}\label{eq:spec_dens}
	h_m(\vartheta) = \frac{1}{2\pi} (\mathcal{A}_m^{-1}(e^{i\vartheta})) \Sigma_m (\mathcal{A}_m^{-1}(e^{i\vartheta}))^\dag, \quad \vartheta \in [-\pi, \pi],
\end{equation}
is also bounded above.

The following quantities play an essential role in the estimation analysis and will be used to control the consistency rates established in \autoref{sec:cons}:
\begin{align*}
    \Psi(h_{X^m}) & = \sup_{\vartheta \in [-\pi, \pi]} \lambda_1(h_{X^m}(\vartheta)),\\
    \psi(h_{X^m}) & = \sup_{\vartheta \in [-\pi, \pi]} \lambda_p(h_{X^m}(\vartheta)),\\
    \tau_{\max}(\mathcal{A}_m) & = \max_{|z| = 1} \lambda_1(\mathcal{A}^\dag_m(z) \mathcal{A}_m(z)),\\
    \tau_{\min}(\mathcal{A}_m) & = \min_{|z| = 1} \lambda_p(\mathcal{A}^\dag_m(z) \mathcal{A}_m(z)).
\end{align*}
These quantities are also related in VAR models in terms of the following inequalities
$$
\psi(h_{X^m}) \ge \frac{1}{2\pi} \frac{\lambda_p(\Sigma_m)}{\tau_{\max}(\mathcal{A}_m)}, \qquad \Psi(h_{X^m}) \le \frac{1}{2\pi} \frac{\lambda_1(\Sigma_m)}{\tau_{\min}(\mathcal{A}_m)}.
$$

\section{Panel VAR Literature Overview \& Comparisons}\label{sec:dis}

In this section, we provide a supplementary review of existing panel VAR (PVAR) models and compare them with our proposed approach.

A highly restrictive class of models assumes a common transition matrix shared across all entities, as studied extensively by \citet[Section 3.1]{canova2013}, \citet{breitung2015}, \citet{sigmund2021}, among others. These models leverage the entire data pool to estimate a single $p \times p$ matrix (for VAR(1)), typically using classical least squares. While simple and computationally efficient, the effectiveness of such models hinges on the assumption of homogenous dynamics across entities. When this assumption is violated, the model's validity deteriorates, often producing unreliable results.

At the other extreme, one can fully vectorize the panel data and model it with a large $Mp \times Mp$ transition matrix, allowing for maximal flexibility. However, for practical estimation, this flexibility is often curtailed by structural constraints, such as low-rank assumptions \citep[Section 3.2]{canova2013}, or sparsity and group sparsity, as in the Bayesian framework of \citet{korobilis2016}. Despite their flexibility, these approaches sacrifice interpretability at the entity level. For instance, a low-rank structure imposed on the large transition matrix does not necessarily induce meaningful or identifiable patterns in the individual $p \times p$ blocks corresponding to each entity. Moreover, unstructured sparsity can inadvertently produce degenerate cases, such as exactly zero transition matrices for certain entities, leading to unintended random walk behavior.

An intermediate line of work retains individual VAR models for each entity while introducing structured inter-entity dependencies. For example, \citet{skripnikov2019} model the entity-specific matrices $A_1, \dots, A_M$ as sparse perturbations of a shared baseline matrix, while \citet{xu2023} explore commonalities in eigen-structures through post-estimation hypothesis testing. While these models relax strict homogeneity, they remain sensitive to structural assumptions, making them difficult to apply robustly to complex real-world data where such assumptions may not hold.

Our proposed model \eqref{eq:pVAR-model} follows this intermediate philosophy but introduces a more flexible and interpretable structure. Specifically, we link the $M$ transition matrices through a shared low-rank basis, entity-specific rescaling effects, and sparse deviations. This formulation accommodates substantial heterogeneity while preserving clear interpretability at both the entity and panel level. As demonstrated in the main paper and through simulations, this setup strikes a balance between flexibility, interpretability, and robustness for analyzing high-dimensional panel VAR data.

\subsection{Matrix Variate \& Tensor Models}\label{subsec:mar}

An alternative line of research for analyzing panel time series treats the data as a three-dimensional array and applies matrix-variate or tensor decomposition methods. In this section, we compare our proposed model with these approaches, focusing particularly on the matrix auto-regression (MAR) model proposed by \citet{chen2021}, as well as selected tensor factor models; see, for example, \citet{chen2022a} and \citet{babii2023}.

The MAR model extends the classical VAR structure \eqref{eq:pVAR-model} to accommodate panel data in matrix format. Specifically, the panel time series are arranged as a matrix-valued process $Z_t = (X_t^1, \dots, X_t^M) \in \mathbb{R}^{p \times M}$. To illustrate, consider the following MAR(1) example, which is conceptually comparable to the setup proposed in our work:
\begin{equation}\label{eq:mar}
	Z_t = \sum_{i=1}^r B_i^f Z_{t-1} B_i^b + \epsilon_t,
\end{equation}
where $B_i^f \in \mathbb{R}^{p \times p}$, $B_i^b \in \mathbb{R}^{M \times M}$, and $\epsilon_t \in \mathbb{R}^{p \times M}$ are noise terms.

The MAR model can be interpreted as a combination of two VAR models: one operating along the columns and the other along the rows of the matrix-valued process.  In particular, when the coefficient matrices satisfy $B_i^b = I_M$ (identity matrices), the MAR model in \eqref{eq:mar} simplifies to a panel of VAR models where every entity shares a common coefficient matrix $B^f = \sum_{i=1}^r B_i^f$, i.e., $Z_t[\cdot, m] = B^f Z_{t-1}[\cdot, m] + \epsilon_t[\cdot, m]$, which corresponds to the classical VAR formulation in \eqref{eq:pVAR-model} with $A_m \equiv B^f$) for all $m$. Alternatively if $B_i^f = I_p$, the MAR model \eqref{eq:mar} describes a panel of VAR models of dimension $M$, where the $p$ variables are treated as entities. In this case, the shared coefficient matrix is $B^b = \sum_{i=1}^r B_i^b$ leading to $Z_t[i,\cdot] = Z_{t-1}[i,\cdot] B^b + \epsilon_t[i,\cdot]$.

Estimation of the coefficient matrices $B_i^f$ and $B_i^b$ typically follows an alternating regression procedure, where left and right pseudo-inverses are applied iteratively.

Analogous to the MAR setting, tensor factor models have also been explored in the context of panel time series analysis \citep{chen2022a,babii2023}. These models generalize \eqref{eq:mar} by introducing latent factors $F_t \in \mathbb{R}^{q \times N}$ to capture low-dimensional structures, resulting in the formulation:
\begin{equation}\label{eq:tensor_factor}
    Z_t = B^f F_t B^b + \epsilon_t,
\end{equation}
where $B^f \in \mathbb{R}^{p\times q}$ and $B^b \in \mathbb{R}^{N \times M}$ are loading matrices, with $q < p$ and $N < M$ to capture the underlying lower-dimensional structure. Mathematically, this formulation is equivalent to concatenating the matrix-variate data matrices $Z_t$ into a multi-way array $\mathcal{Z} \in \mathbb{R}^{p \times M\times T}$ and then performing inference from a tensor point of view. In this framework,classical tensor decomposition techniques, such as Tucker decomposition or canonical polyadic (CP) decomposition, can be employed to characterize cross-sectional dependencies. For instance, one may write \eqref{eq:tensor_factor} as the tensor decomposition with the outer product algebra
$$
\mathcal{Z} = \mathcal{F} \times_1 B^f \times_2 (B^b)' \times_3 I_T + \epsilon,
$$
where $\times_i$ denotes the usual tensor product of the the mode $i$. Relevant estimation procedures for such tensor models have been discussed in \cite{han2024a,han2024}.

Notably, the tensor factor model \eqref{eq:tensor_factor} does not, by itself, incorporate explicit temporal dynamics. To address this limitation, it is common to impose an MAR structure \eqref{eq:mar} on the latent factors $\mathcal{F}$; see \citet{surana2016}. This leads to a tensor model with auto-regressive latent factors:
\begin{equation}\label{eq:factor_mar}
    Z_t = B^f F_t B^b + \epsilon_t, \qquad F_t = \sum_{i=1}^r A_i^f F_{t-1} A_i^b + \varepsilon_t,
\end{equation}
where $A_i^f$ and $A_i^b$ are autoregressive coefficient matrices of appropriate dimensions. This formulation effectively embeds temporal dependence into the model but does so under a strong assumption: the latent temporal dynamics governing all entities are driven by a shared set of core processes.

Compared to our LSPVAR model \eqref{eq:pVAR-model}, a key distinction is that the MAR and tensor factor models explicitly mix panel data through post-coefficient (loading) matrices $B_i^b$. In contrast, the dependence across entities in our panel VAR framework is implicit, induced by algebraic constraints imposed on the set of transition matrices ${A_m}$, as illustrated in \autoref{fig:model_setup}.

In practice, both the MAR and tensor factor models can be overly restrictive for heterogeneous panels. If the entities exhibit significant heterogeneity, the bilinear structures in \eqref{eq:mar} and \eqref{eq:tensor_factor} may fail to capture the true underlying dynamics, especially for the tensor MAR factor model \eqref{eq:factor_mar}, which enforces that all entities share identical auto-regressive latent processes. In contrast, the proposed LSPVAR model flexibly accommodates heterogeneous structures while preserving interpretability at both the entity and panel levels.

On the other hand, if the panel is truly homogeneous, models like \eqref{eq:mar} and \eqref{eq:tensor_factor} may achieve more efficient recovery by leveraging their stronger mixture structure through pre- and post-loading matrices. However, as the number of entities $M$ grows large relative to $p$ or $T$, estimating the $M \times M$ loading matrices $B_i^b$ in these models can become computationally expensive, particularly under classical least squares approaches. Moreover, when sparsity or low-rank constraints are introduced on $B_i^b$, these models begin to resemble the structure of the LSPVAR model

In summary, the posited model strikes a favorable balance by providing enhanced flexibility and interpretability in heterogeneous settings, making it particularly appealing for complex real-world applications.

\section{Estimation Procedure} \label{sec:solution}

Next, we provide detailed descriptions of the subproblem updates within \autoref{algo:iter}, which iteratively optimizes the proposed objective function.
\begin{equation}\tag{\ref{eq:admm}}
G(\mathcal{W}, \mathcal{S}, \Phi_c, \Phi, \Gamma; \mathcal{X}, \eta, \rho) = F(\mathcal{W}, \mathcal{S}, \Phi; \mathcal{X}, \eta) + \frac{\rho}{2} \|\Phi - \Phi_c\|_F^2 + \rho \langle \Gamma, \Phi - \Phi_c \rangle.
\end{equation}
With slight abuse of notation, we continue to use $G$ to denote the the objective functions for the subproblems, where the precise form and inputs of $G$ may vary depending on the context. The following subsections detail the update routines for each estimator at the $(i+1)$-th iteration.

\subsection[Subproblem for (W, S)]{Subproblem of $(\mathcal{W}, \mathcal{S})$}\label{subsec:WS}
The first block update involves optimizing \eqref{eq:admm} with respect to $(\mathcal{W}, \mathcal{S})$. This subproblem can be viewed as a classical LASSO-type regression with blockwise penalties. Thanks to the blockwise convexity of the objective, the optimization is well-posed and can be efficiently solved using standard algorithms developed for high-dimensional penalized regression.

Moreover, this subproblem naturally decomposes into independent optimizations across entities. Specifically, for the $m$-th entity, the objective function simplifies to:
\begin{equation}\label{eq:sep_m}
    G^{(i+1)}(W_m, S_m) \varpropto \frac{1}{2T} \| Y_m - W_m \Phi^{(i)} X_m - S_m X_m\|_F^2 + \eta \|S_m\|_1,
\end{equation}
and the optimization is separable row-wise with LASSO-type solutions.

For the purpose of proving the sufficient descent property (\autoref{prop:suff_desc}) of an update by optimizing \eqref{eq:sep_m}, we may further split and optimize the two arguments $W_m$ and $S_m$ sequentially, as \eqref{eq:sep_m} is strongly convex with respect to $W_m$ and $S_m$ separately when \autoref{ass:lip_hes} is satisfied.

\subsection[Subproblem for Auxiliary Component]{Subproblem for $\Phi_c$}\label{subsec:phic}
Next, we consider the subproblem of optimizing $\Phi_c$. This task essentially involves finding a feasible point in the intersection of two constraint sets, $\mathbb{L}_p(\hat{r}, \ell)$ and $\mathbb{B}_p$. To address this, we draw inspiration from Dykstra’s algorithm \citep{boyle1986}, which computes the nearest projection onto the intersection by iteratively projecting onto each set.

We implement this via an internal ADMM routine, formulated as a proximal problem with an additional step size parameter $\kappa$, which will be specified later in \autoref{sec:proof_convergence} (see the proof of \autoref{prop:suff_desc}). Specifically,
$$
G^{(i+1)}(\Phi_c) = \frac{\rho}{2} \|\Phi^{(i)} + \Gamma^{(i)} - \Phi_c\|_F^2 + \frac{\kappa \rho}{2} \|\Phi_c - \Phi_c^{(i)}\|_F^2,
$$
and it can be reformulated under the ADMM framework by introducing auxiliary and dual variables as
\begin{equation}\label{eq:G_c}
    G_c(\Phi_B, \Phi_L, \Gamma_{BL}) = \frac{1}{2} \|\Phi_0^{(i)} - \Phi_L\|_F^2 + \frac{1}{2} \|\Phi_B - \Phi_L\|_F^2 + \langle \Gamma_{BL}, \Phi_B - \Phi_L \rangle,
\end{equation}
where $\Phi_0^{(i)} = \frac{\Phi^{(i)} + \Gamma^{(i)} + \kappa \Phi_c^{(i)}}{1 + \kappa}$. Analogously we solve it by blockwise updates. For instance, at the $k$-th iteration,
\begin{enumerate}
    \item Solve $\Phi_L^{(k+1)} = \argmin_{\Phi_L \in \mathbb{L}_p(\hat{r}, \ell)} \|\Phi_L - \frac{1}{2} \left( \Phi_0^{(i)} + \Phi_B^{(k)} + \Gamma_{BL}^{(k)} \right)\|_F^2$. The solution is indeed the projection of $\frac{1}{2} \left( \Phi_0^{(i)} + \Phi_B^{(k)} + \Gamma_{BL}^{(k)} \right)$ onto the set $\mathbb{L}_p(\hat{r}, \ell)$. Explicit form can be found below in \autoref{subsubsec:phi_L}.

    \item Solve $\Phi_B^{(k+1)} = \argmin_{\Phi_B \in \mathbb{B}_p} \|\Phi_B - \left( \Phi_L^{(k+1)} - \Gamma_{BL}^{(k)} \right)\|_F^2$, which is the projection of $\Phi_L^{(k+1)} - \Gamma_{BL}^{(k)}$ onto the set $\mathbb{B}_p$. Details are covered in \autoref{subsubsec:phi_B}.

    \item Update $\Gamma_{BL}^{(k+1)} = \Gamma_{BL}^{(k)} + \Phi_B^{(k+1)} - \Phi_L^{(k+1)}$.
\end{enumerate}

For notational simplicity, we denote by $\Pi_B$ and $\Pi_L$ the projection operators onto the sets $\mathbb{B}_p$ and $\mathbb{L}_p(r, \ell)$, respectively.

\subsubsection[Subproblem for low-rank component]{Subproblem for $\Phi_L$}\label{subsubsec:phi_L}

Note that any $\Phi_L \in \mathbb{L}_p(\hat{r}, \ell)$ can be expressed via its singular value decomposition (SVD) as
\begin{equation}\label{eq:Phi}
    \Phi_L = U D V'
\end{equation}
where $U, V$ are all orthogonal matrices in $\mathbb{R}^{p \times p}$, $D = \diag(d) \in \mathbb{R}^{p \times p}$, $d = (d_1, \dots, d_p) \in \mathbb{R}_+^p$ and $\|d\|_1 = \ell$. Note that here we restrict the parameterization \eqref{eq:Phi} to be at most rank-$\hat r$, then at most $\hat r$ elements of $d$ are potentially non-zero, and hence only the corresponding $\hat r$ columns of $U$ and $V$ matter in our optimization.

With the expression \eqref{eq:Phi} and condition on $\Phi_{L,0}^{(k)} \coloneqq \frac{1}{2} \left( \Phi_0^{(i)} + \Phi_B^{(k)} + \Gamma_{BL}^{(k)} \right)$, the objective function of $\Phi_L = U D V'$ can be viewed as
\begin{equation}\label{eq:Phi_tau}
\begin{aligned}
    & G_c^{(k+1)}(\Phi_L) \varpropto \tr(\Phi_L \Phi_L' - 2 \Phi_{L,0}^{(k)} \Phi_L')\\
    & \quad \Leftrightarrow G_c^{(k+1)}(U, V, D) \varpropto \tr(D^2 - D U' \Phi_{L,0}^{(k)} V).
\end{aligned}
\end{equation}

The optimization can be sequentially decomposed into two main steps, as formalized in the following lemmas, which separately address the updates of the singular vector matrices and the singular values, respectively.

\begin{lem}\label{lem:UV_sol}
    Conditioned on a given matrix $\Phi_{L,0}^{(k)}$, the objective function \eqref{eq:Phi_tau} attains its minimum at $(U, V) = (U_0, V_0)$, if and only if $U_0$ and $V_0$ are the left and right singular vector matrices of $\Phi_{L,0}^{(k)}$, respectively.
\end{lem}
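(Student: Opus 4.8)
The plan is to reduce the minimization of \eqref{eq:Phi_tau} over orthogonal $U,V$ to a classical trace-alignment (von Neumann / Ky Fan) problem. First I would fix the singular values, i.e. fix the diagonal matrix $D = \diag(d)$ with $d_1 \ge \dots \ge d_p \ge 0$, and consider only the dependence on $(U,V)$. Since $\tr(D^2)$ does not depend on $(U,V)$, minimizing $G_c^{(k+1)}(U,V,D) \varpropto \tr(D^2 - D U' \Phi_{L,0}^{(k)} V)$ over orthogonal $U,V$ is equivalent to \emph{maximizing} $\tr(D\, U' \Phi_{L,0}^{(k)} V)$. Writing the SVD of the target matrix as $\Phi_{L,0}^{(k)} = U_0 \Sigma_0 V_0'$ with $\Sigma_0 = \diag(\sigma_1,\dots,\sigma_p)$, $\sigma_1 \ge \dots \ge \sigma_p \ge 0$, I would substitute and set $O = U_0' U$, $P = V_0' V$, both orthogonal, so the objective becomes $\tr(D\, O' \Sigma_0 P)$.

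Next I would invoke the standard bound: for any orthogonal matrices $O,P$, $\tr(D\, O' \Sigma_0 P) \le \sum_{i=1}^p d_i \sigma_i$, with equality attained precisely when $O = P = I_p$ (up to the usual ambiguities in repeated singular values), i.e. when $U = U_0$ and $V = V_0$. One clean way to see this is to write $\tr(D O' \Sigma_0 P) = \sum_{i} d_i (O'\Sigma_0 P)_{ii}$ and bound each diagonal entry by the operator norm $\|O'\Sigma_0 P\| = \sigma_1$... — but that is too crude, so instead I would argue via the doubly-stochastic-matrix characterization: the matrix with entries $Q_{ij} = (\sum_k O_{ki} P_{kj})$... more precisely, expand $\tr(D O'\Sigma_0 P) = \sum_{i,j,k} d_i \sigma_k O_{ki} P_{kj}\,[i=j]$ — cleaner to use the known fact that $\max$ over orthogonal $O,P$ of $\tr(A O B P)$ for fixed $A,B$ equals the inner product of the sorted singular values of $A$ and $B$ (this is exactly von Neumann's trace inequality applied to $\Sigma_0$ and $D$). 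Since $d$ and $\sigma$ are both already sorted nonincreasingly, the maximizer is $O = P = I$. This shows any optimal pair $(U,V)$ consists of left/right singular vectors of $\Phi_{L,0}^{(k)}$; conversely, plugging in $U=U_0$, $V=V_0$ attains the bound, giving the "if and only if."

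The main obstacle is the non-uniqueness coming from repeated singular values of $\Phi_{L,0}^{(k)}$ (and from zero singular values, or ties among the $d_i$): in those cases "$U_0$ and $V_0$ are left/right singular vector matrices" must be interpreted as "there exists a valid SVD of $\Phi_{L,0}^{(k)}$ for which $U_0$, $V_0$ are the factors," and one must check that the equality case of von Neumann's inequality is exactly characterized by such alignment. I would handle this by noting that equality in von Neumann's inequality forces $U'\Phi_{L,0}^{(k)}V$ to be diagonal with the singular values in sorted order on the diagonal, which is precisely the defining property of $U,V$ being singular-vector matrices; the freedom within each eigenspace is then exactly the freedom in choosing an SVD, which does not affect $\Phi_L = UDV'$ when $D$ respects the same block structure. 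Since only the leading $\hat r$ columns of $U,V$ enter (the rest multiply zero singular values), the statement is unaffected by the choice in the trailing block. I would state the von Neumann trace inequality as a cited lemma rather than reproving it, and keep the argument at the level of "reduce to trace alignment, apply the inequality, read off the equality case."
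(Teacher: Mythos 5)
Your proposal is correct, but it takes a genuinely different route from the paper. The paper itself opens its proof by noting that the lemma ``corresponds to a well-known inequality frequently encountered in matrix inner product analysis''---that inequality is exactly the von Neumann trace bound you invoke---but then deliberately gives a ``more intricate proof that offers a different perspective'': it parameterizes neighborhoods of a candidate pair $(U_0,V_0)$ by matrix exponentials of skew-symmetric matrices, $U = U_0 e^{k_u K_U}$, $V = V_0 e^{k_v K_V}$, imposes first-order criticality in $(k_u,k_v)$ for all skew-symmetric $(K_U,K_V)$, and derives the system $D\Theta - \Theta' D = 0$, $D\Theta' - \Theta D = 0$ with $\Theta = U_0' \Phi_{L,0}^{(k)} V_0$, forcing $\Theta$ to be diagonal and hence $(U_0,V_0)$ to be singular-vector matrices. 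Your reduction to $\max_{O,P}\tr(D\,O'\Sigma_0 P)$ and appeal to von Neumann's inequality buys a directly global argument with a clean equality-case characterization (including your careful handling of repeated and zero singular values, and of the trailing columns multiplying zero entries of $D$, which the paper glosses over); the paper's Lie-group perturbation argument buys a self-contained derivation that does not lean on the cited inequality and dovetails with the manifold/tangent-space machinery used elsewhere in its appendices, at the cost of formally characterizing critical points rather than certifying global minimality outright (compactness of the orthogonal group is what closes that gap). Both arguments are sound; yours is the more standard and arguably tighter on the ``if and only if'' equality case.
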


\begin{proof}
This result corresponds to a well-known inequality frequently encountered in matrix inner product analysis. Nonetheless, we provide here a more intricate proof that offers a different perspective.

The following parameterization for orthogonal matrices facilitates deriving a closed-form solution for this subproblem. Specifically, we consider perturbations of the pair $(U, V)$ around a candidate optimizer $(U_0, V_0)$ using the matrix exponential of skew-symmetric matrices, a popular tool for exploring neighborhoods of orthogonal matrices. The parameterization is given by:
$$
U(k_u; K_U, U_0) = U_0 e^{k_u K_U}, \quad V(k_v; K_V, V_0) = V_0 e^{k_v K_V},
$$
where $(k_u, k_v) \in \mathbb{R}^2$, and $K_U, K_V \in \mathbb{R}^{p \times p}$ are skew-symmetric matrices. Then, the objective function as a variant of \eqref{eq:Phi_tau} is introduced with $\Phi_L = U(k_u; K_U, U_0) D V(k_v; K_V, V_0)'$ plugged in,
$$
G_{U_0, V_0}^{(k+1)}(k_u, k_v, D; K_U, K_V) = \tr(D^2 - 2 D e^{-k_u K_U} U_0' \Phi_{L,0}^{(k)} V_0 e^{k_v K_V}).
$$
This reformulation allows us to analyze the minimizer by examining the behavior of $G_{U_0, V_0}^{(k+1)}$ with respect to small perturbations around $(U_0, V_0)$.

The following claim, which follows naturally from critical point analysis, provides a necessary and sufficient condition for when the optimizer $\Phi_L$ of the subproblem admits the singular vector matrices $U_0$ and $V_0$. Specifically, the objective function
\begin{equation}\label{eq:min}
\begin{aligned}
    G_c^{(k+1)}(U_0, V_0, D) & = G_{U_0, V_0}^{(k+1)}(0, 0, D; K_U, K_V)\\
    & = \tr( D^2 - D V_0' (\Phi_{L,0}^{(k)})' U_0 - D U_0' \Phi_{L,0}^{(k)} V_0 )
\end{aligned}
\end{equation}
attains a (local) minimum of \eqref{eq:Phi_tau} with respect to all orthogonal pairs $(U, V)$, given a fixed $\Phi_{L,0}^{(k)}$, if and only if $(U_0, V_0)$ correspond to the left and right singular vector matrices of $\Phi_{L,0}^{(k)}$.

\begin{claim}\label{claim:orth_opt}
    Conditioned on a given $\Phi_{L,0}^{(k)}$, a pair of matrices $(U_0, V_0)$ corresponds to the left and right singular vectors of $\Phi_L$ as defined in \eqref{eq:Phi}, and $\Phi_L$ optimizes the objective function \eqref{eq:Phi_tau}, if and only if, for any arbitrary skew-symmetric matrix pair $(K_U, K_V)$, the following holds:
    \begin{equation}\label{eq:exp_crit}
        \begin{cases}
        \frac{\partial}{\partial k_u} G_{U_0, V_0}^{(i+1)}(0, 0, D; K_U, K_V) = 0,\\
        \frac{\partial}{\partial k_v} G_{U_0, V_0}^{(i+1)}(0, 0, D; K_U, K_V) = 0.
        \end{cases}
    \end{equation}
\end{claim}

Indeed, solving the system \eqref{eq:exp_crit}, with $\Theta = U_0' \Phi_{L,0}^{(k)} V_0$, we have
$$
\begin{cases}
    D \Theta - \Theta' D = 0,\\
    D \Theta' - \Theta D = 0.
\end{cases}
$$
The above system implies that the only valid solution requires the matrix $\Theta$ to be diagonal. Hence, the matrices $U_0$ and $V_0$ are the left and right singular vectors of $\Phi_{L,0}^{(k)}$, respectively.
\end{proof}

Based on \autoref{lem:UV_sol}, set $\Theta = \diag(\theta) = U_0' \Phi_{L,0}^{(k)} V_0$ with $\theta = (\theta_1, \dots, \theta_p)$ being the singular values of $\Phi_{L,0}^{(k)}$, recall that $D = \diag(d)$, then \eqref{eq:Phi_tau} can be derived as
\begin{equation}\label{eq:quad}
    G_c^{(k+1)}(U_0, V_0, D) \varpropto \sum_{i=1}^r (d_i^2 - 2\theta_i d_i), \quad \text{subject to } \|d\|_1 = \ell, ~ \|d\|_0 \le \hat r, ~ d_i \ge 0.
\end{equation}
The optimization can be efficiently handled using classical quadratic programming techniques over a simplex \citep{frank1956}. In fact, the following lemma provides a closed-form solution to \eqref{eq:quad}.

\begin{lem}\label{lem:lambda_sol}
    Define the cumulative sum-type function as
    $$
    \mathnormal{CS}_\theta(j) = \frac{\sum_{k=1}^j \theta_k - \ell}{j}, \quad j = 1, \dots, \hat{r}.
    $$
    Then, the optimizer to \eqref{eq:quad} at the $(k+1)$-th iteration is given by
    $$
    \hat d = \mathbbm{P}_\gamma(\theta) - \mathnormal{CS}_\theta(\gamma) \cdot (\mathbf{1}_\gamma', 0)', \quad \text{where } \gamma = \max \{ j: \mathnormal{CS}_\theta(j) \le \theta_j, j = 1, \dots, \hat r \}.
    $$
    Here $\mathbbm{P}_\gamma(\theta)$ denotes the projection of the vector $\theta$ onto the subspace spanned by its first $\gamma$ elements .
\end{lem}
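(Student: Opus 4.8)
The plan is to recognize the constrained quadratic program \eqref{eq:quad} as a Euclidean projection and to solve it in two stages: first pin down the optimal support, then compute the optimal magnitudes on that support via the classical simplex-projection calculus. To begin, I would complete the square in the objective: since $\sum_i (d_i^2 - 2\theta_i d_i) = \|d - \theta\|_2^2 - \|\theta\|_2^2$ and $\|\theta\|_2^2$ does not depend on $d$, minimizing the objective in \eqref{eq:quad} is equivalent to computing the Euclidean projection of $\theta$ (the singular values of $\Phi_{L,0}^{(k)}$, already sorted non-increasingly, which is what produces the prefix structure) onto $\mathcal{C} = \{d \in \mathbb{R}^p : d \ge 0, \ \mathbf{1}_p' d = \ell, \ \|d\|_0 \le \hat r\}$, i.e.\ the $\ell$-scaled probability simplex intersected with an $\ell_0$-ball. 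Since $\ell>0$, one has $\mathnormal{CS}_\theta(1) = \theta_1 - \ell < \theta_1$, so the index set defining $\gamma$ is nonempty.

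\textbf{Reducing the support.} Next I would show that an optimal $d^\star$ may be taken with support $\{1,\dots,\gamma\}$ for some $\gamma \le \hat r$. The tool is a one-step exchange argument: if $j \in \supp(d^\star)$ while some $i < j$ lies outside $\supp(d^\star)$ (so $\theta_i \ge \theta_j$ by the sorting), then reassigning the mass $d^\star_j$ from coordinate $j$ to coordinate $i$ preserves nonnegativity, the $\ell_1$-budget $\mathbf{1}_p' d = \ell$, and the cardinality bound, while changing the objective by $2 d^\star_j(\theta_j - \theta_i) \le 0$. Iterating removes every ``gap'', so without loss of generality $\supp(d^\star) = \{1,\dots,\gamma\}$ with $\gamma \le \hat r$. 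Consequently the value of \eqref{eq:quad} equals that of the \emph{convex} problem of projecting the top block $(\theta_1,\dots,\theta_{\hat r})$ onto the scaled simplex $\{d \in \mathbb{R}^{\hat r} : d \ge 0, \ \sum_{i\le\hat r} d_i = \ell\}$ (the relaxed feasible set contains $d^\star$ and is contained in $\mathcal{C}$, so the two minima coincide).

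\textbf{Solving the convex subproblem.} Then I would invoke the KKT conditions for this convex projection. Letting $\tau$ denote the multiplier of the equality constraint, stationarity and complementary slackness give $\hat d_i = (\theta_i - \tau)_+$; because $\theta$ is sorted, the strictly positive coordinates form a prefix $\{1,\dots,\gamma\}$, and the budget $\sum_{i\le\gamma}(\theta_i - \tau) = \ell$ forces $\tau = \mathnormal{CS}_\theta(\gamma)$, hence $\hat d_i = \theta_i - \mathnormal{CS}_\theta(\gamma)$ for $i\le\gamma$ and $\hat d_i = 0$ otherwise, which is exactly $\mathbbm{P}_\gamma(\theta) - \mathnormal{CS}_\theta(\gamma)\,(\mathbf{1}_\gamma', 0)'$. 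It remains to identify $\gamma$. Setting $\phi(j) = \sum_{k\le j}(\theta_k - \theta_j)$, the inequality $\mathnormal{CS}_\theta(j) \le \theta_j$ is equivalent to $\phi(j) \le \ell$; since $\phi(j+1) - \phi(j) = j(\theta_j - \theta_{j+1}) \ge 0$, the map $\phi$ is non-decreasing, so $\{j : \mathnormal{CS}_\theta(j) \le \theta_j\}$ is an initial segment, $\gamma = \max\{j \le \hat r : \mathnormal{CS}_\theta(j) \le \theta_j\}$ is well-defined, and $\mathnormal{CS}_\theta(j) \le \theta_j$ holds for all $j \le \gamma$. Finally, writing $\mathnormal{CS}_\theta(\gamma+1)$ as a convex combination of $\mathnormal{CS}_\theta(\gamma)$ and $\theta_{\gamma+1}$ and using the maximality of $\gamma$, one checks the remaining dual-feasibility inequalities $\theta_i \le \tau$ for $\gamma < i \le \hat r$, while $\theta_i \ge \theta_\gamma \ge \tau$ gives $\hat d \ge 0$; this completes the KKT verification and, by convexity, the global optimality.

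\textbf{Main obstacle.} The simplex-projection KKT calculus is routine; the part requiring care is the interaction of the $\ell_0$ constraint with the simplex, namely justifying that ``truncate to the top $\hat r$ coordinates and then project'' is globally optimal despite the nonconvexity of $\mathcal{C}$. The exchange argument resolves this, but one must verify that each exchange leaves all three constraints intact and that iterating it terminates at a prefix support. The other delicate point is the well-posedness of the breakpoint $\gamma$, which I reduce to the monotonicity of $\phi(j) = \sum_{k\le j}(\theta_k - \theta_j)$.
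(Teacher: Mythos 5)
Your proof is correct and follows essentially the same route as the paper: your exchange argument restricting the support to the leading coordinates is exactly the paper's \autoref{claim:minrank}, and the subsequent simplex-projection step yields the stated formula for $\hat d$. The only difference is presentational---where the paper merely sketches a greedy ``sequential allocation'' of the nuclear-norm budget $\ell$, you carry out the full KKT verification (completing the square, monotonicity of the breakpoint criterion, and dual feasibility via the convex-combination identity for $\mathnormal{CS}_\theta(\gamma+1)$), which makes that second step rigorous.
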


\begin{rem}\label{rem:choice_r}
    Note that when we input the extreme case $\hat r = p$, the above solution indicates that the quadratic programming may automatically induce sparsity on $\hat{d}$, since it is likely that $\gamma < p$. Hence, instead of tuning the hyper-parameter $\hat r$, the true rank $r$ may be estimated as $\gamma$ by our algorithm adaptively from the data when an over-specified $\hat r$ is used. Further discussion on this adaptive behavior is provided in our consistency analysis; see \autoref{rem:error} and corresponding simulation results in \autoref{subsec:rank}.
\end{rem}

In summary, the results from the preceding lemmas lead to the following proposition, which characterizes the solution to the subproblem for $\Phi_L$, that is, the projection $\Pi_L(\Phi_{L,0}^{(k)})$. 

\begin{prop}\label{prop:sub_Phi}
    Given $\Phi_{L,0}^{(k)}$, the subproblem \eqref{eq:Phi_tau} admits a unique minimizer $\Phi_L^{(k+1)}$ of the form
    \begin{equation}
        \Phi_L^{(k+1)} = U_0 \hat D V_0',
    \end{equation}
    where $U_0$ and $V_0$ are the matrices containing the left and the right singular vectors of $\Phi_{L,0}^{(k)}$, respectively, and $\hat D = \diag(\hat{d})$ is given by the solution in \autoref{lem:lambda_sol}.
\end{prop}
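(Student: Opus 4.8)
The plan is to assemble \autoref{prop:sub_Phi} directly from the two preceding lemmas by viewing the minimization over $\mathbb{L}_p(\hat r,\ell)$ as a two-stage problem: first optimize the singular vectors with the singular values held fixed, then optimize the singular values.

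First I would make the reduction explicit. Write any candidate $\Phi_L\in\mathbb{L}_p(\hat r,\ell)$ through an SVD $\Phi_L=UDV'$ with $U,V$ orthogonal and $D=\diag(d)$, $d\ge 0$, $\|d\|_1=\ell$, $\|d\|_0\le\hat r$; since the constraints on $d$ are permutation invariant one may assume $d$ sorted non-increasingly. Expanding $\|\Phi_L-\Phi_{L,0}^{(k)}\|_F^2=\|d\|_2^2-2\tr(DU'\Phi_{L,0}^{(k)}V)+\|\Phi_{L,0}^{(k)}\|_F^2$ shows the only coupling between $(U,V)$ and $d$ sits in the cross term, which is precisely the objective \eqref{eq:Phi_tau}. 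For $d$ fixed, \autoref{lem:UV_sol} gives that this term is minimized exactly when $U=U_0,\ V=V_0$ are the left and right singular vector matrices of $\Phi_{L,0}^{(k)}$; substituting these reduces the problem to \eqref{eq:quad}, i.e.\ minimizing $\sum_i(d_i^2-2\theta_i d_i)$ over the sparse scaled simplex, whose solution is the closed form $\hat d$ of \autoref{lem:lambda_sol}. Concatenating the two stages yields that $\Phi_L^{(k+1)}=U_0\hat D V_0'$ is a minimizer.

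The remaining work is uniqueness, and this is where the only genuine subtlety lies; it splits in two. (i) Uniqueness of $\hat d$: the map $d\mapsto\|d\|_2^2-2\langle\theta,d\rangle$ is strictly convex, hence has a unique minimizer over the convex set $\{d\ge 0,\ \sum_i d_i=\ell\}$ restricted to any fixed block of $\hat r$ coordinates; a rearrangement/exchange argument then shows that the cost over any other support of size $\le\hat r$ is no smaller than the one supported on the $\hat r$ largest singular values, so the minimizer over the (non-convex) set $\mathbb{L}_p(\hat r,\ell)$ is unique. (ii) Well-definedness of $U_0\hat D V_0'$ when $\Phi_{L,0}^{(k)}$ has repeated singular values: here $U_0,V_0$ are only determined up to a common orthogonal rotation within each eigenspace and up to joint signs, but one checks from \autoref{lem:lambda_sol} — in particular that the threshold index $\gamma$ never separates a tie among the $\theta_i$ inside the range $1,\dots,\hat r$ — that $\hat d$ is constant on each block of equal singular values, so these rotations cancel in the product $U_0\hat D V_0'$ and the matrix depends only on $\Phi_{L,0}^{(k)}$.

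The main obstacle is therefore not existence, which is a mechanical combination of \autoref{lem:UV_sol} and \autoref{lem:lambda_sol}, but making uniqueness airtight: one must argue that the optimal support is exactly the top-$\hat r$ (equivalently top-$\gamma$) singular directions — a Schur/rearrangement-type comparison across the non-convex pieces of $\mathbb{L}_p(\hat r,\ell)$ — and must verify the tie-compatibility of the cumulative-sum threshold so that $U_0\hat D V_0'$ is genuinely a function of $\Phi_{L,0}^{(k)}$ alone. Both are short once set up, but they are the steps that actually require care.
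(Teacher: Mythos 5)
Your proposal follows essentially the same route as the paper's proof: \autoref{prop:sub_Phi} is obtained by combining \autoref{lem:UV_sol} (optimal singular vectors) with \autoref{lem:lambda_sol} (optimal singular values), and your rearrangement/exchange argument that the optimal support sits on the top-$\hat r$ coordinates, using the non-increasing ordering of $\theta$, is precisely the paper's \autoref{claim:minrank}, followed by the same sequential allocation of the nuclear-norm budget $\ell$. Your extra care about uniqueness and well-definedness of $U_0 \hat D V_0'$ under repeated singular values is a refinement the paper omits (it simply asserts uniqueness), not a different approach.
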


\begin{proof}

It follows directly from \Cref{lem:UV_sol} and \Cref{lem:lambda_sol} that \autoref{prop:sub_Phi} holds. Therefore, it remains to prove \Cref{lem:lambda_sol}.

To this end, assume that $U_0$ and $V_0$ are full $p \times p$ matrices containing the singular vector of $\Phi_{L,0}^{(k)}$, and the low-rank structure of $\Phi_L$ is induced by sparsifying the diagonal matrix of singular values $D$. The following claim characterizes how this low-rank structure influences the optimization in \eqref{eq:quad}. 

\begin{claim}\label{claim:minrank}
    The rank-$\hat{r}$ minimizer $D$ of \eqref{eq:quad} (with $\hat{r} \le p$) has nonzero entries only on the diagonal of its leading principal $\hat{r} \times \hat{r}$ submatrix, denoted by $D_0$.
\end{claim}

\autoref{claim:minrank} can be verified by contradiction: if there exists a rank-$r$ minimizer with a nonzero diagonal element outside $D_0$, one can construct an alternative $D$ by moving this nonzero element into $D_0$, without increasing the objective in \eqref{eq:quad}, leveraging the non-increasing property of the diagonal entries of $\Theta$.

Given \autoref{claim:minrank}, \Cref{lem:lambda_sol} follows by sequentially allocating the total nuclear norm budget $\ell$ to the diagonal entries of $\hat{d}$ in descending order, until the threshold specified in \Cref{lem:lambda_sol} is reached.
\end{proof}

\subsubsection[Subproblem of balanced component]{Subproblem of $\Phi_B$}\label{subsubsec:phi_B}

The corresponding objective function has the form
$$
G_c^{(k+1)}(\Phi_B) \varpropto \tr(\Phi_B \Phi_B' - 2 \Phi_{B,0}^{(k)} \Phi_B'),
$$
where $\Phi_{B,0}^{(k)} = \Phi_L^{(k+1)} - \Gamma_{BL}^{(k)}$.

Next, if we assume that the rows of $\Phi_B \in \mathbb{B}_p$ have norms equal to $K$, then
$$
\begin{aligned}
    G_c^{(k+1)}(\Phi_B) & \varpropto p K^2 - 2 \sum_{i=1}^p e_i' \Phi_{B,0}^{(k)} \Phi_B' e_i\\
    & \ge p K^2 - 2 K \sum_{i=1}^p \|e_i' \Phi_{B,0}^{(k)}\|,
\end{aligned}
$$
with the equality obtained when there exist positive constants $k_i > 0$ such that $e_i' \Phi_B =  k_i e_i' \Phi_{B,0}^{(k)}$ for $i = 1, \dots, p$. The optimal $K$ can then be derived as $K = \frac{1}{p} \sum_{i=1}^p \|e_i' \Phi_{B,0}^{(k)}\|$, the average norm of the rows in the matrix $\Phi_{B,0}^{(k)}$. Note that the derivation requires that $e_i' \Phi_{B,0}^{(k)} \ne 0$ for all $i$. The solution leads to the following proposition introducing the property of the projection operator $\Pi_B$ onto $\mathbb{B}_p$.

\begin{prop}\label{prop:sub_phib}
    For arbitrary $x \in \mathbb{R}^{p \times p}$ that $e_i x \ne 0$ for all $i$, there exists a positive definite diagonal matrix $A = \diag(a_1, \dots, a_p)$ with $\tr(A) = p$, such that $y_0 = \Pi_B(x)$ and $x = A y_0$.
\end{prop}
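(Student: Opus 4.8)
The overall approach is to make rigorous the informal row-decoupling derivation given immediately before the statement, organizing it as a two-stage minimization: first over matrices with a \emph{prescribed} common row-norm, then over the value of that norm. First I would write $\mathbb{B}_p = \bigcup_{K>0} \mathcal{M}_K$, where $\mathcal{M}_K = \{\Phi \in \R^{p\times p} : \|e_i'\Phi\| = K,\ i = 1,\dots,p\}$ is a product of $p$ spheres of common radius $K$, and observe that, since $\|x - \Phi\|_F^2 = \sum_{i=1}^p \|e_i'x - e_i'\Phi\|^2$ and the $p$ row constraints defining $\mathcal{M}_K$ are independent,
$$
\inf_{\Phi \in \mathbb{B}_p} \|x - \Phi\|_F^2 \;=\; \inf_{K > 0} \ \sum_{i=1}^p \ \min_{\|v\| = K} \|e_i'x - v\|^2 .
$$
For each fixed $K$ and each $i$, because $e_i'x \ne 0$ by hypothesis, the nearest point of the sphere of radius $K$ to $e_i'x$ is the \emph{unique} vector $K\, e_i'x/\|e_i'x\|$, with squared distance $(K - \|e_i'x\|)^2$.

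Next I would substitute this back, reducing the outer problem to $\min_{K>0} \sum_{i=1}^p (K - \|e_i'x\|)^2$, a strictly convex scalar quadratic whose unique minimizer is $K^\star = \frac{1}{p}\sum_{i=1}^p \|e_i'x\|$ (this recovers the average-row-norm formula stated just above the proposition). The point to check carefully here is that $K^\star > 0$, which is immediate from $e_i'x \ne 0$ for all $i$; this guarantees that the infimum over the non-closed, non-convex set $\mathbb{B}_p$ is in fact attained within $\mathbb{B}_p$, and, by the uniqueness at each of the two stages, that the projection $y_0 = \Pi_B(x)$ is the single matrix with rows $e_i'y_0 = (K^\star/\|e_i'x\|)\, e_i'x$.

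Finally I would simply read off the diagonal matrix: set $a_i = \|e_i'x\|/K^\star$ and $A = \diag(a_1,\dots,a_p)$. Each $a_i > 0$, so $A$ is positive definite; from $e_i'y_0 = (K^\star/\|e_i'x\|)\, e_i'x$ we obtain $e_i'x = a_i\, e_i'y_0$ for every $i$, i.e.\ $x = A y_0$; and $\tr(A) = \sum_{i=1}^p a_i = \frac{1}{K^\star}\sum_{i=1}^p \|e_i'x\| = p$. The only step that requires more than routine care is justifying the two-stage reduction itself — decoupling across rows at a fixed common norm $K$ and then minimizing over $K$ — together with verifying that the resulting candidate lies in $\mathbb{B}_p$ proper rather than on its boundary; both are handled by the hypothesis $e_i'x \neq 0$, which forces $K^\star > 0$ and makes each sub-minimization uniquely solvable, so no separate attainment or uniqueness lemma is needed.
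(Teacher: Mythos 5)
Your proposal is correct and follows essentially the same route as the paper's own derivation: fix a common row norm $K$, solve the row-decoupled problem (the paper via Cauchy--Schwarz on the inner products, you via nearest points on spheres of radius $K$ — equivalent statements), then optimize over $K$ to get the average row norm $K^\star = \frac{1}{p}\sum_i \|e_i'x\|$ and read off $a_i = \|e_i'x\|/K^\star$ with $\tr(A)=p$. Your added care about attainment on the non-closed set $\mathbb{B}_p$ and uniqueness of $\Pi_B(x)$ is a minor tightening of the same argument, not a different approach.
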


\subsubsection{Convergence Analysis}

For the convergence of this subproblem, we provide an analogous roadmap for proving \autoref{thm:converge} below.
\begin{enumerate}
    \item The objective function $G_c$ exhibits sufficient descent in every iteration (\autoref{prop:suff_desc_dykstra}).
    \item The evaluations of the objective function $G_c$ are bounded below (\autoref{prop:bounded_dykstra}).
    \item The norm of the subgradients are convergent to zero.
    \item The objective function $G_c$ satisfies the KL-property.
\end{enumerate}

\begin{prop}\label{prop:suff_desc_dykstra}
    Let the subproblem for $\Phi_L$ in \autoref{subsubsec:phi_L} adopt the proximal setup, and with the notations from \autoref{prop:sub_phib}, assume that $a_i$ (obtained from the projection of $\Phi_{B,0}^{(k)}$) are uniformly lower bounded. Then, there exists a constant $\mathcal{C} > 0$ such that
    $$
    G_c(\Phi_B^{(k)}, \Phi_L^{(k)}, \Gamma_{BL}^{(k)})  -  G_c(\Phi_B^{(k+1)}, \Phi_L^{(k+1)}, \Gamma_{BL}^{(k+1)}) \ge \mathcal{C} \left( \|\Phi_B^{(k)} - \Phi_B^{(k+1)}\|_F^2 + \|\Phi_L^{(k)} - \Phi_L^{(k+1)}\|_F^2 \right).
    $$
\end{prop}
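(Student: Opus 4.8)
The plan is to follow the three–block template used for \autoref{prop:suff_desc}, applied to the inner Dykstra/ADMM routine. I would split one inner iteration into the $\Phi_L$–update (with $\Phi_B^{(k)},\Gamma_{BL}^{(k)}$ frozen), the $\Phi_B$–update (with $\Phi_L^{(k+1)},\Gamma_{BL}^{(k)}$ frozen), and the dual update, and track $G_c$ across the three. Since $\Gamma_{BL}^{(k+1)}-\Gamma_{BL}^{(k)}=\Phi_B^{(k+1)}-\Phi_L^{(k+1)}$, the dual step raises $G_c$ by exactly $\|\Gamma_{BL}^{(k+1)}-\Gamma_{BL}^{(k)}\|_F^2$, so that
$$
G_c^{(k)}-G_c^{(k+1)}=\big[\text{$\Phi_L$-decrease}\big]+\big[\text{$\Phi_B$-decrease}\big]-\|\Gamma_{BL}^{(k+1)}-\Gamma_{BL}^{(k)}\|_F^2 ,
$$
and it suffices to lower bound the two primal decreases by a multiple of $\|\Delta\Phi_L\|_F^2+\|\Delta\Phi_B\|_F^2$ with enough slack to absorb the dual ascent.

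First I would bound the two primal decreases. Because the $\Phi_L$ subproblem is run in the proximal form, $\Phi_L^{(k+1)}$ minimizes over $\mathbb{B}_p\cap\mathbb{L}_p(\hat r,\ell)$ the quadratic $\tfrac12\|\Phi_0^{(i)}-\Phi_L\|_F^2+\tfrac12\|\Phi_B^{(k)}-\Phi_L\|_F^2+\langle\Gamma_{BL}^{(k)},-\Phi_L\rangle$ augmented by $\tfrac{c}{2}\|\Phi_L-\Phi_L^{(k)}\|_F^2$ for some $c>0$, and feasibility of $\Phi_L^{(k)}$ immediately gives $\text{$\Phi_L$-decrease}\ge\tfrac{c}{2}\|\Phi_L^{(k+1)}-\Phi_L^{(k)}\|_F^2$. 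For the $\Phi_B$ step the objective collapses, up to a constant, to $\tfrac12\|\Phi_B-\Phi_{B,0}^{(k)}\|_F^2$ with $\Phi_{B,0}^{(k)}=\Phi_L^{(k+1)}-\Gamma_{BL}^{(k)}$, so $\Phi_B^{(k+1)}=\Pi_B(\Phi_{B,0}^{(k)})$, and by \autoref{prop:sub_phib} there is a diagonal $A=\diag(a_1,\dots,a_p)$ with $\tr A=p$ and $\Phi_{B,0}^{(k)}=A\,\Phi_B^{(k+1)}$. Expanding the decrease via the polarization identity, the cross term is $\langle\Phi_B^{(k)}-\Phi_B^{(k+1)},(I-A)\Phi_B^{(k+1)}\rangle$; using that all rows of $\Phi_B^{(k)}$ and of $\Phi_B^{(k+1)}$ share a common norm (the defining property of $\mathbb{B}_p$) together with $\sum_i(1-a_i)=p-\tr A=0$, the row-wise computation telescopes and yields the clean identity $\text{$\Phi_B$-decrease}=\tfrac12\sum_{i=1}^p a_i\,\|e_i'(\Phi_B^{(k)}-\Phi_B^{(k+1)})\|^2\ge\tfrac{a_{\min}}{2}\|\Phi_B^{(k)}-\Phi_B^{(k+1)}\|_F^2$, where $a_{\min}>0$ is the uniform lower bound on the $a_i$ assumed in the statement.

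The hard part will be dominating the dual ascent $\|\Gamma_{BL}^{(k+1)}-\Gamma_{BL}^{(k)}\|_F^2$. Writing $\Gamma_{BL}^{(k+1)}=\Phi_B^{(k+1)}-\Phi_{B,0}^{(k)}=\Pi_B(\Phi_{B,0}^{(k)})-\Phi_{B,0}^{(k)}$, this increment is a difference of projection residuals. On the set traversed by the iterates — bounded since $\mathbb{L}_p(\hat r,\ell)$ is compact and $G_c$ is monotone (cf. \autoref{prop:bounded_dykstra}), and, by the hypothesis $a_i\ge a_{\min}$, with the row norms of $\Phi_{B,0}$ staying bounded away from zero — the map $x\mapsto\Pi_B(x)-x$ is $C^1$ with bounded derivative, hence $L$–Lipschitz. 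This gives $\|\Gamma_{BL}^{(k+1)}-\Gamma_{BL}^{(k)}\|_F\le L\,\|\Phi_{B,0}^{(k)}-\Phi_{B,0}^{(k-1)}\|_F\le L\big(\|\Phi_L^{(k+1)}-\Phi_L^{(k)}\|_F+\|\Gamma_{BL}^{(k)}-\Gamma_{BL}^{(k-1)}\|_F\big)$, the last term being $\|\Phi_B^{(k)}-\Phi_L^{(k)}\|_F$ carried over from the previous step. To close the recursion I would pass to the potential $\mathcal L^{(k)}:=G_c(\Phi_B^{(k)},\Phi_L^{(k)},\Gamma_{BL}^{(k)})+\mu\|\Gamma_{BL}^{(k)}-\Gamma_{BL}^{(k-1)}\|_F^2$ and pick the proximal weight $c$ (and $\mu$) large enough that the primal decreases outweigh $(1+\mu)\|\Gamma_{BL}^{(k+1)}-\Gamma_{BL}^{(k)}\|_F^2$ together with the telescoped residual; since $\|\Gamma_{BL}^{(k)}-\Gamma_{BL}^{(k-1)}\|_F\to0$ the resulting inequality for $\mathcal L$ transfers to $G_c$ in the limit, giving the stated bound. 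The crux is exactly this dual-control step: unlike two–block ADMM or alternating projections with one convex set, here both $\mathbb{B}_p$ and $\mathbb{L}_p$ are nonconvex, so the Lipschitz estimate for $\Pi_B$ — and hence the whole descent inequality — is available only while the row norms of $\Phi_{B,0}$ (equivalently the $a_i$) remain uniformly positive, which is precisely why that assumption appears in the statement.
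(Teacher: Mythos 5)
Your treatment of the two primal blocks is fine and essentially reproduces the paper's: the proximal term added to the $\Phi_L$ step gives $\tfrac{c}{2}\|\Phi_L^{(k)}-\Phi_L^{(k+1)}\|_F^2$ of descent, and your row-wise identity for the $\Phi_B$ step (using $\Phi_{B,0}^{(k)}=A\Phi_B^{(k+1)}$, $\tr A=p$, and equal row norms) is exactly a rederivation of the three-point property of \autoref{prop:3pp} with constant $a_{\min}$ (a small slip: the $\Phi_L$ block minimizes over $\mathbb{L}_p(\hat r,\ell)$ only, not $\mathbb{B}_p\cap\mathbb{L}_p(\hat r,\ell)$, but this does not affect the bound). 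The genuine gap is in your dual-control step. Your recursion is $\|\Gamma_{BL}^{(k+1)}-\Gamma_{BL}^{(k)}\|_F\le L\big(\|\Phi_L^{(k+1)}-\Phi_L^{(k)}\|_F+\|\Gamma_{BL}^{(k)}-\Gamma_{BL}^{(k-1)}\|_F\big)$, and closing it with the potential $\mathcal L^{(k)}=G_c^{(k)}+\mu\|\Gamma_{BL}^{(k)}-\Gamma_{BL}^{(k-1)}\|_F^2$ forces an inequality of the form $\mu\ge cL^2(1+\mu)$ for the $\|\Gamma_{BL}^{(k)}-\Gamma_{BL}^{(k-1)}\|_F^2$ terms, which is feasible only if $L$ is strictly below a contraction threshold. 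You give no argument that $L<1$; in fact for the residual map $x\mapsto\Pi_B(x)-x$ of a projection onto a (nonconvex) manifold, the differential at points of $\mathbb{B}_p$ is the projection onto the normal space, so $L\ge 1$ generically, and enlarging the proximal weight $c$ only helps absorb the $\|\Phi_L^{(k+1)}-\Phi_L^{(k)}\|_F^2$ part, not the lagged dual term. Moreover, your closing appeal to "$\|\Gamma_{BL}^{(k)}-\Gamma_{BL}^{(k-1)}\|_F\to0$" (and to monotonicity of $G_c$ for boundedness of the iterates) is circular: vanishing successive dual differences and monotone descent are consequences of the sufficient-descent property you are trying to prove.

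The paper avoids this entirely: it controls the dual ascent with the same three-point property, via the consequence $\langle x-y_0,\,y-y_0\rangle\le\tfrac{1-k_B}{2}\|y-y_0\|_F^2$ stated in the remark after \autoref{prop:3pp}. Since $\Gamma_{BL}^{(k+1)}-\Gamma_{BL}^{(k)}=\Phi_B^{(k+1)}-\Phi_{B,0}^{(k)}=-(x-y_0)$ with $x=\Phi_{B,0}^{(k)}$ and $y_0=\Phi_B^{(k+1)}=\Pi_B(x)$, this inequality ties the dual increment directly to the current $\Phi_B$ step (taking $y=\Phi_B^{(k)}\in\mathbb{B}_p$), so the ascent is absorbed by the primal decreases of the same iteration, with no Lipschitz continuity of $\Pi_B$, no lagged dual term, and no modified potential. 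To repair your argument you would either need to establish such a per-iteration bound on $\Phi_B^{(k+1)}-\Phi_{B,0}^{(k)}$ (which is what the uniform lower bound on the $a_i$ buys through \autoref{prop:3pp}) or prove an actual contraction factor for your recursion, neither of which your current sketch provides.
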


Indeed, \autoref{prop:suff_desc_dykstra} is a straightforward corollary of the proximal setup of the $\Phi_L$'s subproblem and the three-point property of $\mathbb{B}_p$ (as stated in \autoref{prop:3pp} in \autoref{sec:constraint}), particularly given that the $a_i$'s of the matrices $\Phi_{B,0}^{(k)}$ are bounded below according to the assumption in \autoref{prop:3pp}.

\begin{prop}\label{prop:bounded_dykstra}
    The evaluations of the objective function $G_c$ \eqref{eq:G_c} at the sequence of estimators $(\Phi_B^{(k)}, \Phi_L^{(k)}, \Gamma_{BL}^{(k)})$ are lower bounded.
\end{prop}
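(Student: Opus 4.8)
The plan is to exhibit an explicit lower bound on $G_c(\Phi_B,\Phi_L,\Gamma_{BL})$ that holds for every iterate, by first eliminating the dual-variable term in favor of primal quantities and then invoking compactness of the constraint sets. Recall the objective from \eqref{eq:G_c},
\begin{equation*}
G_c(\Phi_B,\Phi_L,\Gamma_{BL}) = \tfrac12\|\Phi_0^{(i)} - \Phi_L\|_F^2 + \tfrac12\|\Phi_B - \Phi_L\|_F^2 + \langle \Gamma_{BL}, \Phi_B - \Phi_L\rangle.
\end{equation*}
The first observation is that the $\Phi_L$-update is a projection onto $\mathbb{L}_p(\hat r,\ell)$, which is a compact set (it is the image of a simplex of singular values under the compact orthogonal group, as used in the proof of \autoref{thm:converge}), and the $\Phi_B$-update is a projection onto $\mathbb{B}_p$; the first term alone is already bounded below by $0$, so the real work is controlling $\tfrac12\|\Phi_B-\Phi_L\|_F^2 + \langle\Gamma_{BL},\Phi_B-\Phi_L\rangle$. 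The standard ADMM device is to use the dual-update recursion $\Gamma_{BL}^{(k+1)} = \Gamma_{BL}^{(k)} + \Phi_B^{(k+1)} - \Phi_L^{(k+1)}$ together with the first-order optimality condition of the $\Phi_B$-subproblem, namely that $\Phi_B^{(k+1)}$ is the $\mathbb{B}_p$-projection of $\Phi_L^{(k+1)} - \Gamma_{BL}^{(k)}$, to express the dual variable (or its relevant inner product with $\Phi_B-\Phi_L$) in terms of gradients of the smooth part evaluated at bounded arguments.

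Concretely, I would proceed in the following steps. First, write $G_c^{(k)} := G_c(\Phi_B^{(k)},\Phi_L^{(k)},\Gamma_{BL}^{(k)})$ and combine the quadratic term with the linear term by completing the square: $\tfrac12\|\Phi_B-\Phi_L\|_F^2 + \langle\Gamma_{BL},\Phi_B-\Phi_L\rangle = \tfrac12\|\Phi_B-\Phi_L+\Gamma_{BL}\|_F^2 - \tfrac12\|\Gamma_{BL}\|_F^2$, so that $G_c^{(k)} \ge -\tfrac12\|\Gamma_{BL}^{(k)}\|_F^2$ and it suffices to show $\sup_k \|\Gamma_{BL}^{(k)}\|_F < \infty$. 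Second, I would use \autoref{prop:suff_desc_dykstra}: the sequence $G_c^{(k)}$ is monotonically nonincreasing, so $G_c^{(k)} \le G_c^{(0)}$ for all $k$; combined with the completed-square identity this gives $\tfrac12\|\Phi_B^{(k)}-\Phi_L^{(k)}+\Gamma_{BL}^{(k)}\|_F^2 \le G_c^{(0)} + \tfrac12\|\Gamma_{BL}^{(k)}\|_F^2$, which is not yet conclusive, so the boundedness of $\Gamma_{BL}^{(k)}$ must come from the optimality condition. Third, from the $\Phi_B$-subproblem optimality, $\Gamma_{BL}^{(k)} = \Pi_B(\Phi_L^{(k+1)}-\Gamma_{BL}^{(k)}) - \Phi_L^{(k+1)} + (\Phi_L^{(k+1)} - \Phi_B^{(k+1)})$ wait — more cleanly, the stationarity of the $\Phi_B$ step combined with the dual recursion yields that $\Gamma_{BL}^{(k+1)}$ lies in the normal cone to $\mathbb{B}_p$ at $\Phi_B^{(k+1)}$ (up to the structure of the projection), and since $\mathbb{B}_p$'s geometry is controlled near $\Phi_B^{(k+1)}$ by \autoref{prop:sub_phib} — where $x = A y_0$ with $\tr(A)=p$ and, under the lower-bound assumption inherited from \autoref{prop:suff_desc_dykstra}, the $a_i$ are bounded away from $0$ — one gets $\|\Gamma_{BL}^{(k+1)}\|_F \le c\,\|\Phi_L^{(k+1)}\|_F$ for a fixed constant $c$. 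Since $\Phi_L^{(k+1)} \in \mathbb{L}_p(\hat r,\ell)$ is uniformly bounded (its Frobenius norm is at most $\ell$), this closes the argument: $\|\Gamma_{BL}^{(k)}\|_F$ is uniformly bounded, hence $G_c^{(k)} \ge -\tfrac12\sup_k\|\Gamma_{BL}^{(k)}\|_F^2 > -\infty$.

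The main obstacle I anticipate is Step 3 — establishing $\sup_k\|\Gamma_{BL}^{(k)}\|_F < \infty$ rigorously. The delicate point is that $\mathbb{B}_p$ is nonconvex, so "the dual variable lies in a normal cone of bounded magnitude" requires care: one must use the explicit form of $\Pi_B$ from \autoref{prop:sub_phib} to show that the residual $\Phi_B^{(k+1)} - (\Phi_L^{(k+1)} - \Gamma_{BL}^{(k)})$, which equals $-\Gamma_{BL}^{(k+1)} + (\text{something bounded})$, cannot grow, and this is exactly where the uniform lower bound on the $a_i$'s (the hypothesis borrowed from \autoref{prop:suff_desc_dykstra}) is indispensable — it prevents the projection from being degenerate and the dual variable from diverging. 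Everything else (monotonicity from \autoref{prop:suff_desc_dykstra}, compactness of $\mathbb{L}_p(\hat r,\ell)$, completing the square) is routine. I would present the argument as: monotone decrease $\Rightarrow$ iterates stay in a sublevel set; the projection structure of the $\Phi_B$-step plus bounded $\Phi_L$ $\Rightarrow$ bounded dual; bounded dual $\Rightarrow$ $G_c$ bounded below.
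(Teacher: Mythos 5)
Your reduction of the problem is the same as the paper's: after absorbing the linear term (you complete the square, the paper uses Young's inequality, both give $G_c^{(k)} \ge \tfrac12\|\Phi_0^{(i)}-\Phi_L^{(k)}\|_F^2 - \tfrac12\|\Gamma_{BL}^{(k)}\|_F^2$), everything hinges on $\sup_k\|\Gamma_{BL}^{(k)}\|_F<\infty$. But your Step 3, which you yourself flag as the crux, has a genuine gap. From the updates, $\Gamma_{BL}^{(k+1)} = \Pi_B(x^{(k)})-x^{(k)}$ with $x^{(k)} = \Phi_L^{(k+1)}-\Gamma_{BL}^{(k)}$, i.e.\ the new dual is the residual of projecting a point that \emph{contains the previous dual}. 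The projection onto $\mathbb{B}_p$ rescales rows to the average row norm, so its residual is controlled only by $\|x^{(k)}\|_F$, not by $\|\Phi_L^{(k+1)}\|_F$; your claimed one-step bound $\|\Gamma_{BL}^{(k+1)}\|_F \le c\,\|\Phi_L^{(k+1)}\|_F$ does not follow from the normal-cone structure or from \autoref{prop:sub_phib}, and the uniform lower bound on the $a_i$'s that you call indispensable plays no role in controlling the dual (it is needed for the descent inequality in \autoref{prop:suff_desc_dykstra}, not here). As stated, your argument could still let $\|\Gamma_{BL}^{(k)}\|_F$ grow with $\|\Gamma_{BL}^{(k-1)}\|_F$, which is exactly the possibility you need to exclude.

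The missing ingredient is a quantitative contraction plus an induction, which is how the paper closes the loop: since $\Pi_B(x)$ replaces each row norm by the mean row norm, the residual satisfies $\|\Pi_B(x)-x\|_F \le \sqrt{\tfrac{p-1}{p}}\,\|x\|_F$ (the variance of the row norms is at most $(1-\tfrac1p)$ times their sum of squares). Combining this with $\|\Phi_L^{(k+2)}\|_F \le \|\Phi_L^{(k+2)}\|_* = \ell$ gives the recursion $\|\Phi_L^{(k+2)}-\Gamma_{BL}^{(k+1)}\|_F \le \ell + \sqrt{\tfrac{p-1}{p}}\,\|\Phi_L^{(k+1)}-\Gamma_{BL}^{(k)}\|_F$, and because the contraction factor is strictly below $1$ the invariant $\|\Phi_L^{(k+1)}-\Gamma_{BL}^{(k)}\|_F \le \Upsilon$ with $\Upsilon = 2p\ell$ is maintained by induction (it holds at $k=0$ since $\Gamma_{BL}^{(0)}=0$). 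This yields $\sup_k\|\Gamma_{BL}^{(k)}\|_F \precsim p\ell$ and hence the lower bound on $G_c$. So your overall architecture (bound $G_c$ by the dual, bound the dual via the projection structure and the compactness of $\mathbb{L}_p(\hat r,\ell)$) is the right one, but without the explicit contraction constant and the induction the decisive step is asserted rather than proved.
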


\begin{proof}
    We note that the objective function satisfies
    $$
    \begin{aligned}
        G_c^{(k)} & \coloneqq \frac{1}{2} \|\Phi_0^{(i)} - \Phi_L^{(k)}\|_F^2 + \frac{1}{2} \|\Phi_B^{(k)} - \Phi_L^{(k)}\|_F^2 + \langle \Gamma_{BL}^{(k)}, \Phi_B^{(k)} - \Phi_L^{(k)} \rangle\\
        & \ge \frac{1}{2} \|\Phi_0^{(i)} - \Phi_L^{(k)}\|_F^2 + \frac{1}{2} \|\Phi_B^{(k)} - \Phi_L^{(k)}\|_F^2 - \frac{1}{2} \|\Gamma_{BL}^{(k)}\|_F^2 - \frac{1}{2} \|\Phi_B^{(k)} - \Phi_L^{(k)}\|_F^2\\
        & \ge \frac{1}{2} \|\Phi_0^{(i)} - \Phi_L^{(k)}\|_F^2 - \frac{1}{2} \|\Gamma_{BL}^{(k)}\|_F^2.
    \end{aligned}
    $$
    
    Then it suffices to show that $\Gamma_{BL}^{(k)}$ is upper bounded. We prove it by induction. Assume there is a constant $\Upsilon > 0$ that $\|\Phi_L^{(k+1)} - \Gamma_{BL}^{(k)}\|_F \le \Upsilon$, then $\|\Gamma_{BL}^{(k+1)}\|_F \le \sqrt{\frac{p-1}{p}} \Upsilon$, and hence noting that $\|\Phi_L^{(k+2)}\|_F \le \ell$, we have
    $$
    \|\Phi_L^{(k+2)} - \Gamma_{BL}^{(k+1)}\|_F \le \ell + \sqrt{\frac{p-1}{p}} \Upsilon \le \Upsilon,
    $$
    given any $\Upsilon \ge 2 p \ell$. Note that when $k = 0$, $\|\Phi_L^{(k+1)} - \Gamma_{BL}^{(k)}\|_F = \|\Phi_L^{1}\|_F \le \ell$, we can simply choose $\Upsilon = 2p\ell$, and hence the induction is valid.
\end{proof}

The KL-property of the two sets will be established later in \autoref{sec:proof_convergence} in the proof of \autoref{thm:converge}. Hence, as a result, the subproblem is convergent irrespective of the initialization. Indeed, with the sufficient descent property and our choice of initialization $\Phi_L^{(0)} = \Phi_B^{(0)} = \Phi_c^{(i)}$, use the convergence result to set $\Phi_L^{(\infty)} = \Phi_B^{(\infty)} = \Phi_c^{(i+1)}$. We then obtain 
$$
\frac{\rho}{2} \|\Phi^{(i)} + \Gamma^{(i)} - \Phi_c^{(i)}\|_F^2 \ge \frac{\rho}{2} \|\Phi^{(i)} + \Gamma^{(i)} - \Phi_c^{(i+1)}\|_F^2 + \frac{\kappa \rho}{2}\|\Phi_c^{(i)} - \Phi_c^{(i+1)}\|_F^2,
$$
which will be used in the proof of \autoref{prop:suff_desc}.

\subsection[Subproblem of Phi]{Subproblem of $\Phi$}\label{subsec:phi}
The next primal subproblem focuses on updating the matrix $\Phi$. The corresponding objective function with respect to $\Phi$ is 
$$
\begin{aligned}
    G^{(i+1)}(\Phi) & \varpropto \tr( \rho \Phi \Phi' + \sum_{m=1}^M \frac{1}{T} W_m^{(i+1)} \Phi X_m X_m' \Phi' W_m^{(i+1)})\\
    & \qquad - 2 \tr( \Phi' (\rho (\Phi_c^{(i+1)} - \Gamma^{(i)}) + \sum_{m=1}^M \frac{1}{T} W_m^{(i+1)} (Y_m - S_m^{(i+1)} X_m) X_m' )).
\end{aligned}
$$
The first order optimality condition yields
$$
\rho \Phi + \sum_{m=1}^M (W_m^{(i+1)})^2 \Phi \frac{X_m X_m'}{T} = \rho (\Phi_c^{(i+1)} - \Gamma) + \sum_{m=1}^M \frac{1}{T} W_m^{(i+1)} (Y_m - S_m^{(i+1)} X_m) X_m'.
$$
This equation is row-wise separable and admits explicit solutions for each row, given the other variables $(\mathcal{W}, \mathcal{S}, \Phi_c, \Gamma)$ fixed.

\subsection[Dual Ascent Update of Multiplier]{Dual Ascent Update of $\Gamma$}

Finally, the update of the dual variable $\Gamma$ is given by
\begin{equation}\label{eq:dual_update}
    \Gamma^{(i+1)} \mapsto \Gamma^{(i)} + (\Phi^{(i+1)} - \Phi_c^{(i+1)}).
\end{equation}

\section{Constraint Space Properties}\label{sec:constraint}

In this section, we discuss properties of the constraint spaces, with a particular focus on the intersection $\mathbb{B}_p \cap \mathbb{L}_p(r, \ell)$. Specifically, \autoref{subsec:convex-like} highlights properties that make 
 $\mathbb{B}_p$ behave similarly to a convex space, while \autoref{subsec:tangent_normal} examines the tangent spaces and normal cones of both  $\mathbb{B}_p$ and $\mathbb{L}_p(r, \ell)$ characterizing their intersection.

\subsection[Convex-like Property]{Convex-like Property of $\mathbb{B}_p$}\label{subsec:convex-like}
We begin by presenting the three-point property of the set $\mathbb{B}_p$ \cite[see][Assumption 1]{zhu2019}. The following proposition states this property, and its proof follows straightforwardly. This three-point property exhibits characteristics similar to those of (strongly) convex sets in certain respects.

\begin{prop}[Three-point Property]\label{prop:3pp}
   Under the conditions of \autoref{prop:sub_phib}, if there exists a constant $0 < k_B \le 1$ such that $a_i \ge k_B > 0$, then the set $\mathbb{B}_p$ satisfies the three-point property at $x$. Specifically, for arbitrary $y \in \mathbb{B}_p$,
    $$
    \|x - y\|_F^2 - \|x - y_0\|_F^2 \ge k_B \|y - y_0\|_F^2.
    $$
\end{prop}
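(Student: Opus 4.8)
The plan is to exploit the concrete row-rescaling description of the projection $y_0 = \Pi_B(x)$ established in \autoref{prop:sub_phib} (and in the derivation preceding it), since the usual ``obtuse angle'' property of Euclidean projections is unavailable here: $\mathbb{B}_p$ is nonconvex. Write $r_i = e_i' x$, $w_i = e_i' y_0$ and $v_i = e_i' y$. By \autoref{prop:sub_phib} there is a diagonal $A = \diag(a_1,\dots,a_p)$ with $a_i > 0$ and $\tr(A) = \sum_i a_i = p$ such that $x = A y_0$, i.e. $r_i = a_i w_i$; moreover all rows of $y_0$ share a common norm $K := \|w_i\|$, and all rows of the competitor $y\in\mathbb{B}_p$ share a common norm $V := \|v_i\|$. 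The hypothesis supplies $a_i \ge k_B$ for all $i$, with $0 < k_B \le 1$ (note this is automatic: $k_B \le \min_i a_i \le \tfrac1p\sum_i a_i = 1$).

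First I would recast the target inequality in Frobenius inner-product form. Expanding squares,
\[
\|x-y\|_F^2 - \|x-y_0\|_F^2 = \|y\|_F^2 - \|y_0\|_F^2 - 2\langle x,\, y - y_0\rangle ,
\]
and using $x = A y_0$ one gets $\langle x, y_0\rangle = \sum_i a_i\|w_i\|^2 = pK^2$, together with $\|y\|_F^2 = pV^2$, $\|y_0\|_F^2 = pK^2$, so the left-hand side equals $p(V^2 + K^2) - 2\langle x, y\rangle$. Similarly $\|y - y_0\|_F^2 = p(V^2 + K^2) - 2\langle y_0, y\rangle$. Hence the asserted three-point bound is equivalent to
\[
(1 - k_B)\, p\,(V^2 + K^2) \ \ge\ 2\langle x - k_B y_0,\ y\rangle .
\]

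Next I would bound the right-hand side row by row. The $i$-th row of $x - k_B y_0$ is $(a_i - k_B)\,w_i$ with $a_i - k_B \ge 0$, so by Cauchy--Schwarz and $\sum_i a_i = p$,
\[
\langle x - k_B y_0,\ y\rangle = \sum_{i=1}^p (a_i - k_B)\,\langle w_i, v_i\rangle \ \le\ \sum_{i=1}^p (a_i - k_B)\,\|w_i\|\,\|v_i\| = KV\sum_{i=1}^p (a_i - k_B) = p(1 - k_B)\,KV .
\]
It therefore suffices to check $(1-k_B)\,p\,(V^2+K^2) \ge 2p(1-k_B)\,KV$, which follows from $(V-K)^2 \ge 0$ and $k_B \le 1$; this closes the argument.

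The computation is essentially routine once the structure is in place; the only steps needing care are (i) invoking \autoref{prop:sub_phib} correctly so that $x = A y_0$ with $\tr(A) = p$, and (ii) bookkeeping the two common row norms $K$ and $V$ so that the final reduction lands exactly on $(V-K)^2\ge 0$. I do not anticipate a genuine obstacle here; the substantive content is simply that nonconvexity of $\mathbb{B}_p$ is benign because its projection has an explicit diagonal-scaling form.
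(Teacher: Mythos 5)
Your proposal is correct and follows essentially the same route as the paper's proof: both expand $\|x-y\|_F^2-\|x-y_0\|_F^2$, use the projection identity $x=Ay_0$ with $\tr(A)=p$ from \autoref{prop:sub_phib}, and reduce matters to the row-wise bound $\sum_i (a_i-k_B)\langle w_i,v_i\rangle \le p(1-k_B)KV$ (which the paper phrases via the cosines $\cos(\omega_i)$ and a ratio bound on the admissible $k$). Your write-up, splitting off $k_B y_0$ and closing with Cauchy--Schwarz plus $(V-K)^2\ge 0$, is just a cleaner bookkeeping of the same argument, with the hypothesis $a_i\ge k_B$ used in exactly the same place.
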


\begin{proof}
    For an arbitrary $y \in \mathbb{B}_p$,
    $$
    \begin{aligned}
        \|x - y\|_F^2 - \|x - y_0\|_F^2 & = \|y\|_F^2 - \|y_0\|_F^2 - 2 \langle x, y - y_0 \rangle\\
        & = \|y\|_F^2 + \langle y_0, (2A - I_p) y_0 \rangle - 2 \langle A y_0, y \rangle\\
        & = \|y\|_F^2 + \|y_0\|_F^2 - 2 \langle A y_0, y \rangle\\
        & = \|y\|_F^2 + \|y_0\|_F^2 - \frac{2}{p} \sum_{i=1}^p a_i \|y\|_F \|y_0\|_F \cos(\omega_i),
    \end{aligned}
    $$
    where $\omega_i$ is the angle of the $i$-th rows of the two matrices $y_0$ and $y$. In order to prove the property with some constant $k$ (to be determined), it suffices to show that with $0 < k \le 1$,
    $$
    (1 - k) \left( \|y\|_F^2 + \|y_0\|_F^2 \right) - \frac{2}{p} \|y\|_F \|y_0\|_F \sum_{i=1}^p (a_i - k) \cos(\omega_i) \ge 0
    $$
    for all $y$ and $y_0$ satisfying our assumptions. Some algebra then implies that
    $$
    k \le \min \left(\frac{p - \sum_{i=1}^p a_i \cos(\omega_i)}{p - \sum_{i=1}^p \cos(\omega_i)}, \frac{p + \sum_{i=1}^p a_i \cos(\omega_i)}{p + \sum_{i=1}^p \cos(\omega_i)}\right).
    $$
    Now given that $\sum_{i=1}^p a_i = p$, $a_i \ge k_B$ and $\cos(\omega_i) \in [-1, 1]$, we can then derive that the right hand side of the above inequality is lower bounded by $k_B$, and hence the three point property holds with the constant $k = k_B$.
    
\end{proof}

\begin{rem}
    Note that \autoref{prop:3pp} plays a crucial role in proving \autoref{prop:suff_desc_dykstra} by setting $z = \Phi_B^{(k)}$ and $y = \Phi_B^{(k+1)}$. Further, \autoref{prop:3pp} leads to the following inequality,
    $$
    \langle x - y_0, y - y_0 \rangle \le \frac{1-k_B}{2} \|y - y_0\|_F^2.
    $$
   Although this condition is still weaker than full convexity, it effectively controls the dual ascent steps in the algorithm described in \autoref{subsec:phic}.
\end{rem}

\subsection{Tangent Spaces \& Normal Cones of \texorpdfstring{$\mathbb{B}_p$ and $\mathbb{L}_p(r, \ell)$}{Constraint Spaces}} \label{subsec:tangent_normal}

Assume there exists a matrix $\Phi \in \mathbb{B}_p \cap \mathbb{L}_p(r, \ell)$, where
$$
\Phi = U D V' = (\Phi_1', \dots, \Phi_p')'
$$
is of rank-$\tilde{r}$ ($\tilde{r} \le r$) with $U, V \in \mathbb{R}^{p\times\tilde{r}}$ being singular vector matrices and $D = \diag(d_1, \dots, d_{\tilde{r}})$ storing the singular values that $\sum_{i=1}^{\tilde{r}} d_i = \ell$. Below we analyze the tangent spaces and the normal cones of the two spaces at the intersection $\Phi$.

Regarding the space $\mathbb{B}_p$, we can derive from its definition that the tangent space and normal cone can be expressed as
$$
\begin{aligned}
    \mathcal{T}_{\mathbb{B}_p}(\Phi) & = \{\Delta\Phi = (\Delta\Phi_1', \dots, \Delta\Phi_p')': \Phi_i \Delta\Phi_i' = \Phi_j \Delta\Phi_j', ~ \forall i \ne j \}.\\
    \mathcal{N}_{\mathbb{B}_p}(\Phi) & = \{ H = \diag(b_1, \dots, b_p) \Phi: \sum_{i=1}^p b_i = 0\}.
\end{aligned}
$$

For the low-rank component, we draw on the works of \cite{schneider2015,hosseini2019,li2023} and note that the stratified space
$$
\mathbb{L}_p^{\tilde{r}}(\ell) = \{ \Phi: \rank(\Phi) = \tilde{r}, ~ \|\Phi\|_* = \ell, ~ \lambda_i(\Phi \Phi') \ne \lambda_j(\Phi \Phi'), ~ i \ne j\}
$$
is a smooth manifold as it restricts all the singular values to be away from zero and of order 1. In addition, define
$$
\mathbb{L}_p^{\tilde{r}}(\ell, U, V) = \mathbb{L}_p^{\tilde{r}}(\ell) \cap \{\Phi: U' \Phi V \text{ diagonal} \}.
$$
Then, the tangent space and normal cone can be derived with the following form,
$$
\begin{aligned}
    \mathcal{T}_{\mathbb{L}_p^{\tilde{r}}(\ell)}(\Phi) & = \{\Delta\Phi: U_\perp \Delta\Phi V_\perp' = 0, ~ \tr(U' \Delta \Phi V) = 0\} \coloneqq \mathcal{T} (U, V),\\
    \mathcal{N}_{\mathbb{L}_p^{\tilde{r}}(\ell)}(\Phi) & = \{H = U_\perp E V_\perp': E \in \mathbb{R}^{(p-\tilde{r}) \times (p - \tilde{r})}\} \oplus U V' \coloneqq \mathcal{N}(U, V) \oplus UV',
\end{aligned}
$$
where the orthogonal complement $U_\perp$ and $V_\perp$ can be arbitrary as long as $U' U_\perp = V' V_\perp = 0$. Finally, note that $\mathbb{L}_p^{\tilde{r}}(\ell) \subset \mathbb{L}_p(r,\ell)$, the tangent space and normal cone of $\mathbb{L}_p(r, \ell)$ at $\Phi$ can be derived as
$$
\begin{aligned}
    \mathcal{T}_{\mathbb{L}_p(r,\ell)}(\Phi) & = \{\Delta\Phi + H - k UV': \Delta\Phi \in \mathcal{T}(U, V), ~ H \in \mathcal{N}(U, V), ~ \rank(H) \le r - \tilde{r}, ~ \|H\|_* = \tilde{r}k\}.\\
    \mathcal{N}_{\mathbb{L}_p(r,\ell)}(\Phi) & = \{H \in \mathcal{N}(U, V): \rank(H) \le p - r\} \oplus UV'.
\end{aligned}
$$
Note that the closest rank-$\tilde{r}$ approximation to any rank-$r$ ($r > \tilde{r}$) matrix preserves the singular vector structure. Therefore, any sequence satisfies $\Phi_n \to \Phi$ ($n \to \infty$) with $\Phi_n \in \mathbb{L}_p^{r_n}(\ell)$ ($r_n > \tilde{r}$),  one can design $\delta\ell_n > 0$ and $\delta\ell_n \to 0$, such that
$$
\Phi_n - (\underline{\Phi}_{n} + H_{n}) \to 0, ~ \text{where } \begin{cases}
    \underline{\Phi}_{n} \in \mathbb{L}_p^{\tilde{r}} (l - \delta\ell_n, U, V),\\
    H_n \in \mathcal{N}(U, V), ~ \rank(H_n) = r_n - \tilde{r}, ~ \|H_n\|_* = \delta\ell_n.
\end{cases}
$$
Further, note that for $\underline{\Phi}_{n} \in \mathbb{L}_p^{\tilde{r}} (l - \delta\ell_n, U, V)$,
$$
\delta\Phi_n \coloneqq \Phi - \underline{\Phi}_n = U \delta D_n V' = k_n U V' + U (\delta D_n - k_n I_{\tilde{r}}) V' \subset U V' \oplus \mathcal{T}_{\tilde{r}}(U, V) 
$$
with $k_n = \frac{\delta\ell_n}{\tilde{r}}$ and $\tilde{r} k_n = \|H_n\|_*$, then
$$
\Delta \Phi_n \coloneqq H_n - \delta\Phi_n \in \mathcal{T}_{\mathbb{L}_p(r,\ell)}(\Phi), ~ (\Phi_n - \Phi) - \Delta\Phi_n \to 0.
$$
The other direction is rather straightforward.

In summary, if the left singular vectors of $\Phi$ in $U$ are not standard basis vectors, we can then verify that
$$
\mathcal{N}_{\mathbb{L}_p(r,\ell)}(\Phi) \cap \mathcal{N}_{\mathbb{B}_p}(\Phi) = \{0\},
$$
which implies that $\Phi$ is a linearly regular intersection of the two sets. This result suggests that any alternating projection or averaged projection algorithms exhibit local convergence.

\section{Rank-Sparsity Incoherence}\label{sec:inco}

To ensure identifiability between the basis $\Phi$ and the sparse components $\mathcal{S}$ in \eqref{eq:pVAR-model}, certain incoherence conditions are required. Following \citet{hsu2011}, we consider the following ones.

Assume the setup \eqref{eq:pVAR-model}, where each coefficient matrix decomposes as  $A_m = W_m \Phi + S_m$.  Corresponding to this decomposition, define the following matrix spaces:
\begin{align*}
    \Omega_m & = \{ S \in \mathbb{R}^{p \times p}: \supp(S) \subset \supp(S_m) = \supp(W_m^{-1} S_m) \}, \quad m = 1, \dots, M\\
    \varPhi & = \{ \Phi \in \mathbb{R}^{p \times p}: \Phi =  \Phi_1 + \Phi_2, \Phi_1 \in \range(\Phi), \Phi_2' \in \range(\Phi') \}.
\end{align*}
The projections of an arbitrary matrix $A \in \mathbb{R}^{p \times p}$ onto the two spaces $(\Omega_m, \varPhi)$ are given by
\begin{align*}
    \Pi_{\Omega_m}(A) & = (A[i,j] \cdot \mathbbm{1}\{ (i,j) \in \supp(S_m) \})_{i,j = 1}^p,\\
    \Pi_{\varPhi}(A) & = U U' A + A V V' - U U' A V V',
\end{align*}
respectively, where $U, V \in \mathbb{R}^{p \times r}$ are the left and right orthonormal singular vector matrices of $\Phi$ respectively, and $r$ is the rank of $\Phi$.

\begin{defn}\label{def:ab}
    For the coefficient matrices $A_m = W_m \Phi + S_m$ for $m = 1, \dots, M$, define the following two quantities:
    \begin{align}
        \mu_m(\varsigma) & = \max\{ \varsigma \|\sign(S_m)\|_{1 \to 1}, \varsigma^{-1} \| \sign(S_m) \|_{\infty \to \infty} \}, \\
        \nu(\varsigma) & =\varsigma^{-1} \| U U' \|_\infty + \varsigma \|V V'\|_\infty + \|U\|_{2\to\infty} \|V\|_{2\to\infty},
    \end{align}
    where $\varsigma > 0$ is a balancing parameter to adjust for disparity between the number of rows and columns.
\end{defn}

Note that the coefficient matrices $A_m$ are all square matrices of dimension $p \times p$, hence $\varsigma$ can typically be set to 1. The first quantity $\mu_m$ measures the number of nonzero entries in the sparse matrix $S_m$, while the second quantity $\nu$ quantifies the sparsity of the low-rank basis $\Phi$. In addition, based on \autoref{def:ab}, we impose the following assumption for estimation guarantee, which will be detailed in the sequel.

\begin{ass}\label{ass:id}
    For arbitrary $m$, $\inf_\varsigma \nu(\varsigma) \mu_m(\varsigma) < 1$.
\end{ass}

\begin{prop}\label{prop:id}
    Given the model \eqref{eq:pVAR-model} and under \autoref{ass:id}, with $\mathcal{W}$ fixed, each coefficient matrix $A_m$ admits a unique decomposition into the pair $(\Phi, S_m)$ for $m = 1, \dots, M$.
\end{prop}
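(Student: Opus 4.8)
The plan is to recast the uniqueness of the pair $(\Phi, S_m)$ as a transversality (trivial-intersection) statement between the two model subspaces attached to the decomposition --- the tangent space $\varPhi$ of the rank-$r$ variety at $\Phi$ (equivalently $\range(\Pi_{\varPhi})$) and the support space $\Omega_m$ of $S_m$ --- and then to certify that transversality using the incoherence parameters $\nu(\varsigma)$ and $\mu_m(\varsigma)$ under \autoref{ass:id}, following the rank--sparsity incoherence template of \citet{hsu2011}.

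\emph{Step 1: reduction to $\varPhi \cap \Omega_m = \{0\}$.} Fix $m$ and the diagonal matrix $W_m$, and suppose $A_m = W_m \Phi + S_m = W_m \widetilde\Phi + \widetilde S_m$ is a second feasible decomposition, where $\widetilde\Phi$ is a rank-preserving perturbation of $\Phi$ (so $N := \widetilde\Phi - \Phi \in \varPhi$, the latter being a linear subspace that contains $\Phi$) and $\supp(\widetilde S_m) \subseteq \supp(S_m)$ (so $\widetilde S_m - S_m \in \Omega_m$). Equating the two expressions gives $W_m N = S_m - \widetilde S_m \in \Omega_m$. Since $W_m$ is diagonal, left multiplication by $W_m$ or $W_m^{-1}$ leaves the support of a matrix unchanged on the rows where $W_m$ does not vanish --- this is exactly what the identity $\supp(S_m) = \supp(W_m^{-1} S_m)$ in the definition of $\Omega_m$ records --- so $W_m N \in \Omega_m$ forces $N \in \Omega_m$ there. (On rows where a diagonal entry of $W_m$ is zero, the term $W_m \Phi$ is insensitive to $\Phi$ and identifiability of $\Phi$ on those rows is inherited from the structure shared across entities; in the non-degenerate regime $W_m$ is nonsingular and this subtlety disappears.) Hence $N \in \varPhi \cap \Omega_m$, and it suffices to show $\varPhi \cap \Omega_m = \{0\}$.

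\emph{Step 2: incoherence inequalities on the two subspaces.} For $N \in \Omega_m$, counting nonzeros column-wise and row-wise against $\supp(S_m)$ gives $\|N\|_{1\to1} \le \|\sign(S_m)\|_{1\to1} \|N\|_\infty$ and $\|N\|_{\infty\to\infty} \le \|\sign(S_m)\|_{\infty\to\infty} \|N\|_\infty$. For $N \in \varPhi$, use $N = \Pi_{\varPhi}(N) = UU'N + NVV' - UU'NVV'$ together with the elementary bounds $\|UU'N\|_\infty \le \|UU'\|_\infty \|N\|_{1\to1}$, $\|NVV'\|_\infty \le \|VV'\|_\infty \|N\|_{\infty\to\infty}$, and $\|UU'NVV'\|_\infty \le \|U\|_{2\to\infty} \|V\|_{2\to\infty} \|N\|_2 \le \|U\|_{2\to\infty} \|V\|_{2\to\infty} \sqrt{\|N\|_{1\to1} \|N\|_{\infty\to\infty}}$, the last step using the interpolation inequality $\|N\|_2^2 \le \|N\|_{1\to1} \|N\|_{\infty\to\infty}$.

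\emph{Step 3: closing the argument.} For $N \in \varPhi \cap \Omega_m$, substitute the $\Omega_m$-bounds from Step 2 into the $\varPhi$-bound; regrouping each product as $(\varsigma^{\mp1}\|\cdot\|)(\varsigma^{\pm1}\|\cdot\|)$ and using $\sqrt{ab} \le \max(a,b)$ yields $\|N\|_\infty \le \nu(\varsigma)\,\mu_m(\varsigma)\,\|N\|_\infty$ for every $\varsigma > 0$. By \autoref{ass:id}, some $\varsigma$ makes $\nu(\varsigma)\mu_m(\varsigma) < 1$, which forces $N = 0$; combined with Step 1 this gives uniqueness of $(\Phi, S_m)$, and the reasoning is identical for every $m = 1, \dots, M$.

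\emph{Main obstacle.} I expect Step 1 to be the delicate part: making precise the class of admissible competing decompositions so that their difference genuinely lies in $\varPhi \cap \Omega_m$ (note $\varPhi$ is the tangent space to the rank-$r$ variety, so this is a first-order/local uniqueness statement), and carefully tracking how the fixed diagonal $W_m$ relocates supports --- precisely the reason the definition of $\Omega_m$ is phrased through $W_m^{-1} S_m$. Once the two subspaces are fixed, the norm bookkeeping in Steps 2--3, although it involves the operator norms $\|\cdot\|_{1\to1}$ and $\|\cdot\|_{\infty\to\infty}$, is routine.
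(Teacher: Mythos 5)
Your proposal follows essentially the same route as the paper: reduce uniqueness to the transversality condition $\Omega_m \cap \varPhi = \{0\}$ and then derive the contraction $\|N\|_\infty \le \nu(\varsigma)\,\mu_m(\varsigma)\,\|N\|_\infty$ from the incoherence quantities, which under \autoref{ass:id} forces $N = 0$ (the paper packages the same norm bookkeeping through the identity $\Pi_{\Omega_m}(\Pi_{\varPhi}(M)) = M$ plus the triangle inequality). Your Steps 2--3 are a correct, more explicit version of that computation, and your Step 1 caveats about the class of competing decompositions and nonsingular $W_m$ match the (local, tangent-space) scope of the paper's argument.
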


\begin{rem}\label{rem:inco_verify}
    We do not impose a specific random sparsity model on the matrices in $\mathcal{S}$, nor do we assume particular properties on the singular vectors of the low-rank matrix $\Phi$. Therefore, it is challenging to derive explicit estimation constraints directly from \autoref{ass:id}. To better understand how \autoref{algo:iter}’s performance relates to the intrinsic low-rank plus sparse structure, we empirically examine violations of \autoref{ass:id} in cases where the estimation results are suboptimal.
\end{rem}

The key idea behind the proof of \autoref{prop:id} is to show that, under \autoref{ass:id}, the intersection set satisfies $\Omega_m \cap \varPhi = \{0\}$ for every $m$. In particular, assume there is $M \in \Omega_m \cap \varPhi$, then
$$
\Pi_{\Omega_m} \big( \Pi_{\varPhi} (M) \big) = M,
$$
and using the triangular inequality, we obtain
$$
\|M\| \le \mu_m(\varsigma) \nu(\varsigma) \|M\|.
$$
By \autoref{ass:id}, the product $\mu_m(\varsigma) \nu(\varsigma) <1$, which forces $\|M\| = 0$.

\begin{rem}
Despite the discussion in \autoref{rem:inco_verify}, we emphasize that \autoref{prop:id} provides only a sufficient condition for unique decomposition.
Therefore, the manual verification approach suggested in \autoref{rem:inco_verify} should be viewed as a heuristic or supplementary diagnostic tool, rather than a definitive criterion, especially when the algorithm exhibits poor performance.
\end{rem}

\section{Multi-Lag LSPVAR Extension}\label{sec:multi_lag}

A natural extension of the LSPVAR model \eqref{eq:pVAR-model} to higher-order lags can be handled using analogous methods. For instance, consider the lag-$q$ model
$$
X_t^m = \sum_{i=1}^q A_{m,i} X^m_{t-i} + \epsilon_t^m, \qquad A_{m,i} = W_{m,i} \Phi_i + S_{m,i}, ~ \forall ~ i = 1, \dots, q,
$$
where the transition matrices $A_{m,i}$, satisfy the same type of constraints as in \eqref{eq:pVAR-model} for each lag $i$.

In the simplest case where $\Phi_1 = \cdots = \Phi_q$, the objective function and estimation procedure remain unchanged. If the low-rank basis matrices $\Phi_i$ differ across lags, the optimization framework extends naturally by augmenting the objective with terms for each lag:
$$
G = F + \rho \sum_{i=1}^q \left( \frac{1}{2} \|\Phi_i - \bar\Phi_i\|_F^2 + \langle \Gamma_i, \Phi_i - \bar\Phi_i \rangle \right),
$$
where $F$ is the classical least squares objective combined with LASSO penalties on the sparse components $\mathcal{S}$.

For estimation, the algorithmic steps described in \autoref{sec:solution} remain applicable, as the augmented variables $\bar{\Phi}_i$ and dual variables $\Gamma_i$ are separable across lags in the augmented objective. Similarly, the convergence guarantees established in \autoref{sec:optm_conv} extend directly to this higher-lag setting.

\section{Proof of Convergence Result for Algorithm \ref{algo:iter}}\label{sec:proof_convergence}

\begin{proof}[Proof of \autoref{prop:suff_desc}]

    For the primal update of $\mathcal{W}$ and $\mathcal{S}$, \autoref{ass:lip_hes} implies that the relevant objective function for the $m$-th subproblem,
    $$
    G(W_m, S_m) = f_m(W_m, S_m) + \eta \|S_m\|_1 = \frac{1}{2T} \|W_m \Phi X_m + S_m X_m - Y_m\|_F^2 + \eta \|S_m\|_1,
    $$
    is strongly convex with respect to its two arguments, respectively. Since \Cref{ass:rsc,ass:lip_hes} require that the objective function $G$ function $\beta_W$-convex with respect to $W_m$ and $\beta$-convex with respect to $S_m$, it follows that the corresponding updates satisfy
    $$
    \begin{aligned}
        & G(W_m^{(i)}, S_m^{(i)}) - G(W_m^{(i+1)}, S_m^{(i)}) & & \ge & &  \frac{\beta_W}{2} \|W_m^{(i)} - W_m^{(i+1)}\|_F^2,\\
        & G(W_m^{(i+1)}, S_m^{(i)}) - G(W_m^{(i+1)}, S_m^{(i+1)}) & & \ge & &  \frac{\beta}{2} \|S_m^{(i)} - S_m^{(i+1)}\|_F^2.
    \end{aligned}
    $$
    Therefore,
    \begin{multline}\label{eq:desc_WS}
        G(\mathcal{W}^{(i)}, \mathcal{S}^{(i)}, \Phi_r^{(i)}, \Phi^{(i)}, \Gamma_\Phi^{(i)}) - G(\mathcal{W}^{(i+1)}, \mathcal{S}^{(i+1)}, \Phi_r^{(i)}, \Phi^{(i)}, \Gamma_\Phi^{(i)})\\
        \ge \sum_{m=1}^M \frac{\beta_W}{2} \|W_m^{(i)} - W_m^{(i+1)}\|_2^2 + \sum_{m=1}^M \frac{\beta}{2} \|S_m^{(i)} - S_m^{(i+1)}\|_F^2.
    \end{multline}
    
    By optimality and convergence of the subproblem's algorithm in \autoref{subsec:phic}, and noting that the objective function is monotonically decreasing and the algorithm is initialized at $\Phi_c^{(i)}$, we have that
    \begin{equation}\label{eq:desc_phic}
        G(\mathcal{W}^{(i+1)}, \mathcal{S}^{(i+1)}, \Phi_c^{(i)}, \Phi^{(i)}, \Gamma^{(i)}) - G(\mathcal{W}^{(i+1)}, \mathcal{S}^{(i+1)}, \Phi_c^{(i+1)}, \Phi^{(i)}, \Gamma^{(i)}) \ge \frac{\kappa\rho}{2} \|\Phi_c^{(i)} - \Phi_c^{(i+1)}\|_F^2.
    \end{equation}

    For $\Phi$, since \autoref{ass:lip_hes} states that the objective function $F$ is $\beta_\Phi$-convex with respect to $\Phi$, we consequently have
    \begin{multline}\label{eq:phi_descent}
        G(\mathcal{W}^{(i+1)}, \mathcal{S}^{(i+1)}, \Phi_c^{(i+1)}, \Phi^{(i)}, \Gamma^{(i)}) - G(\mathcal{W}^{(i+1)}, \mathcal{S}^{(i+1)}, \Phi_c^{(i+1)}, \Phi^{(i+1)}, \Gamma^{(i)})\\
        \ge \frac{\beta_\Phi + \rho}{2} \|\Phi^{(i)} - \Phi^{(i+1)} \|_F^2,
    \end{multline}
    given $\Phi^{(i+1)}$ is the subpoblem minimizer that satisfies the first order optimality condition. In addition, the first order condition also suggests $\rho \Gamma^{(i+1)} = - F_\Phi(\mathcal{W}^{(i+1)}, \mathcal{S}^{(i+1)}, \Phi^{(i+1)})$, where $F_\Phi$ is the gradient of $F$ with respect to $\Phi$ defined as
    $$
    F_\Phi(\mathcal{W}, \mathcal{S}, \Phi) = \nabla_\Phi F(\mathcal{W}, \mathcal{S}, \Phi; X, \eta) = \sum_{m=1}^M W_m \frac{Y_m X_m'}{T} - \sum_{m=1}^M W_m (W_m \Phi + S_m) \frac{X_m X_m'}{T}.
    $$
    
    Denote $\nabla_\Phi F^{(i)} = F_\Phi(\mathcal{W}^{(i)}, \mathcal{S}^{(i)}, \Phi^{(i)})$. Since \autoref{ass:lip_hes} ensures that $F_\Phi$ is $\alpha_W$-Lipschitz continuous with respect to $W_m$, $\alpha_S$-Lipschitz continuous with respect to $S_m$ and $\alpha_\Phi$-Lipschitz continuous with respect to $\Phi$, we get that
    \begin{equation}\label{eq:dual_ascent}
    \begin{aligned}
        & \| \nabla_\Phi F^{(i)} - \nabla_\Phi F^{(i+1)} \|_F^2 \\
        & \quad \le \left( \alpha_\Phi \|\Phi^{(i)} - \Phi^{(i+1)}\|_F + \sum_{m=1}^M \alpha_W \|W_m^{(i)} - W_m^{(i+1)}\|_F + \sum_{m=1}^M \alpha_S \|S_m^{(i)} - S_m^{(i+1)}\|_F \right)^2\\
        & \quad \le 3 \alpha_\Phi^2 \|\Phi^{(i)} - \Phi^{(i+1)}\|_F^2 + \sum_{m=1}^M 3 M \alpha_W^2 \|W_m^{(i)} - W_m^{(i+1)}\|_F^2 + \sum_{m=1}^M 3 M \alpha_S^2 \|S_m^{(i)} - S_m^{(i+1)}\|_F^2
    \end{aligned}
    \end{equation}
    Hence, for the dual ascent step,
    \begin{equation}\label{eq:asc_gamma}
    \begin{aligned}
            & G(\mathcal{W}^{(i+1)}, \mathcal{S}^{(i+1)}, \Phi_c^{(i+1)}, \Phi^{(i+1)}, \Gamma^{(i)}) - G(\mathcal{W}^{(i+1)}, \mathcal{S}^{(i+1)}, \Phi_c^{(i+1)}, \Phi^{(i+1)}, \Gamma^{(i+1)})\\
            & \quad = - \frac{1}{\rho} \|\nabla_\Phi F^{(i)} - \nabla_\Phi F^{(i+1)}\|_F^2\\
            & \quad \ge - \frac{3 \alpha_\Phi^2}{\rho} \|\Phi^{(i)} - \Phi^{(i+1)}\|_F^2 - \sum_{m=1}^M \frac{3 M \alpha_W^2}{\rho} \|W_m^{(i)} - W_m^{(i+1)}\|_F^2 - \sum_{m=1}^M \frac{3 M \alpha_S^2}{\rho} \|S_m^{(i)} - S_m^{(i+1)}\|_F^2
    \end{aligned}
    \end{equation}
    where the last line uses \eqref{eq:dual_ascent}. The final result then combines \Cref{eq:desc_WS,eq:desc_phic,eq:phi_descent,eq:asc_gamma} and yields
    $$
    \begin{aligned}
        & G(\mathcal{W}^{(i)}, \mathcal{S}^{(i)}, \Phi_c^{(i)}, \Phi^{(i)}, \Gamma^{(i)}) - G(\mathcal{W}^{(i+1)}, \mathcal{S}^{(i+1)}, \Phi_c^{(i+1)}, \Phi^{(i+1)}, \Gamma^{(i+1)})\\
        & \qquad \ge \sum_{m=1}^M \|W_m^{(i)} - W_m^{(i+1)}\|_F^2 (\frac{\beta_W}{2} - \frac{3 M \alpha_W^2}{\rho}) + \sum_{m=1}^M \|S_m^{(i)} - S_m^{(i+1)}\|_F^2 (\frac{\beta}{2} - \frac{3 M \alpha_S^2}{\rho} )\\
        & \qquad \qquad + \|\Phi^{(i)} - \Phi^{(i+1)}\|_F^2 (\frac{\beta_\Phi + \rho}{2} - \frac{3 \alpha_\Phi^2}{\rho}) + \frac{\kappa \rho}{2} \|\Phi_c^{(i)} - \Phi_c^{(i+1)}\|_F^2.
    \end{aligned}
    $$
    Given a fixed data realization, a sufficiently large $\rho$ leads to the sufficient descent property.

    Note that in terms of orders, $\frac{\beta}{2} - \frac{3 M \alpha_S^2}{\rho}$ is essentially of order $O(1)$. Hence, for balanced coefficients, one can select $\kappa$ proportionally as  $\kappa \varpropto \frac{M}{\rho}$. Similarly, imposing  $\frac{\beta_W}{2} - \frac{3 M \alpha_W^2}{\rho} = O(1)$ and $\frac{\beta_\Phi + \rho}{2} - \frac{3 \alpha_\Phi^2}{\rho} = O(M)$ provides practical guidance for selecting $\ell$ accordingly.

\end{proof}

\begin{proof}[Proof of \autoref{prop:conv_func}]
    For bounding $G(\mathcal{W}^{(i+1)}, \mathcal{S}^{(i+1)}, \Phi_c^{(i+1)}, \Phi^{(i+1)}, \Gamma^{(i+1)})$ below, note that
    $$
    \begin{aligned}
        & G(\mathcal{W}^{(i+1)}, \mathcal{S}^{(i+1)}, \Phi_c^{(i+1)}, \Phi^{(i+1)}, \Gamma^{(i+1)})\\
        & \qquad = F(\mathcal{W}^{(i+1)}, \mathcal{S}^{(i+1)}, \Phi^{(i+1)}) + \frac{\rho}{2} \|\Phi^{(i+1)} - \Phi_c^{(i+1)}\|_F^2 + \rho \langle \Gamma^{(i+1)}, \Phi^{(i+1)} - \Phi_c^{(i+1)} \rangle\\
        & \qquad = F(\mathcal{W}^{(i+1)}, \mathcal{S}^{(i+1)}, \Phi^{(i+1)}) + \langle \nabla_\Phi F^{(i+1)}, \Phi_c^{(i+1)} - \Phi^{(i+1)} \rangle + \frac{\rho}{2} \|\Phi^{(i+1)} - \Phi_c^{(i+1)}\|_F^2\\
        & \qquad \ge F(\mathcal{W}^{(i+1)}, \mathcal{S}^{(i+1)}, \Phi_c^{(i+1)}) + \frac{\rho - \alpha_\Phi}{2} \|\Phi^{(i+1)} - \Phi_r^{(i+1)}\|_F^2\\
        & \qquad \ge 0,
    \end{aligned}
    $$
    where the first inequality uses  \autoref{ass:lip_hes}, namely that the partial derivative $F_\Phi$ is $\alpha_\Phi$-Lipschitz continuous with respect to $\Phi$, and the last inequality is due to the selection criteria of $\rho$ in \eqref{eq:rho_phi_lower}. Therefore, the sequence $G(\mathcal{W}^{(i+1)}, \mathcal{S}^{(i+1)}, \Phi_c^{(i+1)}, \Phi^{(i+1)}, \Gamma^{(i+1)})$ is lower bounded.

  Combined with the sufficient descent property established in \autoref{prop:suff_desc}, the sequence is monotonically decreasing and lower bounded, and is therefore guaranteed to converge.
    
\end{proof}

\begin{proof}[Proof of \autoref{prop:conv}]
    According to \Cref{prop:suff_desc,prop:conv_func}, the sequence of estimates, $(\mathcal{W}^{(i+1)}, \mathcal{S}^{(i+1)}, \Phi_c^{(i+1)}, \Phi^{(i+1)}, \Gamma^{(i+1)})$, forms a Cauchy sequence, and hence as $i \to \infty$, there exists a limiting point $(\widehat{\mathcal{W}}, \widehat{\mathcal{S}}, \widehat{\Phi}_c, \widehat{\Phi}, \widehat{\Gamma})$ such that
    $$
    \lim_{s\to\infty} (\mathcal{W}^{(i+1)}, \mathcal{S}^{(i+1)}, \Phi_c^{(i+1)}, \Phi^{(i+1)}, \Gamma^{(i+1)}) = (\widehat{\mathcal{W}}, \widehat{\mathcal{S}}, \widehat{\Phi}_c, \widehat{\Phi}, \widehat{\Gamma}).
    $$

    For the second part, concerning first-order optimality, we will show that the sub-gradient sets of the function $G(\mathcal{W}, \mathcal{S}, \Phi_c, \Phi, \Gamma)$ at $(\widehat{\mathcal{W}}, \widehat{\mathcal{S}}, \widehat{\Phi}, \widehat{\Gamma}_\Phi)$ include the element $0$, so that $(\widehat{\mathcal{W}}, \widehat{\mathcal{S}}, \widehat{\Phi}, \widehat{\Gamma}_\Phi)$ is a critical point. Indeed, given that the function $G$ and its sub-gradient sets are continuous, it suffices to establish the following lemma.

    \begin{lem}\label{lem:subg_bound}
        There exist sequences of sub-gradients of $G$ with respect to the variables $\{\mathcal{W}, \mathcal{S}, \Phi_c, \Phi, \Gamma\}$, evaluated at $(\mathcal{W}^{(i+1)}, \mathcal{S}^{(i+1)}, \Phi_c^{(i+1)}, \Phi^{(i+1)}, \Gamma^{(i+1)})$, that converge to $0$, i.e., for $i\to\infty$,
        $$
        \begin{aligned}
            \frac{\partial G}{\partial w_m}(\mathcal{W}^{(i+1)}, \mathcal{S}^{(i+1)}, \Phi_c^{(i+1)}, \Phi^{(i+1)}, \Gamma^{(i+1)}) ~ \ni ~ & \mathbf{d}_{w_m}^{(i+1)} \to 0, ~ m = 1, \dots, M,\\
            \frac{\partial G}{\partial S_m}(\mathcal{W}^{(i+1)}, \mathcal{S}^{(i+1)}, \Phi_c^{(i+1)}, \Phi^{(i+1)}, \Gamma^{(i+1)}) ~ \ni ~ & \mathbf{d}_{S_m}^{(i+1)} \to 0, ~ m = 1, \dots, M,\\
            \frac{\partial G}{\partial \Phi_c}(\mathcal{W}^{(i+1)}, \mathcal{S}^{(i+1)}, \Phi_c^{(i+1)}, \Phi^{(i+1)}, \Gamma^{(i+1)}) ~ \ni ~ & \mathbf{d}_{\Phi_c}^{(i+1)} \to 0,\\
            \frac{\partial G}{\partial \Phi}(\mathcal{W}^{(i+1)}, \mathcal{S}^{(i+1)}, \Phi_c^{(i+1)}, \Phi^{(i+1)}, \Gamma^{(i+1)}) ~ \ni ~ & \mathbf{d}_\Phi^{(i+1)} \to 0,\\
            \frac{\partial G}{\partial \Gamma_\Phi}(\mathcal{W}^{(i+1)}, \mathcal{S}^{(i+1)}, \Phi_c^{(i+1)}, \Phi^{(i+1)}, \Gamma^{(i+1)}) ~ \ni ~ & \mathbf{d}_\Gamma^{(i+1)} \to 0.
        \end{aligned}
        $$
    \end{lem}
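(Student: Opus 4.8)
The plan is to read off, from the first-order optimality conditions of the four successive updates in \autoref{algo:iter}, explicit subgradient elements of $G$ at the \emph{fully} updated iterate $(\mathcal{W}^{(i+1)}, \mathcal{S}^{(i+1)}, \Phi_c^{(i+1)}, \Phi^{(i+1)}, \Gamma^{(i+1)})$, and to bound each of them by consecutive-iterate differences, which are summable and therefore vanish by \Cref{prop:suff_desc,prop:conv_func}. The point is that each update's stationarity condition holds at a \emph{partially} updated point (earlier blocks new, later blocks old), so the bookkeeping amounts to absorbing the gap between that point and the fully updated iterate into Lipschitz-controlled remainders, exactly as in the proof of \autoref{prop:suff_desc}.

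For the $(\mathcal{W},\mathcal{S})$ block, the $W_m$-step gives $\nabla_{W_m}F(W_m^{(i+1)}, S_m^{(i)}, \Phi^{(i)})=0$, so with $\mathbf{d}_{w_m}^{(i+1)} := \nabla_{W_m}F(W_m^{(i+1)}, S_m^{(i+1)}, \Phi^{(i+1)})$ (the single-valued $w_m$-gradient of $G$ at the full iterate) and the joint Lipschitz continuity of $\nabla F$ (it is the gradient of a quadratic, cf.\ \autoref{ass:lip_hes}), $\|\mathbf{d}_{w_m}^{(i+1)}\|_F$ is bounded by a constant times $\|S_m^{(i+1)}-S_m^{(i)}\|_F+\|\Phi^{(i+1)}-\Phi^{(i)}\|_F$. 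The $S_m$-step gives $0\in\nabla_{S_m}F(W_m^{(i+1)}, S_m^{(i+1)}, \Phi^{(i)})+\eta\,\partial\|S_m^{(i+1)}\|_1$; keeping that same $\ell_1$-subgradient, the element $\mathbf{d}_{S_m}^{(i+1)} := \nabla_{S_m}F(W_m^{(i+1)}, S_m^{(i+1)}, \Phi^{(i+1)}) - \nabla_{S_m}F(W_m^{(i+1)}, S_m^{(i+1)}, \Phi^{(i)})$ lies in $\partial_{S_m}G$ at the full iterate and is bounded by a constant times $\|\Phi^{(i+1)}-\Phi^{(i)}\|_F$.

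For $\Phi$ and $\Gamma$, combining the $\Phi$-step optimality $\nabla_\Phi F^{(i+1)}+\rho(\Phi^{(i+1)}-\Phi_c^{(i+1)})+\rho\Gamma^{(i)}=0$ with the dual step \eqref{eq:dual_update} gives $\rho\Gamma^{(i+1)}=-\nabla_\Phi F^{(i+1)}$ (as already noted in the proof of \autoref{prop:suff_desc}), so the $\Phi$-gradient of $G$ at the full iterate equals $\mathbf{d}_\Phi^{(i+1)}=\rho(\Gamma^{(i+1)}-\Gamma^{(i)})=-(\nabla_\Phi F^{(i+1)}-\nabla_\Phi F^{(i)})$, bounded via \eqref{eq:dual_ascent} by a multiple of $\|\Phi^{(i+1)}-\Phi^{(i)}\|_F+\sum_m(\|W_m^{(i+1)}-W_m^{(i)}\|_F+\|S_m^{(i+1)}-S_m^{(i)}\|_F)$, while the $\Gamma$-gradient equals $\mathbf{d}_\Gamma^{(i+1)}=\rho(\Phi^{(i+1)}-\Phi_c^{(i+1)})=\rho(\Gamma^{(i+1)}-\Gamma^{(i)})$, bounded the same way. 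For $\Phi_c$, exact optimality of the inner routine of \autoref{subsec:phic} over the compact semi-algebraic set $\mathbb{B}_p\cap\mathbb{L}_p(\hat r,\ell)$ produces, in the limiting-normal-cone sense, an element $n^{(i+1)}\in N_{\mathbb{B}_p\cap\mathbb{L}_p(\hat r,\ell)}(\Phi_c^{(i+1)})$ equal to $\rho(\Phi^{(i)}+\Gamma^{(i)}-\Phi_c^{(i+1)})-\kappa\rho(\Phi_c^{(i+1)}-\Phi_c^{(i)})$; since this cone is precisely the $\Phi_c$-component of $\partial G$ at the full iterate, the element $\mathbf{d}_{\Phi_c}^{(i+1)} := -\rho(\Phi^{(i+1)}-\Phi_c^{(i+1)})-\rho\Gamma^{(i+1)}+n^{(i+1)}$, which simplifies to $\rho(\Phi^{(i)}-\Phi^{(i+1)})+\rho(\Gamma^{(i)}-\Gamma^{(i+1)})-\kappa\rho(\Phi_c^{(i+1)}-\Phi_c^{(i)})$, a linear combination of consecutive-iterate differences, belongs to $\partial_{\Phi_c}G$. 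Finally, \Cref{prop:suff_desc,prop:conv_func} force $\sum_i\bigl(\sum_m\|W_m^{(i)}-W_m^{(i+1)}\|_F^2+\sum_m\|S_m^{(i)}-S_m^{(i+1)}\|_F^2+\|\Phi^{(i)}-\Phi^{(i+1)}\|_F^2+\|\Phi_c^{(i)}-\Phi_c^{(i+1)}\|_F^2\bigr)<\infty$, so all consecutive differences tend to $0$ and hence every $\mathbf{d}_b^{(i+1)}\to 0$; together with the outer semicontinuity of $\partial G$ and \autoref{prop:conv} this yields $0\in\partial G(\widehat{\mathcal{W}},\widehat{\mathcal{S}},\widehat{\Phi}_c,\widehat{\Phi},\widehat{\Gamma})$.

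I expect the main obstacle to be the $\Phi_c$ block. Since $\mathbb{B}_p\cap\mathbb{L}_p(\hat r,\ell)$ is nonconvex, one must justify (i) that the inner Dykstra-type ADMM of \autoref{subsec:phic} terminates at a point satisfying the generalized stationarity $0\in\rho(\Phi_c^{(i+1)}-\Phi^{(i)}-\Gamma^{(i)})+\kappa\rho(\Phi_c^{(i+1)}-\Phi_c^{(i)})+N_{\mathbb{B}_p\cap\mathbb{L}_p(\hat r,\ell)}(\Phi_c^{(i+1)})$ in the limiting sense, and (ii) that this same limiting normal cone is the $\Phi_c$-component of the limiting subdifferential of $G$ at the full iterate---precisely where the structural results of \autoref{sec:constraint} (compactness, semi-algebraicity, the explicit tangent/normal-cone descriptions, and linear regularity of the intersection) enter. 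Everything else is the Lipschitz accounting already carried out in the proof of \autoref{prop:suff_desc}.
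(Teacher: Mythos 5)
Your proposal follows essentially the same route as the paper's own proof: for each block you read off the first-order optimality condition at the partially updated point, transport it to the full iterate by adding correction terms controlled by consecutive-iterate differences (for $\Phi_c$, via a normal-cone element $H^{(i+1)}$ of $\mathbb{B}_p\cap\mathbb{L}_p(\hat r,\ell)$, and for $\Phi,\Gamma$ via $\rho\Gamma^{(i+1)}=-\nabla_\Phi F^{(i+1)}$ and \eqref{eq:dual_ascent}), and then let \Cref{prop:suff_desc,prop:conv_func} force those differences to vanish. The explicit subgradient elements you construct coincide (up to sign conventions) with those in the paper, so the argument is correct and not materially different.
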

\end{proof}

\begin{proof}[Proof of \autoref{lem:subg_bound}]
    We prove these limits by upper bounding selective sequences of sub-gradients that converge to $0$, and utilize the fact that the sets of subgradients are continuous.

    For $\mathbf{d}_{w_m}^{(i+1)}$, we know that $\tilde{\mathbf{d}}_{w_m}^{(i+1)} = 0 \in \frac{\partial G}{\partial w_m}(\mathcal{W}^{(i+1)}, \mathcal{S}^{(i)}, \Phi_c^{(i)}, \Phi^{(i)}, \Gamma^{(i)})$, hence
    $$
    \begin{aligned}
        \mathbf{d}_{w_m}^{(i+1)} & = \tilde{\mathbf{d}}_{w_m}^{(i+1)} + \sum_{i=1}^p E_i (\Phi^{(i+1)} \frac{X_m X_m'}{T} (\Phi^{(i+1)})' - \Phi^{(i)} \frac{X_m X_m'}{T} (\Phi^{(i)})') E_i w_m^{(i+1)}\\
        & \quad + \sum_{i=1}^p E_i (\Phi^{(i+1)} - \Phi^{(i)}) \frac{X_m (S_m^{(i+1)} X_m - Y_m)'}{T} e_i + \sum_{i=1}^p E_i \Phi^{(i)} \frac{X_m X_m'}{T} (S_m^{(i+1)} - S_m^{(i)})' e_i.
    \end{aligned}
    $$

    Regarding $S_m$, optimality implies that $\tilde{\mathbf{d}}_{S_m}^{(i+1)} = 0 \in \frac{\partial G}{\partial S_m}(\mathcal{W}^{(i+1)}, \mathcal{S}^{(i+1)}, \Phi_c^{(i)}, \Phi^{(i)}, \Gamma_\Phi^{(i)})$, therefore one of the sub-gradients $\mathbf{d}_{S_m}^{(i+1)}$ is
    $$
    \mathbf{d}_{S_m}^{(i+1)} = \tilde{\mathbf{d}}_{S_m}^{(i+1)} + \frac{X_m X_m'}{T} (\Phi^{(i+1)} - \Phi^{(i)})' W_m^{(i+1)}.
    $$

    The fact that the subproblem of $\Phi_c$ converges implies that there exists $H^{(i+1)}$ in the normal cone of $\mathbb{B}_p \cap \mathbb{L}_p(\hat{r}, \ell)$ at $\Phi_c^{(i+1)}$, such that
    $$
    \tilde{\mathbf{d}}_{\Phi_c}^{(i+1)} = H^{(i+1)} + \rho (\Phi_c^{(i+1)} - \Phi^{(i)} - \Gamma^{(i)}) + \kappa \rho (\Phi_c^{(i+1)} - \Phi_c^{(i)}) = 0,
    $$
    and $\tilde{\mathbf{d}}_{\Phi_c}^{(i+1)} \in \frac{\partial G}{\partial \Phi_c}(\mathcal{W}^{(i+1)}, \mathcal{S}^{(i+1)}, \Phi_c^{(i+1)}, \Phi^{(i)}, \Gamma^{(i)})$. Therefore, we have
    $$
    \begin{aligned}
    \mathbf{d}_{\Phi_c}^{(i+1)} & = \tilde{\mathbf{d}}_{\Phi_c}^{(i+1)} - \rho (\Phi^{(i+1)} - \Phi^{(i)} + \Gamma^{(i+1)} - \Gamma^{(i)}) - \kappa \rho (\Phi_c^{(i+1)} - \Phi_c^{(i)})\\
    & = - \rho (\Phi^{(i+1)} - \Phi^{(i)} + \Gamma^{(i+1)} - \Gamma^{(i)}) - \kappa \rho (\Phi_c^{(i+1)} - \Phi_c^{(i)}).
    \end{aligned}
    $$

    The optimality of $\Phi$ implies that $\tilde{\mathbf{d}}_\Phi^{(i+1)} = 0 \in \frac{\partial G}{\partial \Phi}(\mathcal{W}^{(i+1)}, \mathcal{S}^{(i+1)}, \Phi_c^{(i+1)}, \Phi^{(i+1)}, \Gamma^{(i)})$, and
    $$
    \mathbf{d}_\Phi^{(i+1)} = \tilde{\mathbf{d}}_\Phi^{(i+1)} + \rho (\Gamma^{(i+1)} - \Gamma^{(i)}) = \rho_\Phi (\Gamma^{(i+1)} - \Gamma^{(i)}).
    $$

    Finally, for the dual variable, we get
    $$
    \mathbf{d}_\Gamma^{(i+1)} = \rho (\Phi^{(i+1)} - \Phi_c^{(i+1)}) = \rho (\Gamma^{(i+1)} - \Gamma^{(i)}).
    $$

    Then, the limiting behaviors of the gradients' norms are all controlled by norm quantities that converge to 0.
\end{proof}

\begin{proof}[Proof of \autoref{thm:converge}]
    Note that the convergence and first-order optimality of \autoref{algo:iter} follow directly from the sequence of results established in \Cref{prop:suff_desc,prop:conv_func,prop:conv}. It remains to verify that the objective function \eqref{eq:admm} satisfies the Kurdyka–Łojasiewicz (KL) property, ensuring that the convergence of \autoref{algo:iter} holds regardless of initialization. 
    
    If we temporarily disregard the domain constraints $\Phi_c \in \mathbb{B}_p \cap \mathbb{L}_p(\hat r, \ell)$, the objective function \eqref{eq:admm} is evidently semi-algebraic. Therefore, it suffices to establish that both $\mathbb{B}_p$ and $\mathbb{L}_p(\hat r, \ell)$ are semi-algebraic sets. Under this condition, the optimization problem satisfies the KL property, guaranteeing global convergence.

    One can easily verify the semi-algebraic property of the space $\mathbb{B}_p$, as its definition is given by polynomial constraints.

    The space $\mathbb{L}_p(\hat r, \ell)$ can be considered as the intersection of two spaces: a matrix space with rank at most $\hat r$, and a matrix space with fixed nuclear norm $\ell$. Since the former is known to be semi-algebraic, we  establish next the semi-algebraic property of the latter.

    The nuclear norm can be expressed as
    $$
    \|\Phi\|_* = \tr( \sqrt{\Phi \Phi'}).
    $$
    Here the square root of a symmetric (semi-)positive definite matrix $A = Q D Q'$ with diagonal non-negative $D$ is defined as $\sqrt{A} = Q D^{1/2} Q'$ where $D^{1/2}$ takes element-wise square roots of the diagonal entries. This square root operation is a semi-algebraic operator. Thus, since the nuclear norm is a composition of multiple semi-algebraic functions, the set $\mathbb{L}_p(\hat r, \ell)$ is semi-algebraic.
\end{proof}

\section{Proofs of Consistency Results}\label{sec:proof_consistent}

\begin{proof}[Proof of \autoref{prop:upper}]

We first briefly digress to present the following lemma, which suggests that the assumptions on the $\infty$-norm bounds of $\Phi^*$ and $\widehat{\Phi}$ in \autoref{prop:upper} are not restrictive.

\begin{lem}\label{lem:lr_infty}
    Given an arbitrary rank-$r$ matrix $\Phi$ ($r \le p$), whose matrices contain the singular vectors and are sampled uniformly from the set of rank-$r$ partial isometries in $\mathbb{R}^{p\times r}$, there exist positive constants $\varphi$ and $c$, such that
    $$
    \|\Phi\|_\infty \le \varphi \|\Phi\|_2 \frac{\sqrt{r_p}}{p}
    $$
    holds with probability at least $1 - c p^{-3} \log(p)$, where $r_p = \log^2(p) \cdot \max(r, \log(p))$.
\end{lem}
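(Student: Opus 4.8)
The plan is to write $\Phi = U D V'$, where $U,V\in\mathbb{R}^{p\times r}$ are the (random, Haar-distributed) singular-vector matrices and $D=\diag(d_1,\dots,d_r)$ the singular values, so that $\|\Phi\|_2=\max_k d_k=:d_1$ and the $(i,j)$ entry is $(\Phi)_{ij}=\langle D U'e_i,\, V'e_j\rangle$. I would bound $\|\Phi\|_\infty=\max_{i,j}|(\Phi)_{ij}|$ by combining two ingredients: (i) a subspace-coherence bound controlling the row norms $\|U'e_i\|_2$ and $\|V'e_j\|_2$, and (ii) for each fixed pair $(i,j)$, a conditioning argument that, given $(U,D)$, reduces the inner product $\langle DU'e_i,\,V'e_j\rangle$ to a single coordinate of a uniformly random point on $S^{p-1}$ via rotational invariance. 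A union bound over the $O(p^2)$ entry events and the $O(p)$ row-norm events then closes the argument.

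For the coherence step, note that $U$ can be realised as the first $r$ columns of a Haar-distributed orthogonal matrix, so its $i$-th row $U'e_i$ has the law of the first $r$ coordinates of a uniform point on $S^{p-1}$, i.e.\ $U'e_i\stackrel{d}{=}(g_1,\dots,g_r)/\|g\|_2$ with $g\sim N(0,I_p)$. A standard chi-squared deviation bound gives $\sum_{k\le r} g_k^2\precsim\max(r,\log p)$ and $\|g\|_2^2\succsim p$, each on an event of probability at least $1-p^{-c}$, whence $\|U'e_i\|_2^2\le C'\max(r,\log p)/p$; a union bound over the $p$ rows of $U$ and the $p$ rows of $V$ controls $\max_i\|U'e_i\|_2$ and $\max_j\|V'e_j\|_2$ simultaneously.

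For the per-entry step, fix $(i,j)$ and condition on $(U,D)$. If $DU'e_i=0$ the entry vanishes; otherwise put $b=DU'e_i/\|DU'e_i\|_2\in\mathbb{R}^r$, so that $(\Phi)_{ij}=\|DU'e_i\|_2\,\langle b,\,V'e_j\rangle$. Conditionally on $(U,D)$ the row $V'e_j$ is still distributed as the first $r$ coordinates of a uniform point $w$ on $S^{p-1}$, and $\langle b,V'e_j\rangle=\langle(b,0_{p-r}),w\rangle$ with $(b,0_{p-r})$ a fixed unit vector, so rotational invariance gives $\langle b,V'e_j\rangle\stackrel{d}{=}w_1$, a single sphere coordinate, for which $\mathbb{P}(|w_1|>t)\le 2e^{-(p-2)t^2/2}$. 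Taking $t$ of order $\sqrt{\log p/p}$ and combining with $\|DU'e_i\|_2\le d_1\|U'e_i\|_2$ and the coherence bound yields $|(\Phi)_{ij}|\precsim d_1\sqrt{\max(r,\log p)\log p}/p\le \varphi\,\|\Phi\|_2\sqrt{r_p}/p$ after absorbing constants (the discrepancy between $\log p$ here and the $\log^2 p$ inside $r_p$ is slack). Collecting the tail probabilities of all $O(p^2+p)$ events and choosing the exponents large enough gives the stated failure probability of order $p^{-3}\log p$ (the logarithmic factors in $r_p$ and in the probability are conservative and inessential).

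The main obstacle I expect is the per-entry reduction in step (ii): one must be careful that, conditionally on $(U,D)$, the row $V'e_j$ retains a rotationally invariant law, which is where the sampling model must make $U$ and $V$ independent (or at least render $V\mid U$ rotationally invariant). Moreover, one cannot simply use Cauchy--Schwarz, $|(\Phi)_{ij}|\le\|U'e_i\|_2\,d_1\,\|V'e_j\|_2\precsim d_1\max(r,\log p)/p$, since this is too lossy by a factor $\sqrt{\max(r,\log p)}$ when $r$ is large; extracting one sphere coordinate of $w$ from one of the two factors is precisely what produces the correct $\sqrt r$ (rather than $r$) dependence claimed in the lemma.
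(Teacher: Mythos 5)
Your proposal is correct, but it does substantially more than the paper, which does not actually prove \autoref{lem:lr_infty}: the paper's entire justification is a one-line pointer, saying the lemma ``can be derived by adapting results from the literature'' (specifically Lemma~2.2 of \citet{candes2009}), followed by a remark about why the bound should also apply to the data-dependent estimator $\widehat{\Phi}$ (which lies outside the lemma's statement). Your argument is a self-contained re-derivation of essentially that cited result, adapted to $\Phi = UDV'$ with general singular values by pulling out $\|\Phi\|_2$: the chi-squared coherence bound $\max_i\|U'e_i\|_2^2 \precsim \max(r,\log p)/p$, the conditional reduction of each entry to a single coordinate of a uniform point on $S^{p-1}$ via rotational invariance of $V'e_j$ given $(U,D)$, and a union bound over $O(p^2)$ entries and $O(p)$ rows — this is exactly the mechanism underlying the Cand\`es--Recht incoherence lemma, and your bookkeeping of the tail exponents does deliver the stated $1 - cp^{-3}\log(p)$ probability (in fact with room to spare). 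Two points are worth noting. First, your bound $d_1\sqrt{\max(r,\log p)\,\log p}/p$ is sharper than the lemma's $\|\Phi\|_2\sqrt{r_p}/p = \|\Phi\|_2\,\log(p)\sqrt{\max(r,\log p)}/p$ by a factor $\sqrt{\log p}$, which correctly exposes the slack built into $r_p$. Second, the caveat you flag — that the conditional law of $V'e_j$ given $(U,D)$ must remain rotationally invariant, i.e.\ $U$ and $V$ independent (or $V$ uniform given $U$ and $D$) — is precisely the implicit assumption of the random orthogonal model the paper inherits from \citet{candes2009}, so it is a legitimate reading of the lemma's sampling model rather than a gap; the genuinely delicate issue in the paper (transferring the bound from a Haar-sampled $\Phi$ to the estimator $\widehat{\Phi}$, as assumed in condition (B1) of \autoref{prop:upper}) is not part of the statement you were asked to prove and is handled in the paper only heuristically.
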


\autoref{lem:lr_infty} can be derived by adapting results from the literature \cite[for example, see][Lemma 2.2]{candes2009}. In particular, we justify the bound on $\widehat{\Phi}$ by arguing that $\widehat{\Phi}$ primarily lies in a subspace of rank at least $r$.

Then, referring to the model setup discussed in \autoref{ass:exp_deg}, and given that $\Phi^* \in \mathbb{B}_p \cap \mathbb{L}_p(r, \ell)$, we can derive from $\| W_m^* \Phi^* \| = O(1)$ that $W \precsim \frac{\sqrt{r p}}{\ell}$.

The following claim is helpful to understand the constants given in \autoref{ass:lip_hes} and the function $F$ itself evaluated at the output $(\widehat{\mathcal{W}}, \widehat{\mathcal{S}}, \widehat{\Phi})$.

\begin{claim}\label{claim:rates}
  Given a data realization $\mathcal{X}$, \Cref{ass:stat,ass:exp_deg}, together with the least squares structure of the subproblems, imply that the output $(\widehat{\mathcal{W}}, \widehat{\mathcal{S}}, \widehat{\Phi})$ from \autoref{algo:iter} satisfies the following: 
  \begin{itemize}
        \item $\frac{\ell^2}{\hat{r}} \precsim \|\widehat{\Phi}\|_F^2 \precsim \frac{\ell^2}{r}$;
        \item $\| \widehat{W}_m \widehat{\Phi} \|^2 = O(1)$, $\| \widehat{S}_m \|^2 = O(1)$;
        \item $\min_j \frac{1}{M} \sum_{m=1}^M \|e_j' \widehat{W}_m \widehat{\Phi}\|^2 = O(1)$.
    \end{itemize}
    Consequently, when evaluated at the true parameters $(\mathcal{W}^*, \mathcal{S}^*, \Phi^*)$ and the algorithm output $(\widehat{\mathcal{W}}, \widehat{\mathcal{S}}, \widehat{\Phi})$, the constants in \autoref{ass:lip_hes} can be tightened to satisfy:
    \begin{itemize}
        \item $\frac{\beta_W}{\beta} = \min\left\{ \min_j \|e_j' \widehat{\Phi}\|^2, \min_j \|e_j' \Phi^*\|^2 \right\} \succsim \frac{\ell^2 }{\hat{r} p}$;
        \item $\frac{\beta_\Phi}{\beta} = \min \left\{ \min_j \sum_{m=1}^M \|e_j' \widehat{W}_m\|^2, \min_j \sum_{m=1}^M \|e_j' W_m^*\|^2 \right\} \succsim \frac{M r p}{\ell^2}$;
        \item $\widehat{W} \precsim \frac{\sqrt{\hat{r} p}}{\ell}$.
    \end{itemize}
\end{claim}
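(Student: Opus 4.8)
The plan is to prove the three descriptive bullets first and then read off the three ``consequently'' bullets, which become essentially linear algebra once the structural constraints $\widehat\Phi,\Phi^*\in\mathbb{B}_p\cap\mathbb{L}_p(\cdot,\ell)$ are in hand. For the Frobenius-norm control of $\widehat\Phi$: writing the SVD of $\widehat\Phi$ and using $\rank(\widehat\Phi)\le\hat r$ together with $\|\widehat\Phi\|_*=\ell$, Cauchy--Schwarz on the singular values gives $\ell^2=(\sum_i\sigma_i)^2\le\hat r\sum_i\sigma_i^2=\hat r\|\widehat\Phi\|_F^2$, hence $\|\widehat\Phi\|_F^2\ge\ell^2/\hat r$; for the upper bound I would invoke (B1) of \autoref{prop:upper}, namely $\|\widehat\Phi\|_F^2\le p^2\|\widehat\Phi\|_\infty^2\le p^2\cdot\phi^2\ell^2/(rp^2)=\phi^2\ell^2/r$. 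The same two inequalities applied to $\Phi^*$ (which has rank exactly $r$) give $\ell^2/r\le\|\Phi^*\|_F^2\le\phi^2\ell^2/r$, a fact I will reuse repeatedly below.

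For the spectral bounds $\|\widehat W_m\widehat\Phi\|=O(1)$, $\|\widehat S_m\|=O(1)$ and the non-degeneracy $\min_j\frac1M\sum_m\|e_j'\widehat W_m\widehat\Phi\|^2\asymp1$, I would use that by construction \autoref{algo:iter} is run from an initialization near $(\mathcal W^*,\mathcal S^*,\Phi^*)$ and that \autoref{prop:suff_desc} makes the augmented Lagrangian $G$ non-increasing along the iterates while feasibility $\widehat\Phi=\widehat\Phi_c$ holds at the limiting stationary point, so that $f(\widehat{\mathcal W},\widehat{\mathcal S},\widehat\Phi)+\eta\sum_m\|\widehat S_m\|_1\le G^{(0)}$ with $G^{(0)}$ of the same order as $f(\mathcal W^*,\mathcal S^*,\Phi^*)+\eta\sum_m\|S_m^*\|_1$. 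Combining this with \autoref{ass:rsc} (to convert control of the residuals into control of $\widehat A_m=\widehat W_m\widehat\Phi+\widehat S_m$), the standing restriction to stationary points in a neighborhood of the truth, and the rank--sparsity incoherence of \autoref{ass:id} (which prevents the two estimated summands from inflating through mutual cancellation), one recovers for the estimates the same $O(1)$ spectral bounds that \autoref{ass:exp_deg} imposes on $W_m^*\Phi^*$ and $S_m^*$; the same neighborhood argument transfers \eqref{eq:wphi_nonsing} from $W_m^*\Phi^*$ to $\widehat W_m\widehat\Phi$, yielding the claimed two-sided bound.

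For the tightened constants I would use the row decomposition $W_m\Phi=\diag(w_m)\Phi$: the Hessian of $F$ in $w_{m,j}$ equals $(e_j'\Phi)\tfrac{X_mX_m'}{T}(e_j'\Phi)'\ge\beta\|e_j'\Phi\|^2$, and the Hessian of $F$ in $\Phi$ along $\Delta$ equals $\sum_m\sum_j w_{m,j}^2(e_j'\Delta)\tfrac{X_mX_m'}{T}(e_j'\Delta)'\ge\beta\bigl(\min_j\sum_m w_{m,j}^2\bigr)\|\Delta\|_F^2$; evaluating at $\widehat\Phi,\Phi^*$ and $\widehat{\mathcal W},\mathcal W^*$ gives exactly the displayed identities for $\beta_W/\beta$ and $\beta_\Phi/\beta$. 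Since $\widehat\Phi,\Phi^*\in\mathbb{B}_p$ have rows of common norm, $\|e_j'\widehat\Phi\|^2=\|\widehat\Phi\|_F^2/p\succsim\ell^2/(\hat rp)$ and $\|e_j'\Phi^*\|^2=\|\Phi^*\|_F^2/p\ge\ell^2/(rp)$, whence $\beta_W/\beta\succsim\ell^2/(\hat rp)$. For $\beta_\Phi/\beta$, substituting $e_j'W_m^*\Phi^*=w_{m,j}^*e_j'\Phi^*$ into \eqref{eq:wphi_nonsing} and using the upper bound $\|e_j'\Phi^*\|^2\le\phi^2\ell^2/(rp)$ gives $\min_j\sum_m(w_{m,j}^*)^2\succsim Mrp/\ell^2$, and the identical computation with the lower bound from the non-degeneracy bullet and $\|e_j'\widehat\Phi\|^2\le\phi^2\ell^2/(rp)$ handles $\widehat{\mathcal W}$. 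Finally, for $\widehat W$: with $j_0=\argmax_j|w_{m,j}|$ one has $\|W_m\Phi\|\ge\|e_{j_0}'W_m\Phi\|=\|W_m\|\cdot\|\Phi\|_F/\sqrt p$, so $\|W_m\|\le\|W_m\Phi\|\,\sqrt p/\|\Phi\|_F$; applying this to $W_m^*\Phi^*$ (with $\|\Phi^*\|_F\succsim\ell/\sqrt r$ and $\|W_m^*\Phi^*\|=O(1)$) and to $\widehat W_m\widehat\Phi$ (with $\|\widehat\Phi\|_F\succsim\ell/\sqrt{\hat r}$ and $\|\widehat W_m\widehat\Phi\|=O(1)$), and taking the maximum, yields $\widehat W\precsim\sqrt{\hat rp}/\ell$.

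The hard part will be the second group of bullets: upgrading ``the objective at the stationary point is no worse than at the near-truth initialization'' into genuine spectral-norm and non-degeneracy bounds for the \emph{individual} summands $\widehat W_m\widehat\Phi$ and $\widehat S_m$, rather than for $\widehat A_m$ alone. This is precisely where \autoref{ass:rsc} (passing from residual to parameter), the a priori localization of the stationary point near $(\mathcal W^*,\mathcal S^*,\Phi^*)$, and the incoherence of \autoref{ass:id} (ruling out cancellation between the low-rank and sparse parts) must be used in concert; everything else, including the Frobenius bounds and the Hessian computations above, is routine bookkeeping with the constraint sets $\mathbb{B}_p$ and $\mathbb{L}_p(\hat r,\ell)$.
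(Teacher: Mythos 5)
Your proposal is correct and follows essentially the same route as the paper, which itself only sketches this claim: the first group of bullets from the constraint-set algebra ($\|\Phi\|_*=\ell$, $\rank\le\hat r$, the row-balance of $\mathbb{B}_p$, and the $\infty$-norm bound (B1)) together with the least-squares structure and the localization of the stationary point near the truth, and the last group by expanding the second-order partial derivatives of $F$ in $w_{m,j}$ and $\Phi$ exactly as you do, including the same argument for $\widehat{W}\precsim\sqrt{\hat r p}/\ell$ via a single row of $W_m\Phi$. The one deviation worth flagging is your appeal to \autoref{ass:id}: it is not among the hypotheses of \autoref{prop:upper} or of the claim, and the paper does not use it — the separate $O(1)$ bounds on $\widehat{W}_m\widehat{\Phi}$ and $\widehat{S}_m$ and the transfer of \eqref{eq:wphi_nonsing} to the estimates are obtained purely from the restriction to stationary points reached from a neighborhood of $(\mathcal{W}^*,\mathcal{S}^*,\Phi^*)$ together with \autoref{ass:exp_deg}, so you can (and should) drop the incoherence assumption and rely on the localization argument you already invoke.
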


We provide some further justification for \autoref{claim:rates}. The first three bullet points follow directly from the model structure and the least squares formulation of the associated regression subproblems. The latter three can be verified by explicitly expanding the corresponding second-order partial derivatives of the objective function $F$.

Since \autoref{algo:iter} is initialized within a neighborhood of the true parameters $(\mathcal{W}^*, \mathcal{S}^*, \Phi^*)$, the sufficient descent argument from \autoref{prop:suff_desc}, combined with the local smoothness conditions in \autoref{ass:lip_hes}, ensures that the final output$(\widehat{\mathcal{W}}, \widehat{\mathcal{S}}, \widehat{\Phi})$ satisfies
$$
F(\widehat{\mathcal{W}}, \widehat{\mathcal{S}}, \widehat{\Phi}) \le F(\mathcal{W}^*, \mathcal{S}^*, \Phi^*).
$$
Therefore, we obtain
$$
\sum_{m=1}^M \frac{1}{2T} \|Y_m - (\widehat{W}_m \widehat{\Phi} + \widehat{S}_m ) X_m \|_F^2 + \eta \|\widehat{S}_m\|_1 \le \sum_{m=1}^M \frac{1}{2T} \|Y_m - ( W_m^* \Phi^* + S_m^* ) X_m \|_F^2 + \eta \|S_m^*\|_1,
$$
The model implies that $Y_m - (W_m^* \Phi^* + S_m^*) X_m = \varepsilon_m$. Introducing the notation $\Delta_\Phi = \widehat{\Phi} - \Phi^*$, $\Delta_{W_m} = \widehat{W}_m - W_m^*$, and $\Delta_{S_m} = \widehat{S}_m - S_m^*$, we consider pairs of decomposable subspaces $(\mathcal{Q}_m, \mathcal{Q}^\perp_m)$ satisfying$\|S_m\|_1 = \|S_m^{\mathcal{Q}_m}\|_1 + \|S_m^{\mathcal{Q}^\perp_m}\|_1$ and $\mathcal{Q}_m \bigcup \mathcal{Q}^\perp_m = \mathbb{R}^{p \times p}$ for all $m = 1, \dots, M$. Then
$$
\begin{aligned}
    & \sum_{m=1}^M \frac{1}{2T} \|(\Delta_{W_m} \widehat{\Phi} + W_m^* \Delta_\Phi + \Delta_{S_m}) X_m\|_F^2\\
    & \le \sum_{m=1}^M \langle \Delta_{W_m} \widehat{\Phi} + W_m^* \Delta_\Phi + \Delta_{S_m}, - \frac{\varepsilon_m X_m'}{T} \rangle + \sum_{m=1}^M \eta (\|S^*_m\|_1 - \|\widehat{S}_m\|_1)\\
    & \le \sum_{m=1}^M \langle \Delta_{W_m} \widehat{\Phi} + W_m^* \Delta_\Phi + \Delta_{S_m}, - \frac{\varepsilon_m X_m'}{T} \rangle + \sum_{m=1}^M \big(\eta \|\Delta_{S_m}^{\mathcal{Q}_m}\|_1 - \eta \|\Delta_{S_m}^{\mathcal{Q}^\perp_m}\|_1 + 2 \eta \|(S_m^*)^{\mathcal{Q}^\perp_m}\|_1 \big)
\end{aligned}
$$
Note that
$$
\begin{aligned}
    & \sum_{m=1}^M \langle \Delta_{W_m} \widehat{\Phi} + W_m^* \Delta_\Phi + \Delta_{S_m}, - \frac{\varepsilon_m X_m'}{T} \rangle\\
    & \le \sum_{m=1}^M \|\Delta_{W_m} \widehat{\Phi}\|_* \|\frac{\varepsilon_m X_m'}{T}\|_2 + \|\Delta_\Phi\|_* \|\sum_{m=1}^M \frac{W_m^* \varepsilon_m X_m'}{T}\|_2 + \sum_{m=1}^M \|\Delta_{S_m}\|_1 \|\frac{\varepsilon_m X_m'}{T}\|_{\infty}\\
    & \le W \|\Delta_\Phi\|_* \| \sum_{m=1}^M \frac{W_m^\dag \varepsilon_m X_m'}{T}\|_2 + \sum_{m=1}^M \frac{\phi \ell}{\sqrt{r} p} \|\Delta_{W_m}\|_* \|\frac{\varepsilon_m X_m'}{T}\|_2 + \sum_{m=1}^M (\|\Delta_{S_m}^\mathcal{Q}\|_1 + \|\Delta_{S_m}^{\mathcal{Q}^\perp}\|_1) \|\frac{\varepsilon_m X_m'}{T}\|_{\infty}
\end{aligned}
$$
Then, combining the above two inequalities yields
$$
\begin{aligned}
    \sum_{m=1}^M \frac{1}{2T} & \|(\Delta_{W_m} \widehat{\Phi} + W_m^* \Delta_\Phi + \Delta_{S_m}) X_m\|_F^2\\
    & \le W \|\Delta_\Phi\|_* \|\sum_{m=1}^M \frac{W_m^\dag \varepsilon_m X_m'}{T}\|_2 + \frac{\phi \ell}{\sqrt{r p}} \sum_{m=1}^M \|\Delta_{W_m}\|_F \|\frac{\varepsilon_m X_m'}{T}\|_2\\
    & \quad + \sum_{m=1}^M (\|\Delta_{S_m}^{\mathcal{Q}_m}\|_1 + \|\Delta_{S_m}^{\mathcal{Q}^\perp_m}\|_1) \|\frac{\varepsilon_m X_m'}{T}\|_\infty + \sum_{m=1}^M \big( \eta \|\Delta_{S_m}^{\mathcal{Q}_m}\|_1 - \eta \|\Delta_{S_m}^{\mathcal{Q}^\perp_m}\|_1 + 2 \eta \|(S_m^*)^{\mathcal{Q}^\perp_m}\|_1 \big)\\
    & \le W \|\Delta_\Phi\|_* \|\sum_{m=1}^M \frac{W_m^\dag \varepsilon_m X_m'}{T}\|_2 + \frac{\phi \ell}{\sqrt{r p}} \sum_{m=1}^M \|\Delta_{W_m}\|_F \| \frac{\varepsilon_m X_m'}{T}\|_2\\
    & \quad + \sum_{m=1}^M \left[(\|\frac{\varepsilon_m X_m'}{T}\|_\infty + \eta) \|\Delta_{S_m}^{\mathcal{Q}_m}\|_1 + (\|\frac{\varepsilon_m X_m'}{T}\|_\infty - \eta) \|\Delta_{S_m}^{\mathcal{Q}^\perp_m}\|_1 \right]
\end{aligned}
$$
The last inequality follows from choosing $\mathcal{Q}_m$ as the support of the true sparse $S_m^*$. Under this choice, and by applying the restricted strong convexity condition stated in \autoref{ass:rsc}, we obtain
$$
\noindent
\begin{aligned}
    & \sum_{m=1}^M \frac{1}{2 T} \|(\Delta_{W_m} \widehat{\Phi} + W_m^* \Delta_\Phi + \Delta_{S_m}) X_m\|_F^2\\
    & \quad \ge \frac{\beta}{2}\sum_{m=1}^M \|\Delta_{W_m} \widehat{\Phi} + W_m^* \Delta_\Phi + \Delta_{S_m}\|_F^2\\
    & \quad \ge \frac{\beta_\Phi}{2} \|\Delta_\Phi\|_F^2 + \sum_{m=1}^M (\frac{\beta_W}{2} \|\Delta_{W_m}\|_F^2 + \frac{\beta}{2}\|\Delta_{S_m}\|_F^2 - \beta \|W_m^* \Delta_{W_m}\|_2 \|\widehat{\Phi}\|_\infty \|\Delta_\Phi\|_* \\
    & \quad \qquad \qquad - \beta W \|\Delta_\Phi\|_\infty \|\Delta_{S_m}\|_1 - 2 \beta \widehat{W} \|\widehat{\Phi}\|_\infty \|\Delta_{S_m}\|_1)\\
    & \quad \ge \frac{\beta_\Phi}{2} \|\Delta_\Phi\|_F^2 - \frac{2 \beta \phi W \widehat{W} M \ell}{\sqrt{r} p} \|\Delta_{\Phi}\|_* + \sum_{m=1}^M (\frac{\beta_W}{2} \|\Delta_{W_m}\|_F^2 + \frac{\beta}{2} \|\Delta_{S_m}\|_F^2 - \frac{4 \beta \phi \widehat{W} \ell}{\sqrt{r} p} \|\Delta_{S_m}\|_1)
\end{aligned}
$$

Hence, we can find $\zeta = \min\{ \frac{\beta_\Phi}{M \beta}, \frac{\beta_W}{\beta}, 1 \} \succsim \min \left\{ \frac{r p}{\ell^2}, \frac{\ell^2}{\hat{r} p}, 1 \right\}$, and $\eta \ge \frac{4 \beta \phi \widehat{W} \ell}{\sqrt{r} p} + \|\frac{\varepsilon_m X_m'}{T}\|_\infty$ for all $m$,
$$
\begin{aligned}
    & \frac{\beta \zeta}{2} \big( \|\Delta_\Phi\|_F^2 + \frac{1}{M} \sum_{m=1}^M (\|\Delta_{W_m}\|_F^2 + \|\Delta_{S_m}\|_F^2) \big)\\
    & \quad \le \frac{\beta_\Phi}{2 M} \|\Delta_\Phi\|_F^2 + \frac{1}{2 M} \sum_{m=1}^M (\beta_W \|\Delta_{W_m}\|_F^2 + \beta \|\Delta_{S_m}\|_F^2) \\
    & \quad \le \|\Delta_\Phi\|_* \big( W \|\sum_{m=1}^M \frac{W_m^\dag \varepsilon_m X_m'}{M T}\|_2 + \frac{2 \beta \phi W \widehat{W} \ell}{\sqrt{r} p} \big) + \frac{1}{M} \sum_{m=1}^M \frac{\phi \ell}{\sqrt{r p}} \|\Delta_{W_m}\|_F \| \frac{\varepsilon_m X_m'}{T}\|_2\\
    & \qquad \qquad + \frac{1}{M} \sum_{m=1}^M \Big(( \frac{4 \beta \phi \widehat{W} \ell}{\sqrt{r} p} + \|\frac{\varepsilon_m X_m'}{T}\|_\infty + \eta) \|\Delta_{S_m}^{\mathcal{Q}_m}\|_1\\
    & \qquad \qquad \qquad \qquad \quad + ( \frac{4 \beta \phi \widehat{W} \ell}{\sqrt{r} p} + \|\frac{\varepsilon_m X_m'}{T}\|_\infty - \eta) \|\Delta_{S_m}^{\mathcal{Q}^\perp_m}\|_1\Big)\\
    & \quad \le \frac{\phi \ell}{M \sqrt{r p}} \sum_{m=1}^M \|\Delta_{W_m}\|_F \| \frac{\varepsilon_m X_m'}{T}\|_2 + \frac{2 \eta \sqrt{s}}{M} \sum_{m=1}^M \|\Delta_{S_m}\|_F\\
    & \qquad \qquad + \|\Delta_\Phi\|_F \big( W \sqrt{2 \hat{r}} \|\sum_{m=1}^M \frac{W_m^\dag \varepsilon_m X_m'}{MT}\|_2 + \frac{2 \sqrt{2} \beta \phi W \widehat{W} \ell \sqrt{\hat{r}}}{\sqrt{r} p} \big)\\
    & \quad \le \sqrt{ \frac{1}{M} \sum_{m=1}^M \Big( \frac{\phi^2 \ell^2}{r p} \| \frac{\varepsilon_m X_m'}{T}\|_2^2 + 4 \eta^2 s + 4 W^2 \hat{r} \|\sum_{m=1}^M \frac{W_m^\dag \varepsilon_m X_m'}{MT}\|_2^2 + \frac{16 \beta^2 \phi^2 W^2 \widehat{W}^2 \ell^2 \hat{r}}{r p^2}\Big)}\\
    & \qquad \qquad \times \sqrt{ \|\Delta_\Phi\|_F^2 + \frac{1}{M} \sum_{m=1}^M (\|\Delta_{S_m}\|_F^2 + \|\Delta_{W_m}\|_F^2) }.
\end{aligned}
$$
Therefore,
\begin{multline*}
	\frac{\beta^2 \zeta^2}{4} \big( \|\Delta_\Phi\|_F^2 + \frac{1}{M} \sum_{m=1}^M (\|\Delta_{S_m}\|_F^2 + \|\Delta_{W_m}\|_F^2) \big)\\
	\le \frac{1}{M} \sum_{m=1}^M \Big( \frac{\phi^2 \ell^2}{r p} \| \frac{\varepsilon_m X_m'}{T}\|_2^2 + 4 \eta^2 s + \frac{16 \beta^2 \phi^2 W^2 \widehat{W}^2 \ell^2 \hat{r}}{r p^2} \Big) + 4 W^2 \hat{r} \|\sum_{m=1}^M \frac{W_m^\dag \varepsilon_m X_m'}{M T}\|_2^2.
\end{multline*}
Hence, when we select $\eta = \frac{4 \beta \phi \widehat{W} \ell}{\sqrt{r} p}  + \max_m \|\frac{\varepsilon_m X_m'}{T}\|_\infty$, we obtain
$$
\begin{aligned}
    & \|\Delta_\Phi\|_F^2 + \frac{1}{M} \sum_{m=1}^M (\|\Delta_{S_m}\|_F^2 + \|\Delta_{W_m}\|_F^2)\\
    & \quad \le \frac{4}{\zeta^2} \bigg( 16 \phi^2 \widehat{W}^2 \ell^2  \frac{W^2 \hat{r} + 8 s}{r p^2} + \frac{8 s}{\beta^2} \max_m \|\frac{\varepsilon_m X_m'}{T}\|_\infty^2\\
    & \qquad \qquad + \frac{\phi^2 \ell^2}{\beta^2 M r p} \sum_{m=1}^M \| \frac{\varepsilon_m X_m'}{T}\|_2^2 + \frac{4 W^2 \hat{r}}{\beta^2} \| \sum_{m=1}^M \frac{W_m^\dag \varepsilon_m X_m'}{MT}\|_2^2 \bigg)\\
    & \quad \precsim \frac{1}{\zeta^2} \left( \frac{\hat{r}^2}{\ell^2} + \frac{\hat{r} s}{r p} + \frac{s}{\beta^2} \max_m \|\frac{\varepsilon_m X_m'}{T}\|_\infty^2 + \frac{\ell^2}{\beta^2 M r p} \sum_{m=1}^M \| \frac{\varepsilon_m X_m'}{T}\|_2^2 + \frac{r \hat{r} p}{\beta^2 \ell^2} \| \sum_{m=1}^M \frac{W_m^\dag \varepsilon_m X_m'}{MT}\|_2^2 \right).
\end{aligned}
$$
Note that the last inequality uses the bounds of $W$ and $\widehat{W}$ from \autoref{claim:rates}.
\end{proof}

\begin{proof}[Proof of \autoref{prop:deviation}]
   The proofs for the first and last statements closely follow the arguments in \citet{basu2015,basu2019} and are therefore omitted. We focus here on proving the second statement, for which we begin with the following lemma.
   
    \begin{lem}\label{lem:bound}
        For $M$ arbitrary stationary centered Gaussian time series $\{H_m \in \mathbb{R}^{p \times T}\}_{m=1}^M$ that are mutually independent, and an arbitrary unit vector $v \in \mathbb{R}^p$ that $\|v\| = 1$, there exists constant $c > 0$, such that for any $k > 0$,
        $$
        P(|v' (\sum_{m=1}^M \frac{H_m H_m'}{MT} - \sum_{m=1}^M \frac{\Gamma_{H_m}(0)}{M}) v| > 2 \pi k \max_m \Psi(h_{H_m})) \le 2\exp(- c M T \min\{k^2, k\}).
        $$
    \end{lem}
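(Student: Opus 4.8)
The plan is to reduce the scalar quadratic form to a sum of $MT$ independent centered sub-exponential random variables and then apply a Bernstein-type tail bound; this is the multi-series analogue of the single-series concentration result of \citet{basu2015}. First I would fix the unit vector $v$ and, for each $m$, set $Z_m = (v'H_m[\cdot,1],\dots,v'H_m[\cdot,T])' \in \R^T$, which is a centered Gaussian vector since each $H_m$ is a centered stationary Gaussian process. Its covariance is the $T\times T$ Toeplitz matrix $\mathcal{G}_m$ with entries $(\mathcal{G}_m)_{s,t} = v'\Gamma_{H_m}(s-t)v$; every diagonal entry equals $v'\Gamma_{H_m}(0)v$, so $\tr\mathcal{G}_m = T\,v'\Gamma_{H_m}(0)v$. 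Since $v'H_mH_m'v = \|Z_m\|^2$, the centered quantity of interest can be written as
$$
v'\Big(\sum_{m=1}^M\tfrac{H_mH_m'}{MT} - \sum_{m=1}^M \tfrac{\Gamma_{H_m}(0)}{M}\Big)v = \frac{1}{MT}\sum_{m=1}^M\big(\|Z_m\|^2 - \tr\mathcal{G}_m\big).
$$

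Next I would bound the spectral norm of each $\mathcal{G}_m$ via the standard relationship between a Toeplitz covariance matrix and the spectral density (as used in \citealp{basu2015}): $\lambda_1(\mathcal{G}_m) \le 2\pi\sup_{\vartheta} v'h_{H_m}(\vartheta)v \le 2\pi\sup_\vartheta\lambda_1(h_{H_m}(\vartheta)) = 2\pi\Psi(h_{H_m}) \le 2\pi\Psi_{\max}$, where $\Psi_{\max} := \max_m\Psi(h_{H_m})$. Writing $Z_m = \mathcal{G}_m^{1/2}g_m$ with $g_m\sim N(0,I_T)$ mutually independent and diagonalizing $\mathcal{G}_m = Q_m\Lambda_m Q_m'$, one gets $\|Z_m\|^2 - \tr\mathcal{G}_m = \sum_{j=1}^T\lambda_j(\mathcal{G}_m)(\tilde g_{mj}^2 - 1)$, where $\{\tilde g_{mj}\}_{m\le M, j\le T}$ is a family of $MT$ i.i.d.\ standard normals. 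Hence the left-hand side equals $\frac{1}{MT}\sum_{i=1}^{MT}a_i(\tilde g_i^2 - 1)$ for coefficients $a_i \in [0,\,2\pi\Psi_{\max}]$.

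Each term $a_i(\tilde g_i^2-1)$ is a centered sub-exponential variable with $\psi_1$-norm of order $a_i \le 2\pi\Psi_{\max}$, and the terms are independent. Applying the Bernstein inequality for sums of independent centered sub-exponential random variables \citep[e.g.][]{wainwright2019high}, for every $t>0$,
$$
P\Big(\Big|\sum_{i=1}^{MT}a_i(\tilde g_i^2 - 1)\Big| > t\Big) \le 2\exp\Big(-c\min\Big\{\tfrac{t^2}{MT(2\pi\Psi_{\max})^2},\ \tfrac{t}{2\pi\Psi_{\max}}\Big\}\Big).
$$
Choosing $t = 2\pi k\,\Psi_{\max}\,MT$ turns the event inside the probability into $\{|\frac{1}{MT}\sum_i a_i(\tilde g_i^2-1)| > 2\pi k\Psi_{\max}\}$ — exactly the event in the statement — and simplifies the exponent to $-c\,MT\min\{k^2,k\}$, which gives the claimed bound.

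The only step that is not purely mechanical is the eigenvalue estimate $\lambda_1(\mathcal{G}_m)\le 2\pi\Psi(h_{H_m})$ linking the finite-sample Toeplitz covariance to the spectral-density supremum; everything else is the Hanson--Wright/Bernstein machinery applied blockwise over the $M$ independent series. I would also be careful to note that the sub-exponential constant is genuinely uniform in $m$, which holds precisely because it is controlled by $\Psi_{\max}$ rather than by the individual $\Psi(h_{H_m})$, and that mutual independence of the $H_m$ is what allows the coefficients from all $M$ blocks to be pooled into a single sum of $MT$ independent terms.
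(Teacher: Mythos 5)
Your proposal is correct and takes essentially the same route as the paper: the paper's proof also stacks the projected series $(v'H_1,\dots,v'H_M)'$ into an $MT$-dimensional Gaussian vector with block-diagonal Toeplitz covariance, bounds its spectral norm by $2\pi\max_m\Psi(h_{H_m})$ via the spectral-density relation, and invokes the Hanson--Wright/sub-exponential concentration machinery of \citet{basu2015}. You simply spell out explicitly (diagonalization into $MT$ independent $\chi^2$-type terms and Bernstein's inequality) the step the paper delegates to the cited reference, and your choice $t = 2\pi k\Psi_{\max}MT$ yields exactly the stated exponent $-cMT\min\{k^2,k\}$.
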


    The proof of the lemma also follows based on arguments in \cite{basu2015}. Specifically, note that $(v' H_1, \cdots, v' H_M)' \sim N(0, Q_H)$ with
    $$
    Q_H = \blkdiag(\Upsilon_1, \dots, \Upsilon_M), \qquad \Upsilon_m[r,s] = v' \Gamma_{H_m}(r-s) v,
    $$
    and $\|Q_H\| = \max_m \|\Upsilon_m\|$ due to its block-diagonal structure.

    Next, by denoting $\bar\varepsilon_m = W_m^\dag \varepsilon_m$, we get
    $$
    \begin{aligned}
        \sum_{m=1}^M \frac{2 v' \bar \varepsilon_m X_m' v}{MT} & = [\sum_{m=1}^M \frac{(X_m' v + \bar\varepsilon_m' v)' (X_m' v + \bar\varepsilon_m' v)}{MT} - \sum_{m=1}^M \frac{v' (\Gamma_{X_m}(0) + W_m^\dag \Sigma_m W_m^\dag) v}{M}]\\
        & \qquad - [\sum_{m=1}^M \frac{v' X_m X_m' v}{MT} - \sum_{m=1}^M \frac{v' \Gamma_{X_m}(0) v}{M}] - [\sum_{m=1}^M \frac{v' \bar\varepsilon_m \bar\varepsilon_m' v}{MT} - \sum_{m=1}^M \frac{v' W_m^\dag \Sigma_m W_m^\dag v}{M}]
    \end{aligned}
    $$

    Applying \autoref{lem:bound} to the collections $\{X_m\}_{m=1}^M$, $\{\bar\varepsilon_m\}_{m=1}^M$ and $\{X_m + \bar\varepsilon_m\}_{m=1}^M$, and using the fact that
    $$
    \Psi(h_{X_m}) \le \frac{\lambda_1(\Sigma_m)}{\tau_{\min}(\mathcal{A}_m)}, \quad \Psi(h_{\bar\varepsilon_m}) \le \lambda_1(\Sigma_m), \quad \Psi(h_{X_m + \bar\varepsilon_m}) \le \frac{\lambda_1(\Sigma_m) \tau_{\max}(\mathcal{A}_m)}{\tau_{\min}(\mathcal{A}_m)},
    $$
    the conclusion follows by arguments analogous to those in \citet{basu2019}.
\end{proof}

\begin{proof}[Proof of \autoref{cor:error}]
With similar arguments to those used in the proof of \autoref{prop:upper}, define $\Delta_m = \widehat{W}_m \widehat{\Phi} - W_m^* \Phi^*$. Then, we have
\begin{multline*}
	\frac{1}{2T} \sum_{m=1}^M \| (\Delta_m + \Delta_{S_m}) X_m\|_F^2 \\
	\le \sum_{m=1}^M \|\Delta_m\|_* \|\frac{\varepsilon_m X_m'}{T}\|_2 + \sum_{m=1}^M \left[(\|\frac{\varepsilon_m X_m'}{T}\|_\infty + \eta) \|\Delta_{S_m}^{\mathcal{Q}_m}\|_1 + (\|\frac{\varepsilon_m X_m'}{T}\|_\infty - \eta) \|\Delta_{S_m}^{\mathcal{Q}^\perp_m}\|_1 \right].
\end{multline*}
and with the assumption implying that $\|\Delta_m\|_\infty \le \frac{2 \phi \widehat{W}\ell}{\sqrt{r} p}$,
$$
\frac{1}{2T} \sum_{m=1}^M \| (\Delta_m + \Delta_{S_m}) X_m\|_F^2 \ge \frac{\beta}{2} \sum_{m=1}^M (\|\Delta_m\|_F^2 + \|\Delta_{S_m}\|_F^2 - \frac{4 \phi \widehat{W} \ell}{\sqrt{r} p} \|\Delta_{S_m}\|_1).
$$
Hence, selecting $\eta$ as $\eta \ge \frac{2 \beta \phi \widehat{W} \ell}{\sqrt{r} p} + \max_m \|\frac{\varepsilon_m X_m'}{T}\|_\infty$, we get
$$
\frac{1}{M} \sum_{m=1}^M (\|\Delta_m\|_F^2 + \|\Delta_{S_m}\|_F^2) \le \frac{2}{\beta M} \sum_{m=1}^M (\|\Delta_m\|_* \|\frac{\varepsilon_m X_m'}{T}\|_2 + 2 \eta \|\Delta_{S_m}^{\mathcal{Q}_m}\|_1),
$$
and therefore
$$
\begin{aligned}
    \frac{1}{M} \sum_{m=1}^M (\|\Delta_m\|_F^2 + \|\Delta_{S_m}\|_F^2) & \le \frac{4}{\beta^2 M} \sum_{m=1}^M (2 \hat{r} \|\frac{\varepsilon_m X_m'}{T}\|_2^2 + 4 \eta^2 s)\\
    & \le \frac{4}{\beta^2 M} \sum_{m=1}^M (\frac{4 c_1^2 \xi^2 \hat{r} p}{T} + \frac{8 c_1^2 \xi^2 s \log(p)}{T}) + \frac{128 \phi^2 \widehat{W}^2 \ell^2 s}{r p^2}\\
    & \precsim \xi^2 \cdot \max_m \frac{\tau_{\max}^2(\mathcal{A}_m)}{\lambda_p^2(\Sigma_m)} \cdot \frac{s\log(p) + \hat{r} p}{T} + \frac{\iota s}{p}.
\end{aligned}
$$
\end{proof}

\section{Implementation Details}\label{sec:simu_sup}

\subsection{Data Generating Process (DGP)}\label{subsec:dgp}

Next, we introduce the data generating process (DGP) corresponding to the LSPVAR model in \eqref{eq:pVAR-model}. The parameters $(M, p, r, s)$ are specified upfront, and the time series length $T$ is chosen accordingly, guided by the consistency results discussed in \autoref{sec:cons}.

For the simulation in \autoref{ex:mixture}, we generate only 6 distinct diagonal matrices in $\mathcal{W}$, each representing one of the 6 clusters. In contrast, the 20 sparse matrices $\mathcal{S}$ are sampled independently following the procedure described in \autoref{algo:DGP}.

\begin{algorithm}[htbp]
    \KwIn{Number of models $M$, time series dimension $p$, time series length $T$, rank $r$ of matrix $\Phi$, expected percentage of non-zero elements $\frac{s}{p^2}$ in sparse matrices $\mathcal{S}$, time series length $T$, number of replicates $N$ per setup.}
    \KwOut{Model parameters $(\Phi, \mathcal{W}, \mathcal{S}, \{A_m, \Sigma_m\}_{m=1}^M)$, and the replicates of Panel time series data $\{\mathcal{X}_n(M, T, p, r, s)\}_{n=1}^N$.}
    Randomly generate two singular vector matrices $U, V \in \R^{p \times r}$ ($r \le p$), sample the entries of the diagonal matrix $D$ from an $r$-dimensional Dirichlet distribution, and generate $\Phi = U D V'$.\\
    \For{$m = 1, \dots, M$}{
    Sample $s_m$ from a Poisson distribution with mean $s$, generate a sparse matrix $S_m$ with support $\|S_m\|_0 = s_m$, and elements from centered normal distribution with standard deviation $\sqrt{p} \|\Phi\|_\infty$.\\
    Compute the eigenvalues of $\Phi + S_m$, and sample the diagonal entries of $W_m$ from the uniform distribution on $[\frac{1}{2} |\lambda_1(\Phi + S_m)|^{-1}, |\lambda_1(\Phi + S_m)|^{-1}]$.\\
    Update the sparse matrix $S_m \mapsto W_m S_m$, and define $A_m = W_m \Phi + S_m$.\\
    Randomly sample $\Sigma_m = \sigma_m^2 I_p$ as the innovation covariance matrix, where $\sigma_m^2$ are sampled from an inverse Gamma distribution.
    }
    \For{$n = 1, \dots, N$}{
    Sample the time series of length $T$ as $\mathcal{X}_n(M, p, r, s, T) = \{\{X_t^m\}_{t=1}^T\}_{m=1}^M$ according to our PVAR model setup specified by \eqref{eq:pVAR-model} and parameters $\{A_m, \Sigma_m\}_{m=1}^M$.
    }
    \caption{DGP for $\{\mathcal{X}_n(M, p, r, s, T)\}_{n=1}^N$} \label{algo:DGP}
\end{algorithm}

\subsection{Initialization}

We also provide an initialization algorithm intended to produce estimates close to the true parameters. \autoref{algo:init} is motivated by the observation that, when temporarily ignoring the sparse components, corresponding rows from individual fits tend to align consistently in the same direction, differing primarily by a scaling factor.

\begin{algorithm}[htbp]
    \KwIn{Time series data $\{(X_m, Y_m) \in \mathbb{R}^{p \times T} \times \mathbb{R}^{p \times T}\}_{m=1}^M$, maximum rank $\hat r$, fixed nuclear norm $\ell$.}
    \KwOut{Initialization $(\Phi^{(0)}, \Phi_c^{(0)})$.}
    For each entity with data pair $(X_m, Y_m)$, individually fit the VAR coefficient matrix $\Phi_m \in \mathbb{R}^{p \times p}$ .\\
    \For{$i = 1, \dots, p$}{
    Take the $i$-th rows of $\Phi_m$ ($m = 1, \dots, M$) and stack them as $\varPhi_i \in \mathbb{R}^{M \times p}$.\\
    Find the top right singular vector of $\varPhi_i$ and set it as the $i$-th row of $\Phi^{(0)}$.\\
    }
    Update $\Phi^{(0)} \mapsto \frac{\ell}{\|\Phi^{(0)}\|_*} \Phi^{(0)}$.\\
    Calculate $\Phi_c^{(0)}$ by projecting $\Phi^{(0)}$ onto $\mathbb{B}_p \cap \mathbb{L}_p(\hat r, \ell)$.\\
    \caption{Initialization for $\Phi^{(0)} = \Phi_c^{(0)}$} \label{algo:init}
\end{algorithm}

\subsection{Supplementary Figures and Tables}\label{subsec:fig_tab}

In this section, we supplement the simulation and neuroscience application results with additional visualizations, including figures and tables.

We begin with \autoref{tab:rank}, which summarizes the key statistics from our simulation study in \autoref{subsec:rank}, focusing on the effects of the input rank $\hat{r}$ and the step size $\rho$.

\begin{table}[htbp]
    \centering
    {\scriptsize
\begin{tblr}{
  column{3} = {r},
  column{4} = {r},
  column{5} = {r},
  column{6} = {r},
  column{7} = {r},
  column{8} = {r},
  column{9} = {r},
  cell{2}{1} = {r=3}{c},
  cell{5}{1} = {r=3}{c},
  cell{8}{1} = {r=3}{c},
  cell{11}{1} = {r=3}{c},
  cell{14}{1} = {r=3}{c},
  cell{17}{1} = {r=3}{c},
  vlines,
  hline{1-2,5,8,11,14,17,20} = {-}{},
  hline{3-4,6-7,9-10,12-13,15-16,18-19} = {2-9}{},
}
 & \diagbox{$\rho$}{$\hat{r}$} & 3 & 5 & 10 & 15 & 20 & 30 & 40\\
{Average Rescaled\\Shifted BIC\\$(\times 10^5)$} & $\frac{M}{20}$ & 0.200 & 0.212 & 0.228 & 0.260 & 0.281 & 0.315 & 0.322\\
 & $\frac{M}{5}$ & 0.200 & 0.212 & 0.228 & 0.260 & 0.279 & 0.315 & 0.322\\
 & $M$ & 0.200 & 0.211 & 0.229 & 0.260 & 0.279 & 0.315 & 0.322\\
{Average\\Relative\\Error of $A_m$} & $\frac{M}{20}$ & 0.263 & 0.281 & 0.285 & 0.290 & 0.299 & 0.322 & 0.322\\
 & $\frac{M}{5}$ & 0.263 & 0.281 & 0.285 & 0.289 & 0.298 & 0.322 & 0.322\\
 & $M$ & 0.262 & 0.281 & 0.285 & 0.289 & 0.298 & 0.321 & 0.322\\
{Average\\Relative\\Error of $S_m$} & $\frac{M}{20}$ & 0.245 & 0.285 & 0.350 & 0.406 & 0.451 & 0.453 & 0.455\\
 & $\frac{M}{5}$ & 0.245 & 0.285 & 0.350 & 0.406 & 0.449 & 0.453 & 0.455\\
 & $M$ & 0.245 & 0.285 & 0.350 & 0.405 & 0.449 & 0.452 & 0.455\\
{Average\\Sparsity\\Accuracy} & $\frac{M}{20}$ & 0.970 & 0.954 & 0.955 & 0.971 & 0.969 & 0.951 & 0.951\\
 & $\frac{M}{5}$ & 0.970 & 0.954 & 0.955 & 0.971 & 0.970 & 0.951 & 0.951\\
 & $M$ & 0.970 & 0.954 & 0.955 & 0.971 & 0.970 & 0.951 & 0.951\\
{Average\\Sparsity\\Sensitivity} & $\frac{M}{20}$ & 0.745 & 0.767 & 0.745 & 0.687 & 0.670 & 0.699 & 0.702\\
 & $\frac{M}{5}$ & 0.745 & 0.767 & 0.745 & 0.684 & 0.671 & 0.699 & 0.702\\
 & $M$ & 0.745 & 0.767 & 0.745 & 0.685 & 0.671 & 0.700 & 0.702\\
{Average\\Sparsity\\Specificity} & $\frac{M}{20}$ & 0.976 & 0.959 & 0.960 & 0.979 & 0.977 & 0.958 & 0.958\\
 & $\frac{M}{5}$ & 0.976 & 0.959 & 0.960 & 0.979 & 0.977 & 0.958 & 0.958\\
 & $M$ & 0.976 & 0.959 & 0.960 & 0.979 & 0.977 & 0.958 & 0.958
\end{tblr}}

    \caption{Supplement to \autoref{subsec:rank}. Summary metrics from simulations examining the choice of input rank $\hat{r}$ and ADMM step size $\rho$ }
    \label{tab:rank}
\end{table}

\autoref{fig:hist} depicts the histograms of the estimated $\|\widehat{W}_m \|^2$ under the setting specified in \autoref{ex:mixture}.
They illustrate that the entities within the singular low-rank cluster have significantly smaller magnitude norms $\|\widehat{W}_m \|_F^2$. This observation closely aligns with the model’s ground truth, which states that their Frobenius norms should be zero.

\begin{figure}[htbp]
    \centering
    \includegraphics[width=\linewidth]{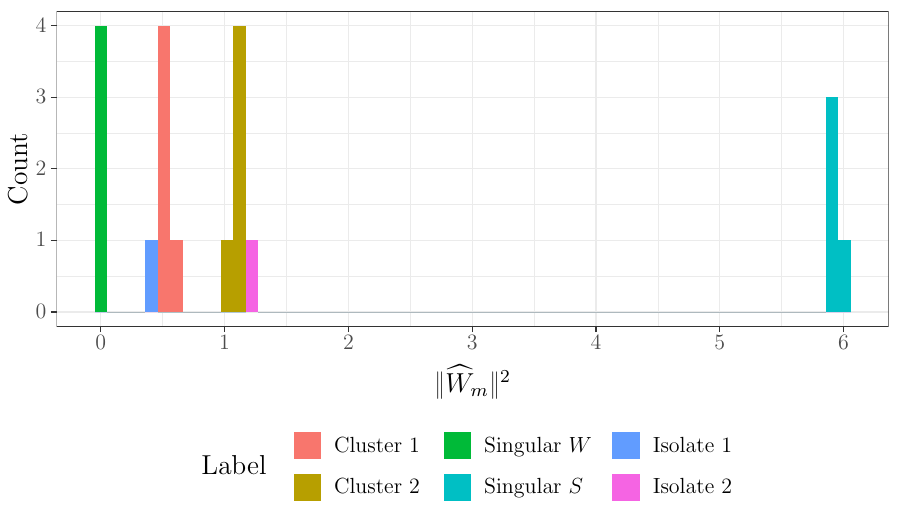}
    \caption{Supplement to \autoref{subsec:cluster}. The histogram above displays the frequencies of the estimated Frobenius norms $|\widehat{W}_m|_F^2$ from a single replicate. It demonstrates that the clustering structure can be consistently recovered in accordance with our simulation setup.}
    \label{fig:hist}
\end{figure}

Finally, for the EEG application presented in \autoref{sec:app}, we display boxplots of all rescaling effects $W$, estimated from both the raw data and the alpha band-filtered data. Similar conclusions can be drawn from \Cref{fig:W_rd,fig:W_ab}: the estimates based on the alpha band data exhibit greater variability across entities (students). In addition, aside from some potential outliers, the effects under the EO condition condition appear more stable, as indicated by shorter box lengths, and less pronounced in magnitude—especially evident in the alpha band estimates.

\begin{figure}[htbp]
    \centering
    \includegraphics[width=\linewidth]{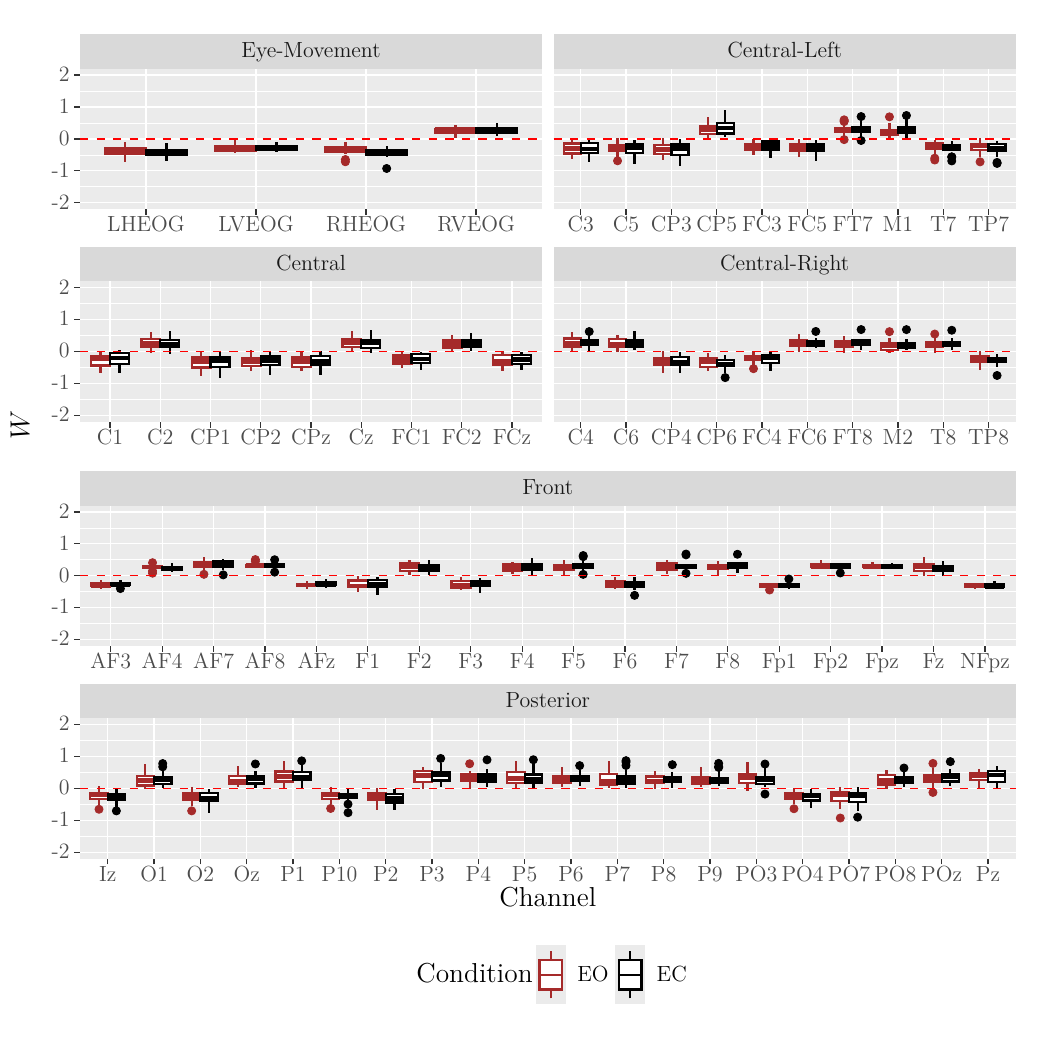}
    \caption{Boxplots of the rescaling effects for all channels estimated from the raw data.}
    \label{fig:W_rd}
\end{figure}

\begin{figure}[htbp]
    \centering
    \includegraphics[width=\linewidth]{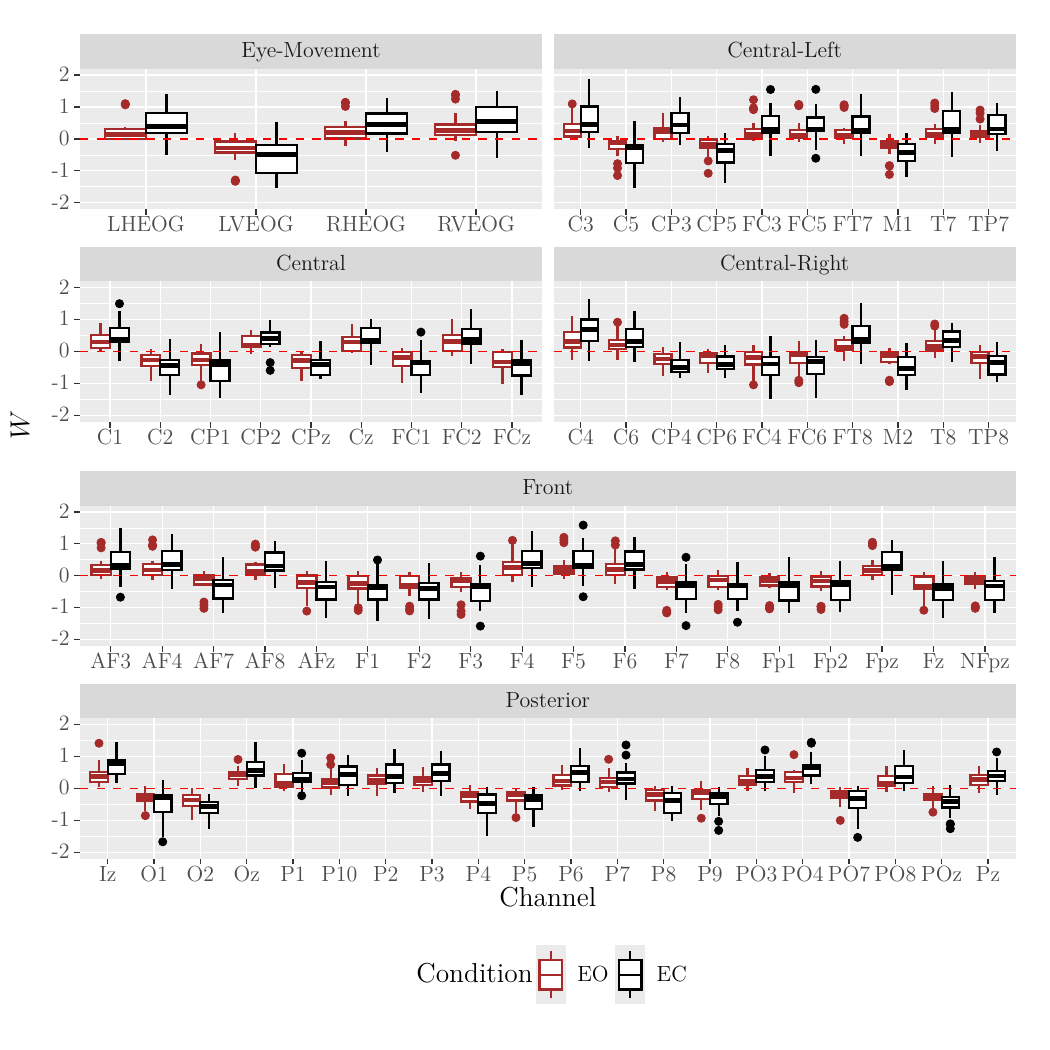}
    \caption{Boxplots of the rescaling effects for all channels estimated from the alpha bands.}
    \label{fig:W_ab}
\end{figure}

We also include a brief visualization of the estimated low-rank component $\Phi$ in \autoref{fig:Phi_hm}. We observe that the outward connections from the posterior channels are relatively stronger on the human scalp, and these connections become even more pronounced after filtering out the underlying noise signals.

 \begin{figure}[htbp]
    \centering
    \includegraphics[width=\textwidth]{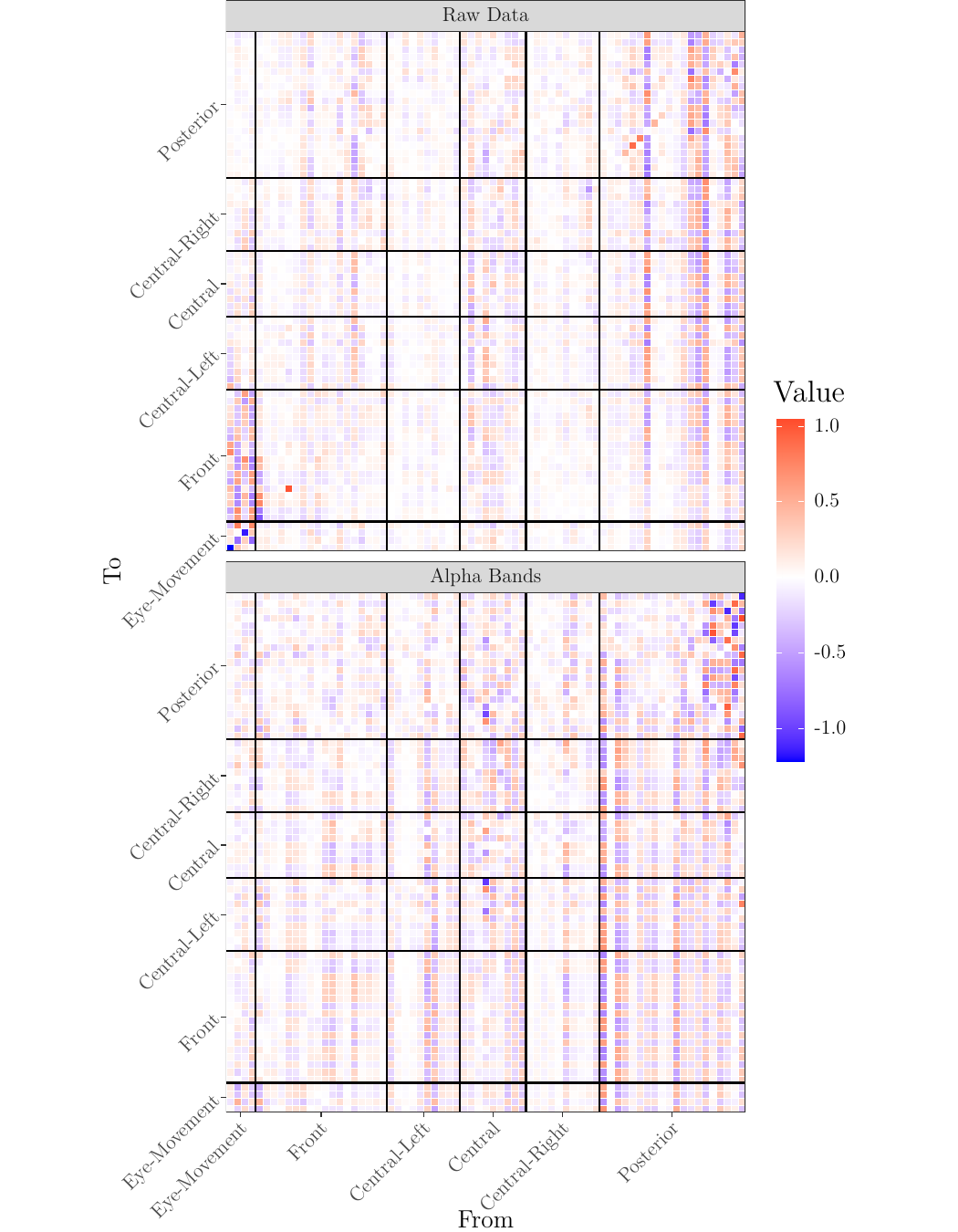}
    \vspace{-.5em}
    \caption{The heatmap for the estimated $\Phi$, with blocks illustrating the dynamics at the cluster level.}
    \label{fig:Phi_hm}
 \end{figure}

\end{appendix}

\bibliographystyle{imsart-nameyear} 
\bibliography{main}


\end{document}